\documentclass[a4paper,UKenglish,cleveref,autoref,thm-restate]{lipics-v2021}


\usepackage{todonotes}
\presetkeys{todonotes}{inline}{}

\usepackage{stackengine}
\usepackage{scalerel}
\usepackage{xspace}

\usepackage{tikz}
\usetikzlibrary{cd}
\usetikzlibrary{decorations.pathmorphing}

\usepackage{stmaryrd}
\usepackage{amsmath}
\usepackage{amsthm}
\usepackage{amssymb}
\usepackage{mathtools}
\usepackage{bm}

\usepackage{mathpartir}
\usepackage{fontawesome}
\usepackage{enumitem}

\newtheorem{construction}[theorem]{Construction}

\bibliographystyle{plainurl}

\title{Relative induction principles for type theories} 


\author{Rafaël Bocquet}{Eötvös Loránd University, Budapest, Hungary}{bocquet@inf.elte.hu}{https://orcid.org/0000-0001-6484-9570}
{The author was supported by the European Union, co-financed by the European Social Fund (EFOP-3.6.3-VEKOP-16-2017-00002).}

\author{Ambrus Kaposi}{Eötvös Loránd University, Budapest, Hungary}{akaposi@inf.elte.hu}{http://orcid.org/0000-0001-9897-8936}
{The author was supported by the ``Application Domain Specific Highly Reliable IT Solutions'' project which has been implemented with the support provided from the National Research, Development and Innovation Fund of Hungary, financed under the Thematic Excellence Programme TKP2020-NKA-06 (National Challenges Subprogramme) funding scheme.}

\author{Christian Sattler}{Chalmers University of Technology, Sweden}{sattler@chalmers.se}{https://orcid.org/0000-0001-6374-4427}{}

\authorrunning{R. Bocquet and A. Kaposi and C. Sattler} 

\Copyright{Rafaël Bocquet and Ambrus Kaposi and Christian Sattler} 

\begin{CCSXML}
<ccs2012>
  <concept>
    <concept_id>10003752.10003790.10011740</concept_id>
    <concept_desc>Theory of computation~Type theory</concept_desc>
    <concept_significance>500</concept_significance>
  </concept>
</ccs2012>
\end{CCSXML}
\ccsdesc[500]{Theory of computation~Type theory}

\keywords{type theory, dependent types, metatheory, induction, syntax, presheaves, canonicity, normalization, dependent right adjoint, multimodal type theory} 

\category{} 

\relatedversion{} 

\supplement{}


\acknowledgements{}

\nolinenumbers 

\hideLIPIcs  

\EventEditors{John Q. Open and Joan R. Access}
\EventNoEds{2}
\EventLongTitle{42nd Conference on Very Important Topics (CVIT 2016)}
\EventShortTitle{CVIT 2016}
\EventAcronym{CVIT}
\EventYear{2016}
\EventDate{December 24--27, 2016}
\EventLocation{Little Whinging, United Kingdom}
\EventLogo{}
\SeriesVolume{42}
\ArticleNo{23}

\input{macros.tex}

\begin{document}

\maketitle

\begin{abstract}
  We present new induction principles for the syntax of dependent type theories, which we call relative induction principles.
  The result of the induction principle relative to a functor $F$ into the syntax is stable over the codomain of $F$.
  We rely on the internal language of presheaf categories.
  In order to combine the internal languages of multiple presheaf categories, we use Dependent Right Adjoints and Multimodal Type Theory.
  Categorical gluing is used to prove these induction principles, but it not visible in their statements, which involve a notion of model without context extensions.
  As example applications of these induction principles, we give short and boilerplate-free proofs of canonicity and normalization for some small type theories, and sketch proofs of other metatheoretic results.
\end{abstract}

\section{Introduction}\label{sec:introduction}

\paragraph*{Induction principles}

Syntax without bindings or equations is characterized by its universal property as the initial object of some category of algebras, or equivalently by its induction principle as an inductive type.
The same can be said for syntax with equations, \eg quotient inductive-inductive types \cite{QIITs,DBLP:conf/lics/KovacsK20}.
As for syntax with bindings, we can encode it using syntax with equations but without bindings by making explicit the contexts and substitutions \cite{popl16}.

While this construction yields induction principles for syntax with bindings, most metatheoretic results are not direct applications of these induction principles.
They often involve a second step, in which the contexts over which the result holds are identified.
For example, canonicity only holds in the empty context, whereas normalization holds over every context, but is only stable under renamings.
This second step is most of the time handled in an ad-hoc manner.

Our main contribution is to show how this second step can be handled in a principled way and to introduce new induction principles for syntactic categories with bindings that merge the two steps into one.
More specifically, we give statements and proofs of so called \emph{relative induction principles} for a small dependent type theory $\Th_{\Pi,\BoolTy}$ with function space and booleans and for a minimal version of cubical type theory.
We use these theories to present our constructions, but they do not rely on any specific feature of these theories.
They could be generalized to arbitrary type theories (for some general definition of type theory, such as Uemura's definition~\cite{GeneralFrameworkSemanticsTT}).
In the appendix we show how to generalize to a hierarchy of universes closed under function space and booleans. We leave the full formal generalization to future work.

In the general case, we consider a functor $F : \CC \to \CS$, where $\CS$ is the syntax of our theory and $\CC$ a category which should satisfy some universal property.
We give induction principles which directly provide results that are stable over the morphisms of $\CC$ (hence the name \emph{relative}).
Under the hood, the relative induction principles use the universal properties of both $\CC$ and $\CS$.
The input data for a relative induction principle consists of a \emph{displayed model without context extensions}, along with some additional data depending on the universal property of $\CC$.

The following table lists example functors and the result that the induction principle relative to the given functor provides.
\begin{alignat*}{4}
  & \{\diamond\} && { }\to{ } && \Init_{\Th_{\Pi,\BoolTy}} && \quad\text{Canonicity \cite{CoquandNormalization} (Section \ref{sec:applCanon})} \\
  & \CRen && { }\to{ } && \Init_{\Th_{\Pi,\BoolTy}} && \quad\text{Normalization \cite{altenkirch_et_al:LIPIcs:2016:5972,CoquandNormalization} (Section \ref{sec:applNorm})} \\
  & \square && { }\to{ } && \Init_{\mathsf{CTT}} && \quad\text{(Homotopy/Strict) canonicity for cubical type theory \cite{coquand2021canonicity} (Section \ref{sec:applCanCTT})} \\
  & \mathcal{A}_{\square} && { }\to{ } && \Init_{\mathsf{CTT}} && \quad\text{Normalization for cubical type theory \cite{NormalizationCTT} (Section \ref{sec:applNormCTT})} \\
  & \Init_{\mathsf{ITT}} && { }\to{ } && \Init_{\mathsf{ETT}} && \quad\text{Conservativity of Extensional Type Theory over} \\
    & && && && \quad\text{Intensional Type Theory \cite{hofmann95conservativity,Oury2005,10.1145/3293880.3294095} (Section \ref{sec:applConserv})} \\
  & \Init_{\mathsf{HoTT}} && { }\to{ } && \Init_{\mathsf{2LTT}} && \quad\text{Conservativity of two-level type theory over HoTT \cite{CapriottiThesis}}
\end{alignat*}
There the initial model of a theory $\Th$ is denoted by $\Init_{\Th}$, $\Th_{\Pi,\BoolTy}$ is the small type theory that we consider in this paper, $\{\diamond\}$ is the terminal category, $\CRen$ is the category of renamings of $\Init_{\Th_{\Pi,\BoolTy}}$, $\square$ is the category of cubes, and $\mathcal{A}_{\square}$ is the category of cubical atomic contexts of~\cite{NormalizationCTT}.

We note some similarity with the \emph{worlds} of Twelf~\cite{Twelf} and with the \emph{context schemas} of Beluga~\cite{Beluga}.
The worlds and context schemas can be seen as descriptions of full subcategories spanned by contexts that are generated by a class of context extensions.
Our approach is more general, as we are not restricted to full subcategories.

In~\cite{SynCategoriesSketchingAndAdequacy}, an argument is made for the use of locally cartesian closed categories instead of Uemura's representable map categories in the semantics of type theories.
Using locally cartesian closed categories means that contexts can be extended by arbitrary judgments.
Indeed, the induction principle that we associate to $F : \CC \to \CS$ is left unchanged if $\CS$ is faithfully embedded into a category with additional context extensions.
However it depends on the context extensions of $\CC$; for instance canonicity is provable using $\{\diamond\} \to \Init_{\Th_{0}}$ only because $\{\diamond\}$ is not equipped with any way to extend contexts.
Thus the general notion of context extension remains important.

\paragraph*{Higher-order abstract syntax}

Higher-order abstract syntax~\cite{HOAS} is an encoding of bindings that relies on the binding structure of an ambient language.
It is closely related to Logical Frameworks~\cite{LF}.
As shown in~\cite{SemAnalysisHOAS} for the untyped lambda calculus, higher-order abstract syntax can be given semantics using presheaf categories.

The equivalence of a higher-order presentation of syntax with another presentation is usually called \emph{adequacy}.
Hofmann identified the crucial property justifying the adequacy of the higher-order presentation of untyped or simply-typed syntax: given a representable presheaf $\yo_{A}$ of a category $\CC$ with products, the presheaf exponential $(\yo_{A} \Ra B)$ can be computed as $\abs{\yo_{A} \Ra B}_{\Gamma} \triangleq \abs{B}_{\Gamma \times A}$.
This can be generalized to dependently typed syntax by considering locally representable presheaves.

The internal language of presheaf categories yields a definition of Categories with Families equipped with type-theoretic operations that are automatically stable under substitutions.
This gives a nice setting to work with a single model of type theory as used in e.g.\ \cite{paoloHOAS,awodey_2018}.
However, it does not immediately give a way to describe the general semantics of a type theory, since different models may live over different presheaf categories.
We solve this problem by using Multimodal Type Theory.

\paragraph*{Multimodal Type Theory}

The action on types of a morphism $F : \CC \to \CD$ of models can be seen as a natural transformation $F^{\Ty} : \Ty^{\CC} \to F^{\ast}\ \Ty^{\CD}$, where $F^{\ast} : \CPsh^{\CD} \to \CPsh^{\CC}$ is precomposition by $F$.
The action on terms is harder to describe.
As terms are dependent over types, we essentially need to extend $F^{\ast}$ to dependent presheaves.
The correct way to do this is to see $F^{\ast}$ as a dependent right adjoint~\cite{DRAs}; it satisfies a universal property that can be axiomatized and yields a modal extension of the internal languages of $\CPsh^{\CC}$ and $\CPsh^{\CD}$.

The actions of $F$ on types and terms are described in this extended language by:
\begin{alignat*}{3}
  & F^{\Ty} && :{ } && \forall (A : \Ty^{\CC})\ \mmod{F^{\ast}} \to \Ty^{\CD}, \\
  & F^{\Tm} && :{ } && \forall (A : \Ty^{\CC})\ (a : \Tm^{\CC}\ A)\ \mmod{F^{\ast}} \to \Tm^{\CD}\ (F^{\Ty}\ A\ \mmod{F^{\ast}}).
\end{alignat*}
where $\mmod{F^{\ast}}$ is an element of the syntax of dependent right adjoints that transitions between the presheaf models $\CPsh^{\CC}$ and $\CPsh^{\CD}$.
Multimodal Type Theory~\cite{MTT} is a further extension of this language that can deal with multiple dependent right adjoints at the same time.

Our strategy is to axiomatize just the structure and properties of the models, categories and functors that we need, so as to be able to perform most constructions internally to Multimodal Type Theory.
We describe our variant of the syntax of a dependent right adjoint in \cref{sec:dras}, and of the syntax of Multimodal Type Theory in \cref{sec:mtt}.

Other kinds of modalities have been used similar purposes in related work.
In~\cite{SemAnalysisContextualTypes}, the flat modality of crisp type theory is used to characterize the closed terms internally to a presheaf category.
In~\cite{NormalizationCTT}, a pair of open and closed modalities correspond respectively to the syntactic and semantic components of constructions performed internally to a glued topos.

One of the advantages of Multimodal Type Theory over other approaches is that additional modes and modalities can be added without requiring modification to constructions that rely on a fixed set of modes and modalities.

\paragraph*{Categorical gluing}

Some of the previous work on the metatheory of type theory has focused on the relation between logical relations and categorical gluing.
Some general gluing constructions have been given~\cite{GluingTT, GluingFlatFunctors}.
The input of these general gluing constructions is a suitable functor $F : \CC \to \CD$, where $\CC$ is a syntactic model of type theory, and $\CD$ is a semantic category (for instance a topos, or a model of another type theory with enough structure).
Gluing then provides a new glued model $\CP$ of the type theory, that combines the syntax of $\CC$ with semantic information from $\CD$.
Canonicity for instance can be proven by gluing along the global section functor $\Init_{\Th} \to \CSet$.
However the known proofs of normalization~\cite{NormalizationByGluingFreeLT,altenkirch_et_al:LIPIcs:2016:5972,CoquandNormalization} that rely on categorical gluing are not immediate consequences of these general constructions.

In the present work, we see the constructions of the glued category $\CP$ and of its type-theoretic structures as fundamentally different constructions.
We rely on the same base category $\CP$; but we equip it with type-theoretic structure using a different construction, that does not necessarily involve logical relations.

As mentioned earlier, the input data for the relative induction principles are displayed models without context extensions.
One of the central results of our work is that any displayed model without context extensions can be replaced by a displayed model with context extensions over a different base category.
In our proof this different base category is the glued category $\CP$.
However the concrete definition of $\CP$ does not matter in applications: the only thing that matters is that $\CP$ is equipped with a suitable replacement of the input displayed model without context extensions.

\paragraph*{Contributions}

Our main contribution is the statement and proofs of relative induction principles over the syntax of dependent type theory (\cref{sec:disp_models_wo_exts}), which take into account the fact that the results of induction should hold over a category $\CC$ with a functor into the syntactic category of the theory.
These induction principles are described using the new semantic notions of displayed models without context extensions and relative sections.

We show that the interpretation of dependent right adjoints and Multimodal Type Theory in diagrams of presheaf categories gives an internal language that is well-suited to the definitions of notions related to the semantics of syntax with bindings, including the notions of models, morphisms of models, the rest of the $2$-categorical structure of models, displayed models (both with and without context extensions), sections of displayed models (Sections \ref{sec:int_lang_psh}--\ref{sec:dras})
We never have to prove explicitly that any construction is stable under substitutions.

We show the application of our relative induction principles through the following examples in Section \ref{sec:applications}. We explain in detail an abstract version of Coquand's canonicity proof \cite{CoquandNormalization}, normalisation \cite{altenkirch_et_al:LIPIcs:2016:5972,CoquandNormalization} which we detail in Appendix \ref{sec:normalization}, canonicity \cite{coquand2021canonicity} and normalisation \cite{NormalizationCTT} for cubical type theory, syntactic parametricity \cite{bernardy12parametricity} and conservativity of ETT over ITT \cite{hofmann95conservativity}.



\section{Internal language of presheaf categories and models of type theory}\label{sec:int_lang_psh}

\begin{itemize}
  \item We work in a constructive metatheory, with a cumulative hierarchy $(\SSet_{i})$ of universes.
  \item If $\CC$ is a small category, we write $\abs{\CC}$ for its set of objects and $\CC(x \to y)$ for the set of morphisms from $x$ to $y$.
        We may write $(x : \CC)$ (or $(x : \CC^{\op})$) instead of $(x : \abs{\CC})$ to indicate that the dependence on $x$ is covariant (or contravariant).

        We write $(f \cdot g)$ or $(g\circ f)$ for the composition of $f : \CC(x \to y)$ and $g : \CC(y \to z)$.
  \item We rely on the internal language of presheaf categories. Given a small category $\CC$, the presheaf category $\CPsh^{\CC}$ is a model of extensional type theory, with a cumulative hierarchy of universes $\SPsh^{\CC}_{0} \subset \SPsh^{\CC}_{1} \subset \cdots \subset \SPsh^{\CC}_{i} \subset \cdots$, dependent functions, dependent sums, quotient inductive-inductive types, extensional equality types, \etc.
        For each of our definitions, propositions, theorems, \etc, we specify whether it should be interpreted externally or internally to some presheaf category.
  \item The Yoneda embedding is written $\yo : \CC \to \CPsh^{\CC}$.
        We denote the restriction of an element $x : \abs{X}_{\Gamma}$ of a presheaf $X$ along a morphism $\rho : \CC(\Delta \to \Gamma)$ by $x[\rho]_X : \abs{X}_\Delta$.
\end{itemize}

\subsection{Locally representable presheaves}

The notion of locally representable presheaf is the semantic counterpart of the notion of context extension.

\begin{definition} \label{locally-representable}
  Let $X$ be a presheaf over a category $\CC$ and $Y$ be a dependent presheaf over $X$.
  We say that $Y$ is \defemph{locally representable} if, for every $\Gamma : \abs{\CC}$ and $x : \abs{X}_{\Gamma}$, the presheaf
  \begin{alignat*}{3}
    & Y_{\mid x} && :{ } && \forall (\Delta : \CC^{\op}) (\rho : \CC(\Delta \to \Gamma)) \to \CSet \\
    & \abs{Y_{\mid x}}\ \rho && \triangleq{ } && \abs{Y}_{\Delta}\ (x[\rho])
  \end{alignat*}
  over the slice category $(\CC / \Gamma)$ is representable.

  In that case, we have, for every $\Gamma$ and $x$, an \defemph{extended context} $(\Gamma \rhd Y_{\mid x})$, a \defemph{projection map} $\bm{p}^{Y}_{x} : (\Gamma \rhd Y_{\mid x}) \to \Gamma$ and a \defemph{generic element} $\bm{q}^{Y}_{x} : \abs{Y_{\mid x}}\ \bm{p}_{x}^{Y}$ such that for every $\sigma : \Delta \to \Gamma$ and $y : \abs{Y_{\mid x}}\ \sigma$, there is a unique \defemph{extended morphism} $\angles{\sigma,y} : \Delta \to (\Gamma \rhd Y_{\mid x})$ such that $\angles{\sigma,y} \cdot \bm{p}^{Y}_{x} = \sigma$ and $\bm{q}^{Y}_{x}[\angles{\sigma,y}] = y$.
  \lipicsEnd
\end{definition}

Up to the correspondence between dependent presheaves and their total maps, locally representable dependent presheaves are also known as \emph{representable natural transformations} \cite{awodey_2018}.
We read this definition in a structured manner, with a local representability structure consisting of a choice of representing objects in the above definition.
The notion of local representability is \emph{local}: the restriction map from local representability structures for a dependent presheaf $Y$ over $X$ to coherent families of local representability structures of $Y_{\mid x}$ over $\yo_{\Gamma}$ for $x : \abs{X}_{\Gamma}$ is invertible.%
\footnote{Locality also holds if we consider local representability as a property.}

Assume that $\CC$ is an $i$-small category.
Internally to $\CPsh^{\CC}$ there is then, for every universe level $j$, a family $\isRep : \SPsh^{\CC}_{j} \to \SPsh^{\CC}_{\max(i, j)}$ of local representability structures over $j$-small presheaf families.
Due to the above locality property, we have for a dependent presheaf $Y$ over $X$ that elements of $X \vdash \isRep(Y)$ correspond to witnesses that $Y$ is locally representable over $X$.
This leads to universes $\SRepPsh^{\CC}_{j} \triangleq (A : \SPsh^{\CC}_{j}) \times \isRep\ A$ of $j$-small locally representable presheaf families.
As an internal category, it is equivalent to the $j$-small one that at $\Gamma : \abs{\CC}$ consists of an element of the slice of $\CC$ over $\Gamma$ together with a choice of base changes along any map $\CC(\Delta \to \Gamma)$. 

An alternative semantic for the presheaf $(y : Y) \to Z\ y$ of dependent natural transformations from $Y$ to $Z$ can be given when $Y$ is locally representable over $X : \CPsh^{\CC}$.
We could define $\abs{(y : Y) \to Z\ y}_{\Gamma}\ x \triangleq \abs{Z}_{\Gamma \rhd Y_{\mid x}}\ (x[\bm{p}^{Y}_{x}], \bm{q}^{Y}_{x})$.
This definition satisfies the universal property of the presheaf of dependent natural transformations from $Y$ to $Z$, and is therefore isomorphic to its usual definition.
The alternative definition admits a generalized algebraic presentation, which is important to justify the existence of initial models.

\subsection{Internal definition of models}

Our main running example is the theory $\Th_{\Pi,\BoolTy}$ of a family equipped with $\Pi$-types and a boolean type.
An internal model of $\Th_{\Pi,\BoolTy}$ in a presheaf category $\CPsh^{\CC}$ consists of the following elements.
\begin{alignat*}{3}
  & \Ty && :{ } && \SPsh^{\CC} \\
  & \Tm && :{ } && \Ty \to \SRepPsh^{\CC} \\
  & \Pi && :{ } && \forall (A : \Ty) (B : \Tm\ A \to \Ty) \to \Ty \\
  & \app && :{ } && \forall A\ B \to \Tm\ (\Pi\ A\ B) \simeq ((a : \Tm\ A) \to \Tm\ (B\ a))
\end{alignat*}
\begin{alignat*}{3}
  & \BoolTy && :{ } && \Ty \\
  & \true,\false && :{ } && \Tm\ \BoolTy \\
  & \elimBool && :{ } && \forall (P : \Tm\ \BoolTy \to \Ty)\ (t : \Tm\ (P\ \true))\ (f : \Tm\ (P\ \false))\ (b : \Tm\ \BoolTy) \to \Tm\ (P\ b) \\
  & - && :{ } && \elimBool\ P\ t\ f\ \true = t \\
  & - && :{ } && \elimBool\ P\ t\ f\ \false = f
\end{alignat*}
The inverse of $\app$ is written $\lam : \forall A\ B \to ((a : \Tm\ A) \to \Tm\ (B\ a)) \to \Tm\ (\Pi\ A\ B)$.

A model of $\Th_{\Pi,\BoolTy}$ is a category $\CC$ equipped with a terminal object and with a global internal model of $\Th_{\Pi,\BoolTy}$ in $\CPsh^{\CC}$.

\begin{remark}\label{rmk:QIIT}
  If we unfold the above internal definitions in presheaves, we see that a model of $\Th_{\Pi,\BoolTy}$ is the same externally as an algebra for the signature of a quotient inductive-inductive type (QIIT)~\cite{QIITs} describing $\Th_{\Pi,\BoolTy}$.
  That QIIT is significantly more verbose because it has sorts of contexts and substitutions and, for every component of the model, separately states the action at each context and coherent action of or coherence under substitution.
  The notion of morphism of models we will define in \cref{subsec:morphisms} unfolds externally to the verbose notion of algebra morphism for this QIIT, except that we do not require context extension to be preserved strictly.
  The same remark holds for the notion of displayed model to be defined in \cref{sec:disp_models_wo_exts}.
\end{remark}

We have a $(2,1)$-category $\CMod_{\Th_{\Pi,\BoolTy}}$ of models.
The morphisms are functors equipped with actions on types and terms that preserve the terminal object and the context extensions up to isomorphisms and the operations $\Pi$, $\app$, $\BoolTy$, $\true$, $\false$ and $\elimBool$ strictly.
The $2$-cells are the natural isomorphisms between the underlying functors.

We have just given an internal definition of the objects of $\CMod_{\Th_{\Pi,\BoolTy}}$ in the language of presheaf categories; we will give internal definitions of the other components using dependent right adjoints.

\subsection{Sorts and derived sorts}

A base sort of a CwF $\CC$ is a (code for a) presheaf (in $\CPsh^{\CC}$) of the form $\Ty$ or $\Tm(-)$.
The derived sorts are obtained by closing the base sorts under dependent products with arities in $\Tm(-)$.
A derived sort is either a base sort, or a presheaf of the form $(a : \Tm(-)) \to X(a)$ where $X(a)$ is a derived sort.
A derived sort can be written in the form $[X]Y$ where $X$ is a telescope of types and $Y$ is a base sort that depends on $X$.

The type of an argument of a type-theoretic operation or equation is always a derived sort.
We often omit dependencies when writing derived sorts; \eg{} we write $[\Tm]\Ty$ for the derived sort of the second argument of $\Pi$.


\section{Dependent Right Adjoints and morphisms of models}\label{sec:dras}

In this section, we review the syntax and semantics of dependent right adjoints (DRAs)~\cite{DRAs}, and use the syntax of the dependent right adjoint $(F_{!} \dashv \modcolor{F^{\ast}})$ to give an internal encoding of the notion of morphism of models of $\Th_{\Pi,\BoolTy}$.
Multimodal Type Theory is only needed for some of the proofs and constructions performed in the appendix.

\subsection{Dependent Right Adjoints}

Fix a functor $F : \CC \to \CD$.
The precomposition functor $F^{\ast} : \CPsh^{\CD} \to \CPsh^{\CC}$ has both a left adjoint $F_{!} : \CPsh^{\CC} \to \CPsh^{\CD}$ and a right adjoint $F_{\ast} : \CPsh^{\CC} \to \CPsh^{\CD}$.
The functors $F^{\ast}$ and $F_{\ast}$ are not only right adjoints of $F_{!}$ and $F^{\ast}$, they are dependent right adjoints, which means that they admit actions on the types and terms of the presheaf models $\CPsh^{\CC}$ and $\CPsh^{\CD}$ that interact with the left adjoints.
We distinguish the functor $F^{\ast}$ from the dependent right adjoint $\modcolor{F^{\ast}}$ by using different colors.
The dependent adjunction $(F^{\ast} \dashv \modcolor{F_{\ast}})$ is constructed in~\cite[Lemma 8.2]{MTT}, whereas $(F_{!} \dashv \modcolor{F^{\ast}})$ is constructed in~\cite[Lemma 2.1.4]{MTTtechreport}.
We recall their constructions in \cref{sec:dra_constructions}.

We focus on the description of the dependent right adjoint $\modcolor{F^{\ast}}$ as a syntactic and type-theoretic operation.
For every presheaf $X : \CPsh^{\CC}$ and dependent presheaf $A$ over $F_{!}\ X$, we have a dependent presheaf $\modcolor{F^{\ast}}\ A$ over $X$, such that elements of $A$ over $F_{!}\ X$ are in natural bijection with elements of $\modcolor{F^{\ast}}\ A$ over $X$.

This is analogous to the definition of $\Pi$-types: given a presheaf $X : \CPsh^{\CC}$, a dependent presheaf $Y(x)$ over the $(x : X)$ and a dependent presheaf $Z(x, y)$ over $(x : X, y : Y(x))$, the $\Pi$-type $(y : Y(x)) \to Z(x, y)$ over $(x : X)$ is characterized by the fact that its elements are in natural bijection with the elements of $Z(x,y)$ over $(x : X, y : Y(x))$.

Following this intuition, we use a similar syntax for $\Pi$-types and modalities.
We view the left adjoint $F_{!}$ as an operation on the contexts of the presheaf model $\CPsh^{\CC}$. If $(x : X)$ is a context of this presheaf model, we write $(x : X, \mmod{F^{\ast}})$ instead of $F_{!}\ X$.
Given a dependent presheaf $Y(x, \mmod{F^{\ast}})$\footnote{Here the notation $Y(x, \mmod{F^{\ast}})$ is an informal way to keep track of the fact that $Y$ is dependent over the context $(x : X, \mmod{F^{\ast}})$.} over $(x : X, \mmod{F^{\ast}})$, we write $(\mmod{F^{\ast}} \to Y(x, \mmod{F^{\ast}}))$ instead of $\modcolor{F^{\ast}}\ Y$.

We write the components of the bijection between elements of $Y(x, \mmod{F^{\ast}})$ over $(x : X, \mmod{F^{\ast}})$ and elements of $(\mmod{F^{\ast}} \to Y(x, \mmod{F^{\ast}}))$ over $(x : X)$ similarly to applications and $\lambda$-abstractions.
If $y(x, \mmod{F^{\ast}})$ is an element of $Y(x, \mmod{F^{\ast}})$ over $(x : X, \mmod{F^{\ast}})$, we write $(\lambda\ \mmod{F^{\ast}} \mapsto y(x, \mmod{F^{\ast}}))$ for the corresponding element of $(\mmod{F^{\ast}} \to Y(x, \mmod{F^{\ast}}))$.
Conversely, given an element $f(x)$ of $(\mmod{F^{\ast}} \to Y(x, \mmod{F^{\ast}}))$ over $(x : X)$, we write $f(x)\ \mmod{F^{\ast}}$ for the corresponding element of $Y(x, \mmod{F^{\ast}})$.
There is a $\beta$-rule $(\lambda\ \mmod{F^{\ast}} \mapsto y(x, \mmod{F^{\ast}}))\ \mmod{F^{\ast}} = y(x, \mmod{F^{\ast}})$ and an $\eta$-rule $(\lambda\ \mmod{F^{\ast}} \mapsto f(x)\ \mmod{F^{\ast}}) = f(x)$.

We may define elements of modal types by pattern matching. For instance, we may write $f(x)\ \mmod{F^{\ast}} \triangleq y(x,\mmod{F^{\ast}})$ to define $f(x)$ as the unique element satisfying the equation $f(x)\ \mmod{F^{\ast}} = y(x,\mmod{F^{\ast}})$, that is $f(x) \triangleq \lambda\ \mmod{F^{\ast}} \mapsto y(x,\mmod{F^{\ast}})$.

The operation $(\mmod{F^{\ast}} \to -)$ is a modality that enables interactions between the two presheaf models $\CPsh^{\CC}$ and $\CPsh^{\CD}$.
The symbols $\mmod{F^{\ast}}$ and $\mmod{F^{\ast}}$ and their places in the terms have been chosen to make keeping track of the modes of subterms as easy as possible.
For both symbols $\mmod{F^{\ast}}$ and $\mmod{F^{\ast}}$, the part of the term that is left of the symbol is at mode $\CPsh^{\CC}$, while the part that is right of the symbol is at mode $\CPsh^{\CD}$.
The type formers $(\mmod{F^{\ast}} \to -)$ and the term former $(\lambda\ \mmod{F^{\ast}} \mapsto -)$ go from the mode $\CPsh^{\CD}$ to $\CPsh^{\CC}$, whereas the term former $(-\ \mmod{F^{\ast}})$ goes from the mode $\CPsh^{\CC}$ to the mode $\CPsh^{\CD}$.

\subsection{Modalities are applicative functors}

As a first demonstration of the syntax of modalities, we equip the modality $(\mmod{F^{\ast}} \to -)$ with the structure of an \emph{applicative functor}~\cite{ApplicativeFunctors}, defined analogously to the \emph{reader monad} $(A \to -)$.
This structure is given by an operation
\begin{alignat*}{3}
  & (\_{} \circledast \_{}) && :{ } && \forall A\ B\ (f : \mmod{F^{\ast}} \to (a : A\ \mmod{F^{\ast}}) \to B\ \mmod{F^{\ast}}\ a) (a : \mmod{F^{\ast}} \to A\ \mmod{F^{\ast}}) \\
  &&&&& \to (\mmod{F^{\ast}} \to B\ \mmod{F^{\ast}}\ (a\ \mmod{F^{\ast}})) \\
  & f \circledast a && \triangleq{ } && \lambda\ \mmod{F^{\ast}} \mapsto (f\ \mmod{F^{\ast}})\ (a\ \mmod{F^{\ast}})
\end{alignat*}

This provides a concise notation to apply functions under the modality.
If $f$ is an $n$-ary function under the modality, and $a_{1},\dotsc,a_{n}$ are arguments under the modality, we can write the application $f \circledast a_{1} \circledast \cdots \circledast a_{n}$ instead of $(\lambda\ \mmod{F^{\ast}} \mapsto (f\ \mmod{F^{\ast}})\ (a_{1}\ \mmod{F^{\ast}})\ \cdots\ (a_{n}\ \mmod{F^{\ast}}))$.

When $f$ is a global function of the presheaf model $\CPsh^{\CD}$, we write $f \circleddollar a_{1} \circledast \cdots \circledast a_{n}$ instead of $(\lambda\ \mmod{F^{\ast}} \mapsto f) \circledast a_{1} \circledast \dotsc \circledast a_{n}$.

\subsection{Preservation of context extensions}

The last component that we need for an internal definition of morphism of models of $\Th_{\Pi}$ is an internal way to describe preservation of extended contexts of locally representable presheaves.
The preservation of context extensions can be expressed without assuming that the extended contexts actually exist, \ie{} without assuming that the presheaves are locally representable;
in that case we talk about preservation of virtual context extensions.

\begin{definition}[Internally to $\CPsh^{\CC}$]\label{def:preserves_ext}
  Let $A^{\CC} : \SPsh^{\CC}$ and $A^{\CD} : \mmod{F^{\ast}} \to \SPsh^{\CD}$ be presheaves over $\CC$ and $\CD$, and $F^{A} : \forall (a : A^{\CC}) \mmod{F^{\ast}} \to A^{\CD}\ \mmod{F^{\ast}}$ be an action of $F$ on the elements of $A^{\CC}$.
  We say that $F^{A}$ \defemph{preserves virtual context extensions} if for every dependent presheaf $P : \forall \mmod{F^{\ast}} (a : A^{\CD}\ \mmod{F^{\ast}}) \to \SPsh^{\CD}$, the canonical comparison map
  \begin{alignat*}{3}
    & \tau && :{ } && (\forall \mmod{F^{\ast}} (a : A^{\CD}\ \mmod{F^{\ast}}) \to P\ \mmod{F^{\ast}}\ a) \to (\forall (a : A^{\CC}) \mmod{F^{\ast}} \to P\ \mmod{F^{\ast}}\ (F^{A}\ a\ \ \mmod{F^{\ast}})) \\
    & \tau(p) && \triangleq{ } && \lambda a \mmod{F^{\ast}} \mapsto p\ \mmod{F^{\ast}}\ (F^{A}\ a\ \mmod{F^{\ast}})
  \end{alignat*}
  is an isomorphism.
  In other words, $F^{A}$ preserves virtual context extensions when the modality $(\mmod{F^{\ast}} \to -)$ commutes with quantification over $A^{\CC}$ and $A^{\CD}$.

  This provides a notation to define an element $p$ of $(\forall \mmod{F^{\ast}}(a : A^{\CD}\ \mmod{F^{\ast}}) \to P\ \mmod{F^{\ast}}\ a)$ using pattern matching: we write
  \[ p\ \mmod{F^{\ast}}\ (F^{A}\ a\ \mmod{F^{\ast}}) \triangleq q\ a\ \mmod{F^{\ast}} \]
  to define $p$ as the unique solution of that equation ($p = \tau^{-1}(\lambda a \mmod{F^{\ast}} \mapsto q\ a\ \mmod{F^{\ast}})$).
  \lipicsEnd
\end{definition}

In \cref{sec:preserv} we show that the internal description of preservation of context extensions coincides with the external notion of preservation up to isomorphism.

\subsection{Morphisms of models}
\label{subsec:morphisms}

Let $F : \CC \to \CD$ be a morphism of models of $\Th_{\Pi,\BoolTy}$.
We now show that its structure can fully be described in the internal language of $\CPsh^{\CC}$.

Its actions on types and terms can equivalently be given by the following global elements.
\begin{alignat*}{3}
  & F^{\Ty} && :{ } && (A : \Ty^{\CC}) \to (\mmod{F^{\ast}} \to \Ty^{\CD}) \\
  & F^{\Tm} && :{ } && \forall A\ (a : \Tm^{\CC}) \to (\mmod{F^{\ast}} \to \Tm^{\CD}\ (F^{\Ty}\ A\ \mmod{F^{\ast}}))
\end{alignat*}
The preservation of context extensions by $F$ is equivalent to the fact that $F^{\Tm}$ preserves virtual context extensions in the sense of \cref{def:preserves_ext}.
We can use that fact to obtain the following actions on derived sorts.
\begin{alignat*}{3}
  & F^{[X]\Ty} && :{ } && (A : X^{\CC} \to \Ty^{\CC}) \to (\forall \mmod{F^{\ast}}\ (x : X^{\CD}) \to \Ty^{\CD}) \\
  & F^{[X]\Tm} && :{ } && \forall A\ (a : (x : X^{\CC}) \to \Tm^{\CC}\ (A\ x)) \to (\forall \mmod{F^{\ast}}\ (x : X^{\CD}) \to \Tm^{\CD}(F^{[X]\Ty}\ A\ \mmod{F^{\ast}}(x)))
\end{alignat*}
They are defined as follows, using the pattern matching notation of \cref{def:preserves_ext}.
\begin{alignat*}{3}
  & F^{[X]\Ty}\ A\ \mmod{F^{\ast}}\ (F^{X}\ x\ \mmod{F^{\ast}}) && \triangleq{ } && F^{\Ty}\ (A\ x)\ \mmod{F^{\ast}} \\
  & F^{[X]\Tm}\ a\ \mmod{F^{\ast}}\ (F^{X}\ x\ \mmod{F^{\ast}}) && \triangleq{ } && F^{\Tm}\ (a\ x)\ \mmod{F^{\ast}}
\end{alignat*}
Finally, the preservation of the operations can simply be described by the following equations.
\begin{alignat*}{1}
  & F^{\Ty}\ (\Pi^{\CC}\ A\ B)\ \mmod{F^{\ast}} = \Pi^{\CD}\ (F^{\Ty}\ A\ \mmod{F^{\ast}})\ (F^{[\Tm]\Ty}\ B\ \mmod{F^{\ast}}) \\
  & F^{\Ty}\ (\app^{\CC}\ f\ a)\ \mmod{F^{\ast}} = \app^{\CD}\ (F^{\Tm}\ f\ \mmod{F^{\ast}})\ (F^{\Tm}\ a\ \mmod{F^{\ast}}) \\
  & F^{\Ty}\ \BoolTy^{\CC}\ \ \mmod{F^{\ast}} = \BoolTy^{\CD} \\
  & F^{\Ty}\ \true^{\CC}\ \ \mmod{F^{\ast}} = \true^{\CD} \\
  & F^{\Ty}\ \false^{\CC}\ \ \mmod{F^{\ast}} = \false^{\CD} \\
  & F^{\Ty}\ (\elimBool^{\CC}\ t\ f\ b)\ \mmod{F^{\ast}} = \elimBool^{\CD}\ (F^{[\Tm]\Ty}\ P\ \mmod{F^{\ast}})\ (F^{\Tm}\ t\ \mmod{F^{\ast}})\ (F^{\Tm}\ f\ \mmod{F^{\ast}})\ (F^{\Tm}\ b\ \mmod{F^{\ast}})
\end{alignat*}
We can then derive analogous equations for $F^{[X]\Ty}$ and $F^{[X]\Tm}$.
For instance,
\begin{alignat*}{1}
  & F^{[X]\Ty}\ (\lambda x \mapsto \Pi^{\CC}\ (A\ x)\ (B\ x))\ \mmod{F^{\ast}}\ x \\
  & \quad = \Pi^{\CD}\ (F^{[X]\Ty}\ A\ \mmod{F^{\ast}}\ x)\ (\lambda a \mapsto F^{[X,\Tm]\Ty}\ B\ \mmod{F^{\ast}}\ (x,a)).
\end{alignat*}
Indeed, by \cref{def:preserves_ext}, it suffices to show that equation when $x = F^{X}\ x'\ \mmod{F^{\ast}}$.
It then follows from the base equation for $F^{\Ty}\ (\Pi^{\CC}\ (A\ x')\ (B\ x'))$.

We can also derive strengthening equations.
For example, when $A$ does not depend on $X$, we have $F^{[X]\Ty}\ (\lambda x \mapsto A)\ \mmod{F^{\ast}} = \lambda x \mapsto F^{\Ty}\ A\ \mmod{F^{\ast}}$.

\begin{remark}
  The notion of morphism of models unfolds externally to the verbose notion of algebra morphism for the QIIT signature of \cref{rmk:QIIT}, except that we do not require context extension to be preserved strictly.
  A standard argument shows that initial algebras for the QIIT are biinitial in our sense.
  A similar remark holds for the notion of displayed model (and their sections) that will be defined in \cref{sec:disp_models_wo_exts}.
\end{remark}


\section{Relative induction principles}\label{sec:disp_models_wo_exts}

In this section we state our relative induction principles using the notion of displayed model without context extensions.
The full proofs of these relative induction principles are given in the appendix.

We fix a base model $\CS$ of $\Th_{\Pi,\BoolTy}$ and a functor $F : \CC \to \CS$.

\begin{definition}
  A \defemph{displayed model without context extensions} over $F : \CC \to \CS$ consists of the following components, specified internally to $\CPsh^{\CC}$:
  \begin{itemize}
    \item Presheaves of displayed types and terms.
          \begin{alignat*}{3}
            & \Ty^{\bullet} && :{ } && (A : \mmod{F^{\ast}} \to \Ty^{\CS}) \to \SPsh^{\CC} \\
            & \Tm^{\bullet} && :{ } && \forall A\ (A^{\bullet} : \Ty^{\bullet}\ A) (a : \mmod{F^{\ast}} \to \Tm^{\CS}\ (A\ \mmod{F^{\ast}})) \to \SPsh^{\CC}
          \end{alignat*}

          They correspond to the \emph{motives} of an induction principle.

    \item Displayed variants of the type-theoretic operations of $\Th_{\Pi,\BoolTy}$.
          They are the \emph{methods} of the induction principle.
          \begin{alignat*}{3}
            & \Pi^{\bullet} && :{ } && \forall A\ B\ (A^{\bullet} : \Ty^{\bullet}\ A) (B^{\bullet} : \{a\} (a^{\bullet} : \Ty^{\bullet}\ A^{\bullet}\ a) \to \Ty^{\bullet}\ (B \circledast a)) \\
            &&&&& \to \Ty^{\bullet}\ (\Pi^{\CS} \circleddollar A \circledast B) \\
            & \app^{\bullet} && :{ } && \forall A\ B\ f\ a\ (A^{\bullet} : \Ty^{\bullet}\ A) (B^{\bullet} : \{a\} (a^{\bullet} : \Ty^{\bullet}\ A^{\bullet}\ a) \to \Ty^{\bullet}\ (B \circledast a)) \\
            &&&&& \to (\Tm^{\bullet}\ (\Pi^{\bullet}\ A^{\bullet}\ B^{\bullet})\ f) \simeq ((a^{\bullet} : \Ty^{\bullet}\ A^{\bullet}\ a) \to \Tm^{\bullet}\ (B^{\bullet}\ a^{\bullet})\ (\app^{\CS} \circleddollar f \circledast a))
          \end{alignat*}
          \begin{alignat*}{3}
            & \BoolTy^{\bullet} && :{ } && \Ty^{\bullet}\ (\lambda \mmod{F^{\ast}} \mapsto \BoolTy) \\
            & \true^{\bullet} && :{ } && \Tm^{\bullet}\ \BoolTy^{\bullet}\ (\lambda \mmod{F^{\ast}} \mapsto \true) \\
            & \false^{\bullet} && :{ } && \Tm^{\bullet}\ \BoolTy^{\bullet}\ (\lambda \mmod{F^{\ast}} \mapsto \false) \\
            & \elimBool^{\bullet} && :{ } && \forall P\ t\ f\ b\ (P^{\bullet} : \forall x\ (x^{\bullet} : \Tm^{\bullet}\ \BoolTy^{\bullet}\ x) \to \Ty^{\bullet}\ (P \circledast x)) \\
            &&&&& \phantom{\forall} (t^{\bullet} : \Tm^{\bullet}\ (P^{\bullet}\ \true^{\bullet})\ t) (f^{\bullet} : \Tm^{\bullet}\ (P^{\bullet}\ \false^{\bullet})\ f) \\
            &&&&& \to (b^{\bullet} : \Tm^{\bullet}\ \BoolTy^{\bullet}\ b) \to \Tm^{\bullet}\ (P^{\bullet}\ b^{\bullet})\ b
          \end{alignat*}

    \item Satisfying displayed variants of the type-theoretic equations\footnote{Note that these equations are well-typed because of the corresponding equations in $\CS$. As presheaves support equality reflection, we don't have to write transports.} of $\Th_{\Pi,\BoolTy}$.
          \begin{alignat*}{3}
            & \elimBool^{\bullet}\ P^{\bullet}\ t^{\bullet}\ f^{\bullet}\ \true^{\bullet} && ={ } && t^{\bullet} \\
            & \elimBool^{\bullet}\ P^{\bullet}\ t^{\bullet}\ f^{\bullet}\ \false^{\bullet} && ={ } && f^{\bullet}
            \tag*{\lipicsEnd}
          \end{alignat*}
  \end{itemize}
\end{definition}

A displayed model without context extensions has context extensions when for any $A$ and $A^{\bullet}$, the first projection map
\[ (a : \mmod{F^{\ast}} \to \Tm^{\CS}\ (A\ \mmod{F^{\ast}})) \times (a^{\bullet} : \Tm^{\bullet}\ A^{\bullet}\ a) \xrightarrow{\lambda (a,a^{\bullet}) \mapsto a} (\mmod{F^{\ast}} \to \Tm^{\CS}\ (A\ \mmod{F^{\ast}})) \]
has a locally representable domain and preserves context extensions.

In \cref{sec:sections} we give an internal definition of section of displayed models with context extensions.
It is similar to the definition of morphism of models.
The induction principle of the biinitial model $\Init_{\Th_{\Pi,\BoolTy}}$ is the statement that any displayed model with context extensions over $\Init_{\Th_{\Pi,\BoolTy}}$ admits a section.

While (displayed) models without context extensions are not well-behaved, we show that they can be replaced by (displayed) models with context extensions.%
\begin{restatable}{definition}{restateDefFactorization}\label{def:disp_model_factorization}
  A \defemph{factorization} $(\CC \xrightarrow{Y} \CP \xrightarrow{G} \CS, \CS^{\dagger})$ of a global displayed model without context extensions $\CS^{\bullet}$ over $F : \CC \to \CS$ consists of a factorization $\CC \xrightarrow{Y} \CP \xrightarrow{G} \CS$ of $F$ and a displayed model with context extensions $\CS^{\dagger}$ over $G : \CP \to \CS$, such that $Y : \CC \to \CP$ is fully faithful and equipped with bijective actions on displayed types and terms.
  \lipicsEnd{}
\end{restatable}

\begin{restatable}{construction}{restateDispReplace}\label{con:disp_replace_0}
  We construct a factorization $(\CC \xrightarrow{Y} \CP \xrightarrow{G} \CS, \CS^{\dagger})$ of any model without context extensions $\CS^{\bullet}$ over $F : \CC \to \CS$.
\end{restatable}
\begin{proof}[Construction sketch]
  We give the full construction in the appendix.
  We see $\CP$ as analogous to the presheaf category over $\CC$, but in the slice $2$-category $(\CCat / \CS)$.
  Indeed, a generalization of the Yoneda lemma holds for $Y : \CC \to \CP$.
  In particular $Y : \CC \to \CP$ is fully faithful.

  Equivalently, it could be defined as the pullback along $\yo : \CS \to \widehat{\CS}$ of the Artin gluing $\CG \to \widehat{\CS}$ of $F_{\ast} : \widehat{\CS} \to \widehat{\CC}$.

  It is well-known that given a base model $\CC$ of type theory, that model can be extended to the presheaf category $\widehat{\CC}$ in such a way that the Yoneda embedding $\yo : \CC \to \widehat{\CC}$ is a morphism of models with bijective actions on types and terms.
  This is indeed the justification for one of the intended models of two-level type theory~\cite{TwoLevelTypeTheoryAndApplications}.
  This construction does not actually depend on the context extensions in the base model $\CC$.
  The construction of the displayed model $\CS^{\dagger}$ over $G : \CP \to \CS$ is a generalization of this construction to displayed models.
\end{proof}

We now assume that we have a section $S_{0}$ of the displayed model without context extensions $\CS^{\dagger}$ constructed in \cref{con:disp_replace_0}.

In general, we want more than just the section $S_{0}$.
Indeed, if we take a type $A$ of $\CS$ over a context $F\ \Gamma$ for some $\Gamma : \abs{\CC}$, we can apply the action of $S_{0}$ on types to obtain a displayed type $S_{0}^{\Ty}\ A$ of $\CS^{\dagger}$ over $S_{0}\ (F\ \Gamma)$.
We would rather have a displayed type of $\CS^{\bullet}$ over $\Gamma$.
It suffices to have a morphism $\alpha_{\Gamma} : Y\ \Gamma \to S_{0}\ (F\ \Gamma)$.
We can then transport $S_{0}^{\Ty}\ A$ to a displayed type $(S_{0}^{\Ty}\ A)[\alpha_{\Gamma}]$ of $\CS^{\dagger}$ over $Y\ \Gamma$.
Since $Y$ is equipped with a bijective action $Y^{\Ty}$ on displayed types, this provides a displayed type $Y^{\Ty,-1}\ (S_{0}^{\Ty}\ A)[\alpha_{\Gamma}]$ of $\CS^{\bullet}$ over $\Gamma$, as desired.
In general, we want to have a full natural transformation $\alpha : Y \Ra (F \cdot S_{0})$.

It is useful to consider the universal setting under which such a natural transformation is available.
\begin{definition}
  The \defemph{displayed inserter} $\CI(\CS^{\bullet})$ is a category equipped with a functor $I : \CI(\CS^{\bullet}) \to \CC$ and with a natural transformation $\iota : (I \cdot Y) \Ra (I \cdot F \cdot S_{0})$ such that $(\iota \cdot P) = 1_{(I \cdot F)}$.
  It is the final object among such categories: given any other category $\CJ$ with $J : \CJ \to \CC$ and $\beta : (J \cdot Y) \Ra (J \cdot F \cdot S_{0})$ such that $(\bm{\beta} \cdot P) = 1_{(J \cdot S_{0})}$, there exists a unique functor $X : \CJ \to \CI(\CS^{\bullet})$ such that $J = (X \cdot I)$ and $\beta = (X \cdot \alpha)$.
  \lipicsEnd{}
\end{definition}

Internally to $\CPsh^{\CI(\CS^{\bullet})}$, we then have the following operations:
\begin{alignat*}{3}
  & S_{\iota}^{[X]\Ty} && :{ } && \forall \mmod{I^{\ast}} (A : \mmod{F^{\ast}} \to X \to \Ty)\ x\ (x^{\bullet} : X^{\bullet}\ x) \to \Ty^{\bullet}\ (A \circledast x) \\
  & S_{\iota}^{[X]\Tm} && :{ } && \forall \mmod{I^{\ast}}\ A\ (a : \forall \mmod{F^{\ast}}\ x \to \Tm\ (A\ \mmod{F^{\ast}}\ x))\ x\ (x^{\bullet} : X^{\bullet}\ x) \\
  &&&&& \to \Tm^{\bullet}\ (S_{\iota}^{[X]\Ty}\mmod{I^{\ast}}\ A\ x\ x^{\bullet})\ (a \circledast x),
\end{alignat*}
where $X^{\bullet}$ is defined by induction on the telescope $X$.
They preserve all type-theoretic operations:
\begin{alignat*}{1}
  & S_{\iota}^{[X]\Ty}\ \mmod{I^{\ast}}\ (\lambda \mmod{F^{\ast}}\ x \mapsto \Pi\ (A\ \mmod{F^{\ast}}\ x)\ (B\ \mmod{F^{\ast}}\ x))\ x^{\bullet} \\
  & \quad = \Pi^{\bullet}\ (S_{\iota}^{[X]\Ty}\ \mmod{I^{\ast}}\ A\ x^{\bullet})\ (\lambda a^{\bullet} \mapsto S_{\iota}^{[X,A]\Ty}\ \mmod{I^{\ast}}\ B\ (x^{\bullet},a^{\bullet})) \\
  & S_{\iota}^{[X]\Tm}\ \mmod{I^{\ast}}\ (\lambda \mmod{F^{\ast}}\ x \mapsto \app\ (f\ \mmod{F^{\ast}}\ x)\ (a\ \mmod{F^{\ast}}\ x))\ x^{\bullet} \\
  & \quad = \app^{\bullet}\ (S_{\iota}^{[X]\Tm}\ \mmod{I^{\ast}}\ f\ x^{\bullet})\ (S_{\iota}^{[X]\Ty}\ \mmod{I^{\ast}}\ a\ x^{\bullet}) \\
  & S_{\iota}^{[X]\Ty}\ \mmod{I^{\ast}}\ (\lambda \mmod{F^{\ast}}\ x \mapsto \BoolTy)\ x^{\bullet} = \BoolTy^{\bullet} \\
  & S_{\iota}^{[X]\Tm}\ \mmod{I^{\ast}}\ (\lambda \mmod{F^{\ast}}\ x \mapsto \true)\ x^{\bullet} = \true^{\bullet} \\
  & S_{\iota}^{[X]\Tm}\ \mmod{I^{\ast}}\ (\lambda \mmod{F^{\ast}}\ x \mapsto \false)\ x^{\bullet} = \false^{\bullet} \\
  & S_{\iota}^{[X]\Tm}\ \mmod{I^{\ast}}\ (\lambda \mmod{F^{\ast}}\ x \mapsto \elimBool\ (P\ \mmod{F^{\ast}}\ x)\ (t\ \mmod{F^{\ast}}\ x)\ (f\ \mmod{F^{\ast}}\ x)\ (b\ \mmod{F^{\ast}}\ x))\ x^{\bullet} \\
  & \quad = \elimBool^{\bullet}\ (\lambda b^{\bullet} \mapsto S_{\iota}^{[X,\Tm]\Ty}\ \mmod{I^{\ast}}\ P\ (x^{\bullet},b^{\bullet})) \\
  & \phantom{{ }\quad = \elimBool^{\bullet}\ }(S_{\iota}^{[X]\Tm}\ \mmod{I^{\ast}}\ t\ x^{\bullet})\ (S_{\iota}^{[X]\Tm}\ \mmod{I^{\ast}}\ f\ x^{\bullet})\ (S_{\iota}^{[X]\Tm}\ \mmod{I^{\ast}}\ b\ x^{\bullet})
\end{alignat*}

\begin{restatable}{definition}{restateRelSection}\label{def:rel_section}
  A \defemph{relative section} $S_{\alpha}$ of a factorization $(\CC \xrightarrow{Y} \CP \xrightarrow{G} \CS, \CS^{\dagger})$ of a displayed model without context extensions $\CS^{\bullet}$ over $F : \CC \to \CS$ consists of a section $S_{0}$ of the displayed model with context extensions $\CS^{\dagger}$ along with a natural transformation $\alpha : Y \Ra (F \cdot S_{0})$ such that $(\alpha \cdot G) = 1_{F}$, or equivalently with a section $\angles{\alpha} : \CC \to \CI(\CS^{\bullet})$ of $I : \CI(\CS^{\bullet}) \to \CC$.
  \lipicsEnd{}
\end{restatable}

A relative section $S_{\alpha}$ has actions on types and terms, obtained by pulling $S_{\iota}^{[X]\Ty}$ and $S_{\iota}^{[X]\Tm}$ along $\angles{\alpha}$.
\begin{alignat*}{3}
  & S_{\alpha}^{[X]\Ty} && :{ } && \forall (A : \mmod{F^{\ast}} \to X \to \Ty)\ x\ (x^{\bullet} : X^{\bullet}\ x) \to \Ty^{\bullet}\ (A \circledast x) \\
  & S_{\alpha}^{[X]\Tm} && :{ } && \forall A\ (a : \forall \mmod{F^{\ast}}\ x \to \Tm\ (A\ \mmod{F^{\ast}}\ x))\ x\ (x^{\bullet} : X^{\bullet}\ x) \\
  &&&&& \to \Tm^{\bullet}\ (S_{\alpha}^{[X]\Ty}\ A\ x\ x^{\bullet})\ (a \circledast x),
\end{alignat*}

A displayed model without context extension over the biinitial model does not necessarily admit a relative section; this depends on the functor $F : \CC \to \Init_{\Th}$.
Depending on the universal property of $\CC$, we need to provide some additional data in order to construct $\angles{\alpha} : \CC \to \CI(\Init_{\Th}^{\bullet})$.
Thus, we get a different induction principle for every functor $F : \CC \to \Init_{\Th}$ into $\Init_{\Th}$, which we call the induction principle relative to $F$.
We now state several of these relative induction principles, for our example type theory $\Th_{\Pi,\BoolTy}$ and for cubical type theory $\mathsf{CTT}$.

\begin{restatable}[Induction principle relative to $\{\diamond\} \to \Init_{\Th_{\Pi,\BoolTy}}$]{lemma}{restateIndTerminal}\label{lem:ind_terminal}
  Denote by $\{\diamond\}$ the terminal category (which should rather be seen here as the initial category equipped with a terminal object), and consider the functor $F : \{\diamond\} \to \Init_{\Th_{\Pi,\BoolTy}}$ that selects the empty context of $\Init_{\Th_{\Pi,\BoolTy}}$.

  Any global displayed model without context extensions over $F$ admits a relative section.
  \qed{}
\end{restatable}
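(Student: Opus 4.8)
The plan is to invoke the general replacement construction \cref{con:disp_replace_0} and then to observe that, for the terminal category $\{\diamond\}$, the additional datum needed to upgrade a section of the replacement to a \emph{relative} section is available for free. Write $\CS^{\bullet}$ for the given global displayed model without context extensions over $F : \{\diamond\} \to \Init_{\Th_{\Pi,\BoolTy}}$.

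First I would apply \cref{con:disp_replace_0} to $\CS^{\bullet}$, obtaining a factorization $(\{\diamond\} \xrightarrow{Y} \CP \xrightarrow{G} \Init_{\Th_{\Pi,\BoolTy}}, \CS^{\dagger})$ with $\CS^{\dagger}$ a displayed model with context extensions over $G$ and $Y$ fully faithful with bijective actions on displayed types and terms. Since $\Init_{\Th_{\Pi,\BoolTy}}$ is biinitial, its induction principle for displayed models with context extensions supplies a section $S_{0} : \Init_{\Th_{\Pi,\BoolTy}} \to \CP$ of $\CS^{\dagger}$ — in particular a morphism of models with $S_{0} \cdot G = 1$ — which discharges the standing assumption on $S_{0}$. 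By \cref{def:rel_section} it then remains only to produce a natural transformation $\alpha : Y \Ra (F \cdot S_{0})$ with $(\alpha \cdot G) = 1_{F}$, equivalently a section $\angles{\alpha}$ of $I : \CI(\CS^{\bullet}) \to \{\diamond\}$.

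The key point is that, since $\{\diamond\}$ is the terminal category, such an $\alpha$ is merely a single morphism $\alpha_{\diamond} : Y(\diamond) \to S_{0}(F\,\diamond)$ of $\CP$, with no naturality condition to check. Here $F\,\diamond$ is the empty context of $\Init_{\Th_{\Pi,\BoolTy}}$, i.e.\ its terminal object, and $S_{0}$, being a morphism of models, preserves the terminal object; hence $S_{0}(F\,\diamond)$ is terminal in $\CP$ and there is a (unique) such $\alpha_{\diamond}$. Whiskering by $G$ sends $\alpha_{\diamond}$ to an endomorphism of the terminal object $F\,\diamond$ in $\Init_{\Th_{\Pi,\BoolTy}}$, which is forced to be the identity, so $(\alpha \cdot G) = 1_{F}$. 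The pair $S_{\alpha} \triangleq (S_{0}, \alpha)$ is then the desired relative section.

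I do not expect a real obstacle here: essentially all the work is concentrated in \cref{con:disp_replace_0} and in the biinitiality of $\Init_{\Th_{\Pi,\BoolTy}}$, both of which we are entitled to use. The only step requiring a word of justification is that $S_{0}(F\,\diamond)$ is terminal in $\CP$, which is immediate because morphisms of models preserve terminal objects. Conceptually, this is the simplest of the relative induction principles precisely because $\{\diamond\}$ carries no nontrivial context extensions, so the natural transformation $\alpha$ — the data that distinguishes relative sections from ordinary sections — costs nothing.
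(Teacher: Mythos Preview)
Your proposal is correct and follows essentially the same approach as the paper. The paper packages the argument slightly differently: it first shows (\cref{prop:inserter_terminal}) that the displayed inserter $\CI(\CS^{\bullet})$ has a terminal object preserved by $I$, and then uses the universal property of $\{\diamond\}$ as the initial category-with-terminal-object to obtain the section $\angles{\alpha}$ of $I$. You instead build the natural transformation $\alpha$ directly as its single component $\alpha_\diamond$. Both rest on the same key observation --- that $S_0(F\,\diamond)$ is terminal in $\CP$ --- so the content is the same; the paper's formulation via \cref{prop:inserter_terminal} is simply set up to be reused uniformly in the other relative induction principles.
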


\begin{definition}
  A \defemph{renaming algebra} over a model $\CS$ of $\Th_{\Pi,\BoolTy}$ is a category $\CR$ with a terminal object, along with a functor $F : \CR \to \CS$ preserving the terminal object, a locally representable dependent presheaf of variables
  \[ \Var^{\CR} : (A : \mmod{F^{\ast}} \to \Ty^{\CS}) \to \SRepPsh^{\CR} \]
  and an action on variables $\var : \forall A\ (a : \Var\ A)\ \mmod{F^{\ast}} \to \Tm^{\CS}\ (A\ \mmod{F^{\ast}})$ that preserves context extensions.

  The category of renamings $\CRen_{\CS}$ over a model $\CS$ is defined as the biinitial renaming algebra over $\CS$.
  We denote the category of renamings of the biinitial model $\Init_{\Th_{\Pi,\BoolTy}}$ by $\CRen$.
\end{definition}

\begin{restatable}[Induction principle relative to $\CRen \to \Init_{\Th_{\Pi,\BoolTy}}$]{lemma}{restateIndRenamings}\label{lem:ind_renamings}
  Let $\Init_{\Th_{\Pi,\BoolTy}}^{\bullet}$ be a global displayed model without context extensions over $F : \CRen \to \Init_{\Th_{\Pi,\BoolTy}}$, along with, internally to $\CI(\CS^{\bullet})$, a global map
  \[ \var^{\bullet} : \forall \mmod{I^{\ast}} (A : \mmod{F^{\ast}} \to \Ty) (a : \Var\ A) \to \Tm^{\bullet}\ (S_{\iota}^{\Ty}\ \mmod{I^{\ast}}\ A)\ (\var\ a). \]

  Then there exists a relative section $S_{\alpha}$ of $\Init_{\Th_{\Pi,\BoolTy}}^{\bullet}$.
  \qed{}
\end{restatable}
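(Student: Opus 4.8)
The plan is to bootstrap from the induction principle of the biinitial model and the universal property of $\CRen$ as a renaming algebra, using \cref{con:disp_replace_0} as the bridge. First I would apply \cref{con:disp_replace_0} to the displayed model without context extensions $\Init_{\Th_{\Pi,\BoolTy}}^{\bullet}$ over $F : \CRen \to \Init_{\Th_{\Pi,\BoolTy}}$, obtaining a factorization $(\CRen \xrightarrow{Y} \CP \xrightarrow{G} \Init_{\Th_{\Pi,\BoolTy}}, \CS^{\dagger})$ with $Y$ fully faithful and equipped with bijective actions on displayed types and terms. The induction principle of the biinitial model $\Init_{\Th_{\Pi,\BoolTy}}$ then applies to the displayed model with context extensions $\CS^{\dagger}$ and produces a section $S_{0}$. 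I would form the displayed inserter $\CI \triangleq \CI(\Init_{\Th_{\Pi,\BoolTy}}^{\bullet})$ together with $I : \CI \to \CRen$ and $\iota : (I \cdot Y) \Ra (I \cdot F \cdot S_{0})$. By \cref{def:rel_section}, a relative section of $\Init_{\Th_{\Pi,\BoolTy}}^{\bullet}$ with respect to this factorization consists of the section $S_{0}$ together with a section $\angles{\alpha} : \CRen \to \CI$ of $I$; producing the latter is the only remaining task, and this is precisely where the hypothesised map $\var^{\bullet}$ must enter.

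To build $\angles{\alpha}$ I would equip $\CI$ with the structure of a renaming algebra over $\Init_{\Th_{\Pi,\BoolTy}}$, with structure functor $I \cdot F$, in such a way that $I$ becomes a morphism of renaming algebras. Biinitiality of $\CRen$ then supplies a morphism of renaming algebras $\angles{\alpha} : \CRen \to \CI$, and since $\angles{\alpha} \cdot I$ is an endomorphism of the biinitial renaming algebra it is isomorphic to $1_{\CRen}$; a standard strictification --- using that the inserter projection $I$ is an isofibration, or a strict presentation of $\CRen$ --- upgrades this to a genuine section. The terminal object of $\CI$ is $(\diamond_{\CRen}, {!})$, where ${!} : Y\,\diamond_{\CRen} \to S_{0}(\diamond_{\Init_{\Th_{\Pi,\BoolTy}}})$ is the unique such morphism: as $S_{0}$ and $G$ preserve terminal objects, $S_{0}(\diamond_{\Init_{\Th_{\Pi,\BoolTy}}})$ is terminal in $\CP$, whence $(\diamond_{\CRen}, {!})$ is terminal in $\CI$, and both $I$ and $I \cdot F$ (the latter since $F$ preserves terminal objects) preserve it. The presheaf of variables and the action on variables of $\CI$ I would define by pulling $\Var^{\CRen}$ and $\var^{\CRen}$ back along $I$, so that $I$ preserves them strictly by construction.

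The crux --- and the step I expect to be the main obstacle --- is that this pulled-back presheaf of variables is locally representable over $\CI$ and the pulled-back action on variables preserves context extensions; equivalently, that every variable context extension $\Gamma \rhd_{\Var} A$ of $\CRen$ lifts to $\CI$. Given an object $(\Gamma, f_{\Gamma})$ of $\CI$ and a type $A$, one has $F(\Gamma \rhd_{\Var} A) \cong F\,\Gamma \rhd A$ because $\var^{\CRen}$ preserves context extensions, and $S_{0}(F\,\Gamma \rhd A) \cong S_{0}(F\,\Gamma) \rhd S_{0}^{\Ty}\,A$ because $S_{0}$ does, so it remains to build the extended structure map $f' : Y(\Gamma \rhd_{\Var} A) \to S_{0}(F\,\Gamma) \rhd S_{0}^{\Ty}\,A$ by the universal property of context extension in $\CP$: on the base it is $Y(\bm{p}) \cdot f_{\Gamma}$, and on the fibre it is the displayed term obtained by applying $\var^{\bullet}$ to the generic variable of $\Gamma \rhd_{\Var} A$ and transporting it through the bijective action of $Y$ on displayed terms (using that $S_{\iota}^{\Ty}$ is by construction $S_{0}^{\Ty}$ restricted along $\iota$). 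This is exactly the role of $\var^{\bullet}$: it supplies the datum that extends the structure maps of $\CI$ along variable context extensions --- the sole obstruction to $\CI$ carrying a renaming algebra structure --- which also explains why no extra data is needed for $\{\diamond\} \to \Init_{\Th_{\Pi,\BoolTy}}$ in \cref{lem:ind_terminal}, the terminal category having no nontrivial context extensions. The delicate part is then to check that $(\Gamma \rhd_{\Var} A, f')$ lies in $\CI$ (that is, $G\,f' = 1$) and satisfies the universal property of context extension in $\CI$, which requires threading together the universal property of the inserter $\CI$, that of context extension in $\CP$, and the naturality of $\var^{\bullet}$. Granting this, $\CI$ is a renaming algebra over $\Init_{\Th_{\Pi,\BoolTy}}$, $I$ is a morphism of renaming algebras, the construction of $\angles{\alpha}$ above goes through, and $(S_{0}, \angles{\alpha})$ is the desired relative section.
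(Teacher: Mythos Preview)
Your overall strategy coincides with the paper's: apply \cref{con:disp_replace_0}, obtain $S_{0}$ by biinitiality of $\Init_{\Th_{\Pi,\BoolTy}}$, equip the displayed inserter $\CI$ with the structure of a renaming algebra, and use biinitiality of $\CRen$ to produce the section $\angles{\alpha}$ of $I$. The terminal object is handled exactly as you describe (this is \cref{prop:inserter_terminal} in the paper).

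The gap is in your definition of $\Var^{\CI}$. You take the plain pullback of $\Var^{\CRen}$ along $I$, so that a variable over $(\Delta,g)\in\CI$ is just a variable over $\Delta\in\CRen$. With this choice, your candidate extension $(\Gamma\rhd_{\Var}A,f')$, where $f'$ is built from $\var^{\bullet}$ at the generic variable, does \emph{not} satisfy the required universal property: given a morphism $(\Delta,g)\to(\Gamma,f_{\Gamma})$ in $\CI$ over $\sigma$ and an arbitrary $a\in\Var^{\CRen}$ over $\Delta$, the unique lift $\angles{\sigma,a}$ in $\CRen$ is a morphism of $\CI$ only when the fibre components of the inserter square agree, and this unwinds precisely to the equation $S_{\iota}^{\Tm}\ (\var\ a)=\var^{\bullet}\ \mmod{I^{\ast}}\ a$ at $(\Delta,g)$. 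That equation is not automatic for arbitrary $a$, so the full pullback is not represented by $(\Gamma\rhd_{\Var}A,f')$. The paper therefore \emph{defines}
\[
  \Var^{\CI}\ A \triangleq \mmod{I^{\ast}} \to \{\,a:\Var^{\CRen}\ (A\ \mmod{I^{\ast}}) \mid S_{\iota}^{\Tm}\ (\var\ a)=\var^{\bullet}\ \mmod{I^{\ast}}\ a\,\},
\]
a proper subobject of your pullback, and shows that \emph{this} presheaf is locally representable by a general result about inserters (\cref{prop:inserter_rep}, specialised in \cref{prop:inserter_rep_cor}). That general proposition is also what dissolves the apparent circularity in ``applying $\var^{\bullet}$ at the generic variable of $\Gamma\rhd_{\Var}A$'' before any $\CI$-structure on $\Gamma\rhd_{\Var}A$ exists: its proof constructs the inserter datum $\iota^{y^{\rhd}}$ externally from the slice data, without presupposing an $\CI$-object over $y^{\rhd}$. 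Once you adopt the paper's $\Var^{\CI}$, the rest of your plan goes through verbatim, and the built-in equation moreover yields the extra computation rule $S_{\alpha}^{\Tm}\ (\var\ a)=(\var^{\bullet}\ \mmod{I^{\ast}})\smkey{\bullet}{\angles{\alpha}^{\ast}I^{\ast}}\ a$ recorded after the statement.
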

The relative section also satisfies a computation rule that relates $S_{\alpha}^{\Tm}\ (\var_{A}\ a)$ and $\var^{\bullet}$.

We also state relative induction principles that can be used to prove canonicity and normalization of cubical type theory.
\begin{definition}
  A cubical CwF is a CwF $\CC$ equipped with a locally representable interval presheaf with two endpoints
  \begin{alignat*}{3}
    & \mathbb{I}^{\CC} && :{ } \SRepPsh^{\CC}, \\
    & 0^{\CC}, 1^{\CC} && :{ } \mathbb{I}^{\CC}.
    \tag*{\lipicsEnd}
  \end{alignat*}
\end{definition}

A model of cubical type theory ($\mathsf{CTT}$) is a cubical CwF equipped with some choice of type-theoretic structures, such as $\Pi$-types, path types, glue types, \etc.

\begin{definition}
  A (cartesian) \defemph{cubical algebra} over a model $\CS$ of $\mathsf{CTT}$ is a category $\CC$ with a terminal object, along with a functor $F : \CC \to \CS$ preserving the terminal object, a locally representable interval presheaf $\mathbb{I}^{\CC} : \SRepPsh^{\CC}$ with two endpoints $0^{\CC}, 1^{\CC} : \mathbb{I}^{\CC}$ and an action $\mathsf{int} : \mathbb{I}^{\CC} \to \mmod{F^{\ast}} \to \mathbb{I}^{\CS}$ that preserves context extensions and the endpoints.

  The category of cubes $\square_{\CS}$ over a model $\CS$ is defined as the biinitial cubical algebra over $\CS$.
  We denote by $\square$ the category of cubes of the biinitial model $\Init_{\mathsf{CTT}}$ of cubical type theory.
  \lipicsEnd{}
\end{definition}

\begin{restatable}[Induction principle relative to $\square \to \Init_{\mathsf{CTT}}$]{lemma}{restateIndCube}\label{lem:ind_cubes}
  Let $\Init_{\mathsf{CTT}}^{\bullet}$ be a global displayed model without context extensions over $F : \square \to \Init_{\mathsf{CTT}}$, along with a map
  \[ \mathsf{int}^{\bullet} : (i : \mathbb{I}^{\square}) \to \mathbb{I}^{\bullet}\ (\mathsf{int}\ i) \]
  such that $\mathsf{int}^{\bullet}\ 0^{\square} = 0^{\bullet}$ and $\mathsf{int}^{\bullet}\ 1^{\square} = 1^{\bullet}$.

  Then there exists a relative section $S_{\alpha}$ of $\Init_{\mathsf{CTT}}^{\bullet}$.
  \qed{}
\end{restatable}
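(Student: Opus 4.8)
The plan is to follow the general recipe for relative induction principles: by \cref{def:rel_section} it suffices to produce a section $\angles{\alpha} : \square \to \CI(\Init_{\mathsf{CTT}}^{\bullet})$ of the functor $I : \CI(\Init_{\mathsf{CTT}}^{\bullet}) \to \square$, since together with $S_{0}$ such a section is exactly a relative section $S_{\alpha}$. First I would apply \cref{con:disp_replace_0} to obtain the factorization $(\square \xrightarrow{Y} \CP \xrightarrow{G} \Init_{\mathsf{CTT}}, \Init_{\mathsf{CTT}}^{\dagger})$ together with the section $S_{0}$ of $\Init_{\mathsf{CTT}}^{\dagger}$; its existence follows from biinitiality of $\Init_{\mathsf{CTT}}$, which produces a morphism from $\Init_{\mathsf{CTT}}$ into the total model of $\Init_{\mathsf{CTT}}^{\dagger}$ splitting the projection up to isomorphism, which is then rectified to a strict section. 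This makes available the displayed inserter $\CI(\Init_{\mathsf{CTT}}^{\bullet})$ with its functor $I$, its $2$-cell $\iota$, and the operations $S_{\iota}^{[X]\Ty}$, $S_{\iota}^{[X]\Tm}$.

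The crux is to equip $\CI(\Init_{\mathsf{CTT}}^{\bullet})$ with a cubical algebra structure over $\Init_{\mathsf{CTT}}$ for which $I$ is a morphism of cubical algebras. The terminal object is transported from $\square$ through the universal property of the inserter, using that $Y$, $F$ and $S_{0}$ preserve terminal objects; the structure functor to $\Init_{\mathsf{CTT}}$ is the composite $I \cdot F$; the interval presheaf is the reindexing $I^{\ast}\,\mathbb{I}^{\square}$ with endpoints $I^{\ast}\,0^{\square}$ and $I^{\ast}\,1^{\square}$; and the action is $\mathsf{int}$ pulled back along $I$. The endpoint equations $\mathsf{int}^{\bullet}\,0^{\square} = 0^{\bullet}$ and $\mathsf{int}^{\bullet}\,1^{\square} = 1^{\bullet}$ guarantee that this action preserves the endpoints. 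The one nontrivial point, and the reason the datum $\mathsf{int}^{\bullet}$ is required, is local representability of $I^{\ast}\,\mathbb{I}^{\square}$ over $\CI(\Init_{\mathsf{CTT}}^{\bullet})$: for $\xi : \CI(\Init_{\mathsf{CTT}}^{\bullet})$ over $\Gamma = I(\xi)$, the extended object $\xi \rhd I^{\ast}\,\mathbb{I}^{\square}$ is built via the universal property of the inserter, its defining $2$-cell being the extension of $\iota_{\xi}$ along the generic interval element of $\Gamma \rhd \mathbb{I}^{\square}$; such an extension amounts to a displayed interval element lying over the generic one, which is exactly what $\mathsf{int}^{\bullet}$ supplies (transported across the bijective actions of $Y$ on displayed data). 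One then checks that this extension is preserved by $I \cdot F$ and by the pulled-back action, so that $\CI(\Init_{\mathsf{CTT}}^{\bullet})$ really is a cubical algebra over $\Init_{\mathsf{CTT}}$, and that $I$ preserves the whole structure strictly.

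Granting this, biinitiality of $\square = \square_{\Init_{\mathsf{CTT}}}$ among cubical algebras over $\Init_{\mathsf{CTT}}$ yields a morphism of cubical algebras $K : \square \to \CI(\Init_{\mathsf{CTT}}^{\bullet})$. Then $K \cdot I : \square \to \square$ is an endomorphism of the biinitial cubical algebra, hence isomorphic to the identity in the ambient $(2,1)$-category; a standard rectification replaces $K$ by an isomorphic functor that is a strict section of $I$, which I take as $\angles{\alpha}$. By \cref{def:rel_section} this $\angles{\alpha}$, together with $S_{0}$, is the sought relative section $S_{\alpha}$, and compatibility of $K$ with the interval actions yields the expected computation rule relating $S_{\alpha}$ on interval elements to $\mathsf{int}^{\bullet}$.

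The main obstacle I anticipate is the local representability step of the second paragraph: showing that $\CI(\Init_{\mathsf{CTT}}^{\bullet})$ admits the interval context extension forces one to unfold the universal properties of the inserter and of $\CP$ simultaneously and to recognise that $\mathsf{int}^{\bullet}$ is precisely the missing ingredient. Everything else — preservation of context extensions and endpoints by the action, strictness of $I$, and the rectification of the section — is routine bookkeeping, most of which the internal language of the presheaf categories discharges automatically.
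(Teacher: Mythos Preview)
Your overall strategy is exactly the paper's: obtain $S_{0}$ from biinitiality of $\Init_{\mathsf{CTT}}$, equip the displayed inserter $\CI(\Init_{\mathsf{CTT}}^{\bullet})$ with the structure of a cubical algebra over $\Init_{\mathsf{CTT}}$ so that $I$ is a morphism of cubical algebras, and then invoke biinitiality of $\square$ to produce the section $\angles{\alpha}$ of $I$. The terminal object step and the final biinitiality step are as in the paper.

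There is, however, a genuine gap in your choice of interval presheaf on the inserter. You take it to be the bare reindexing $I^{\ast}\,\MI^{\square}$, and then build a candidate representing object whose $2$-cell extends $\iota_{\xi}$ using $\mathsf{int}^{\bullet}$. But this object does not represent $I^{\ast}\,\MI^{\square}$: given $(\Delta,\iota^{\Delta})$, a map $\sigma$ into $(\Gamma,\iota^{\Gamma})$, and an \emph{arbitrary} $i \in \MI^{\square}(\Delta)$, the induced $\langle\sigma,i\rangle : \Delta \to \Gamma \rhd \MI^{\square}$ is a morphism of the inserter only when the inserter square over it commutes, and unwinding that square yields precisely the coherence condition $S_{\iota}^{\MI}\,(\mathsf{int}\ i) = \mathsf{int}^{\bullet}\ i$. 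Hence the object you built represents not $I^{\ast}\,\MI^{\square}$ but the subpresheaf
\[
  \MI^{\CI} \;\triangleq\; \mmod{I^{\ast}} \to \{\, i : \MI^{\square} \mid S_{\iota}^{\MI}\ (\mathsf{int}\ i) = \mathsf{int}^{\bullet}\ \mmod{I^{\ast}}\ i \,\},
\]
and this is exactly what the paper takes as the interval on the inserter (the argument is packaged there as a general lemma on local representability in displayed inserters, \cref{prop:inserter_rep}, and then instantiated). With the correct $\MI^{\CI}$, the hypotheses $\mathsf{int}^{\bullet}\,0^{\square} = 0^{\bullet}$ and $\mathsf{int}^{\bullet}\,1^{\square} = 1^{\bullet}$ acquire their real role: they are what make $0^{\square}$ and $1^{\square}$ satisfy the coherence condition, so that they lift to endpoints $0^{\CI},1^{\CI}$ of $\MI^{\CI}$. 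Under your formulation (full pullback with the pulled-back action) endpoint preservation would be automatic and those hypotheses would be idle at this step, which is a symptom that the interval is misidentified. Once you replace $I^{\ast}\,\MI^{\square}$ by $\MI^{\CI}$ as above, the remainder of your outline goes through and coincides with the paper's proof.
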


\begin{definition}
  A (cartesian) \defemph{cubical atomic algebra} over a model $\CS$ of $\mathsf{CTT}$ is a category $\CC$ with a terminal object, along with a functor $F : \CC \to \CS$ preserving the terminal object and with the structures of a cubical algebra ($\mathbb{I}^{\CC}, 0^{\CC}, 1^{\CC}, \mathsf{int}$) and of a renaming algebra $(\Var^{\CC}, \var)$.

  The category of cubical atomic contexts $\CA_{\square}$ is the biinitial cubical algebra over the biinitial model $\Init_{\mathsf{CTT}}$ of cubical type theory.
  \lipicsEnd{}
\end{definition}

\begin{restatable}[Induction principle relative to $\CA_{\square} \to \Init_{\mathsf{CTT}}$]{lemma}{restateIndAtomicCube}\label{lem:ind_renaming_cubes}
  Let $\Init_{\mathsf{CTT}}^{\bullet}$ be a global displayed model without context extensions over $F : \CA_{\square} \to \Init_{\mathsf{CTT}}$, along with, internally to $\CPsh^{\CI(\Init_{\mathsf{CTT}}^{\bullet})}$, global maps
  \begin{alignat*}{3}
    & \var^{\bullet} && :{ } && \forall \mmod{I^{\ast}}\ (A : \mmod{F^{\ast}} \to \Ty) (a : \Var\ A) \to \Tm^{\bullet}\ (S_{\iota}^{\Ty}\ \mmod{I^{\ast}}\ A)\ (\var\ a), \\
    & \mathsf{int}^{\bullet} && :{ } && \forall \mmod{I^{\ast}}\ (i : \mathbb{I}^{\CA_{\square}}) \to \mathbb{I}^{\bullet}\ (\mathsf{int}\ i),
  \end{alignat*}
  such that $\mathsf{int}^{\bullet}\ \mmod{I^{\ast}}\ 0^{\CA_{\square}} = 0^{\bullet}$ and $\mathsf{int}^{\bullet}\ \mmod{I^{\ast}}\ 1^{\CA_{\square}} = 1^{\bullet}$.

  Then there exists a relative section $S_{\alpha}$ of $\Init_{\mathsf{CTT}}^{\bullet}$.
  \qed{}
\end{restatable}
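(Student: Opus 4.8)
The plan is to reuse the general machinery set up just before the statement, specializing it as in \cref{lem:ind_renamings} and \cref{lem:ind_cubes}; since a cubical atomic algebra is precisely a renaming algebra that is simultaneously a cubical algebra, the argument will be the combination of those two. Recall that \cref{con:disp_replace_0} already provides a factorization $(\CA_{\square} \xrightarrow{Y} \CP \xrightarrow{G} \Init_{\mathsf{CTT}}, \CS^{\dagger})$ of $\Init_{\mathsf{CTT}}^{\bullet}$ with $\CS^{\dagger}$ a displayed model \emph{with} context extensions over $G$, and that, since $\Init_{\mathsf{CTT}}$ is biinitial among models of $\mathsf{CTT}$, its induction principle yields a section $S_0$ of $\CS^{\dagger}$; this in turn determines the displayed inserter $\CI(\Init_{\mathsf{CTT}}^{\bullet})$ together with the operations $S_{\iota}^{[X]\Ty}$ and $S_{\iota}^{[X]\Tm}$. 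By \cref{def:rel_section}, a relative section $S_{\alpha}$ is the same as such an $S_0$ together with a section $\angles{\alpha} : \CA_{\square} \to \CI(\Init_{\mathsf{CTT}}^{\bullet})$ of the projection $I$, so the entire remaining task is to construct $\angles{\alpha}$ from the supplied data $\var^{\bullet}$ and $\mathsf{int}^{\bullet}$.

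First I would exploit the universal property of $\CA_{\square}$ as the biinitial cubical atomic algebra over $\Init_{\mathsf{CTT}}$: it is enough to equip $\CI(\Init_{\mathsf{CTT}}^{\bullet})$ with the structure of a cubical atomic algebra over $\Init_{\mathsf{CTT}}$, with structure functor $F \cdot I$, in such a way that $I$ preserves all of this structure; biinitiality then produces a morphism of cubical atomic algebras $\angles{\alpha} : \CA_{\square} \to \CI(\Init_{\mathsf{CTT}}^{\bullet})$, and because $\angles{\alpha} \cdot I$ is an endomorphism of $\CA_{\square}$ as a cubical atomic algebra over $\Init_{\mathsf{CTT}}$ it must be the identity, so $\angles{\alpha}$ is a section of $I$. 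To build the structure, write $\diamond$ for the terminal object of $\CA_{\square}$: since $F$ preserves terminal objects and so does $S_0$ (a section of a displayed model being in particular a CwF morphism), $(F \cdot S_0)(\diamond) = S_0(F\diamond)$ is terminal in $\CP$, there is a unique map $Y\diamond \to (F\cdot S_0)(\diamond)$, and the resulting object of $\CI(\Init_{\mathsf{CTT}}^{\bullet})$ over $\diamond$ is terminal. The interval presheaf, its endpoints and its action, and likewise the variable presheaf and its action, are obtained by pulling $\mathbb{I}^{\CA_{\square}}, 0^{\CA_{\square}}, 1^{\CA_{\square}}, \mathsf{int}$ and $\Var^{\CA_{\square}}, \var$ back along $I$; that $I$ preserves everything and that the endpoint and context-extension-preservation equations hold then reduces to the hypotheses $\mathsf{int}^{\bullet}\ \mmod{I^{\ast}}\ 0^{\CA_{\square}} = 0^{\bullet}$ and $\mathsf{int}^{\bullet}\ \mmod{I^{\ast}}\ 1^{\CA_{\square}} = 1^{\bullet}$ of the statement, together with the preservation laws already recorded for $S_{\iota}^{[X]\Ty}$ and $S_{\iota}^{[X]\Tm}$.

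The hard part will be re-establishing \emph{local representability} of the interval and variable presheaves after pulling them back to $\CI(\Init_{\mathsf{CTT}}^{\bullet})$, since local representability is not preserved by arbitrary reindexing and the representing objects must be produced by hand. I expect to obtain them by combining the context extensions of $\CA_{\square}$ with the ambient context extensions present in $\CP$ (which behaves like a presheaf category over $\CA_{\square}$ relative to $\Init_{\mathsf{CTT}}$) and with the generic natural transformation $\iota$, using $\var^{\bullet}$ and $\mathsf{int}^{\bullet}$ — which are compatible with $S_{\iota}$ by hypothesis — exactly to supply the displayed components living over $\iota$; this is the step where the arguments of \cref{lem:ind_renamings} and \cref{lem:ind_cubes} genuinely have to be merged rather than merely juxtaposed. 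Once this is done, the rest is formal: $(S_0, \angles{\alpha})$ is by definition the desired relative section $S_{\alpha}$, and it automatically satisfies the expected computation rules relating $S_{\alpha}^{\Tm}$ on variables to $\var^{\bullet}$ and on the interval generators to $\mathsf{int}^{\bullet}$.
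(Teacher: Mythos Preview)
Your plan is essentially the paper's --- whose proof just reads ``similar to the proofs of \cref{lem:ind_renamings} and \cref{lem:ind_cubes}'' --- but one step is stated imprecisely in a way that would not go through as written. You propose to take $\Var^{\CI}$ and $\mathbb{I}^{\CI}$ as the plain pullbacks of $\Var^{\CA_{\square}}$ and $\mathbb{I}^{\CA_{\square}}$ along $I$, and then establish local representability afterwards using $\var^{\bullet}$ and $\mathsf{int}^{\bullet}$ to build the representing objects. But the plain pullback is \emph{not} locally representable: a morphism of the inserter is a morphism of $\CA_{\square}$ making the square with $\iota$ commute, and that commutativity is a genuine extra condition on the extended morphism, so the candidate object you would build represents only a sub-presheaf. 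What the paper actually does in the proofs of \cref{lem:ind_renamings} and \cref{lem:ind_cubes} is to \emph{define} the presheaves on $\CI$ as those sub-presheaves cut out by the compatibility condition, for instance
\[
\Var^{\CI}\ A \;\triangleq\; \mmod{I^{\ast}} \to \{\, a : \Var^{\CA_{\square}}\ (A\ \mmod{I^{\ast}}) \mid S_{\iota}^{\Tm}\ (\var\ a) = \var^{\bullet}\ \mmod{I^{\ast}}\ a \,\},
\]
and analogously for $\mathbb{I}^{\CI}$; local representability of \emph{these} is then exactly what \cref{prop:inserter_rep} (via \cref{prop:inserter_rep_cor}) provides. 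In other words, $\var^{\bullet}$ and $\mathsf{int}^{\bullet}$ enter already in the definition of the presheaves, not only in constructing their representing objects. One more small point: contrary to your last paragraph, no genuine ``merging'' is needed --- the variable and interval structures are independent, so one simply carries out both constructions side by side on $\CI(\Init_{\mathsf{CTT}}^{\bullet})$ and then invokes biinitiality of $\CA_{\square}$.
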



\section{Applications}\label{sec:applications}

We give a few applications of our relative induction principles.
Only the canonicity proof is detailed here; for most of the other proofs we only give the definition of the displayed types.
A more detailed normalization proof is given in~\cref{sec:normalization}.

\subsection{Canonicity}\label{sec:applCanon}
In order to prove canonicity for $\Init_{\Th_{\Pi,\BoolTy}}$, we construct a displayed model without context extensions $\Init_{\Th_{\Pi,\BoolTy}}^{\bullet}$ over $F : \{\diamond\} \to \Init_{\Th_{\Pi,\BoolTy}}$, so as to apply \cref{lem:ind_terminal} to it.
It is defined in the in the internal language of $\CPsh^{\{\diamond\}}$ ($= \CSet$).

A type of $\Init_{\Th_{\Pi,\BoolTy}}^{\bullet}$ displayed over a type $A : \mmod{F^{\ast}} \to \Ty$ is a set-valued proof-relevant logical predicate over the terms of type $A$.
A term of $\Init_{\Th_{\Pi,\BoolTy}}^{\bullet}$ of type $A^{\bullet}$ displayed over a term $a : \mmod{F^{\ast}} \to \Tm\ (A\ \mmod{F^{\ast}})$ is an witness of the fact that $a$ satisfies the predicate $A^{\bullet}$.
\begin{alignat*}{3}
  & \Ty^{\bullet}\ A && \triangleq{ } && (a : \mmod{F^{\ast}} \to \Tm\ (A\ \mmod{F^{\ast}})) \to \SSet \\
  & \Tm^{\bullet}\ A^{\bullet}\ a && \triangleq{ } && A^{\bullet}\ a
\end{alignat*}

The logical predicate over $\Pi\ A\ B$ characterizes the functions that preserve the logical predicate of $A$ and $B$.
\begin{alignat*}{3}
  & \Pi^{\bullet}\ A^{\bullet}\ B^{\bullet} && \triangleq{ } && \lambda\ f \mapsto (\forall a\ (a^{\bullet} : A^{\bullet}\ a) \to B^{\bullet}\ (\app \circleddollar f \circledast a)) \\
  & \app^{\bullet}\ f^{\bullet}\ a^{\bullet} && \triangleq{ } && f^{\bullet}\ a^{\bullet}
\end{alignat*}

Finally, $\BoolTy^{\bullet} : (\mmod{F^{\ast}} \to \Tm\ \BoolTy) \to \SSet$ characterizes canonical natural numbers, and is defined as the inductive family generated by $\true^{\bullet} : \BoolTy^{\bullet}\ (\lambda\ \mmod{F^{\ast}} \mapsto \true)$ and $\false^{\bullet} : \BoolTy^{\bullet}\ (\lambda\ \mmod{F^{\ast}} \mapsto \false)$.
The displayed natural number eliminator $\elimBool^{\bullet}$ is then obtained from the elimination principle of $\BoolTy^{\bullet}$.

\Cref{lem:ind_terminal} now provides a relative section $S_{\alpha}$ of $\Init_{\Th_{\Pi,\BoolTy}}^{\bullet}$.

Internally to $\CPsh^{\{\diamond\}}$, take any global boolean term $(b : \mmod{F^{\ast}} \to \Tm\ \BoolTy)$.
Note that since $F : \{\diamond\} \to \Init_{\Th_{\Pi,\BoolTy}}$ selects the empty context, the dependent right adjoint $\modcolor{F^{\ast}}$ restricts presheaves over $\Init_{\Th_{\Pi,\BoolTy}}$ to the empty context.
Thus $b$ is indeed a closed boolean term.

We can apply the action of the relative section $S_{\alpha}$ to $b$.
We obtain an element $S_{\alpha}^{\Tm}\ b$ of $\Ty^{\bullet}\ (S_{\alpha}^{\Ty}\ (\lambda\ \mmod{F^{\ast}} \mapsto \BoolTy))\ b$.
We compute $S_{\alpha}^{\Ty}\ (\lambda\ \mmod{F^{\ast}} \mapsto \BoolTy) = \BoolTy^{\bullet}$.
Therefore we have an element of $\BoolTy^{\bullet}\ b$.
This proves that $b$ is canonical.

\subsection{Normalization}\label{sec:applNorm}

The normalization proof of~\cite{CoquandNormalization} can be expressed using the induction principle relative to $F : \CRen \to \Init_{\Th_{\Pi,\BoolTy}}$, namely~\cref{lem:ind_renamings}.

Internally to $\CPsh^{\CRen}$, we have inductively defined families $\NfTy : \Ty \to \SPsh^{\CRen}$, $\Nf : \forall A \to (\mmod{F^{\ast}} \to \Tm\ (A\ \mmod{F^{\ast}})) \to \SPsh^{\CRen}$ and $\Ne : \forall A \to (\mmod{F^{\ast}} \to \Tm\ (A\ \mmod{F^{\ast}})) \to \SPsh^{\CRen}$ describing the normal forms of types, the normal forms of terms and the neutral terms.

A displayed type $A^{\bullet} : \Ty^{\bullet}\ A$ is a tuple $(A^{\bullet}_{0},A^{\bullet}_{p},A^{\bullet}_{\alpha},A^{\bullet}_{\beta})$ where:
\begin{itemize}
  \item $A^{\bullet}_{0} : \NfTy\ A$ is a witness that the type $A$ admits a normal form;
  \item $A^{\bullet}_{p} : (\mmod{F^{\ast}} \to \Tm\ (A\ \mmod{F^{\ast}})) \to \SPsh^{\CRen}$ is a proof-relevant logical predicate over terms of type $A$, valued in presheaves over $\CRen$;
  \item $A^{\bullet}_{\alpha} : \forall a \to \Ne\ a \to A^{\bullet}_{p}\ a$ is a natural transformation, usually called \emph{reflect} or \emph{quote}, witnessing the fact that all neutral terms satisfy the logical predicate $A^{\bullet}_{p}$;
  \item $A^{\bullet}_{\beta} : \forall a \to A^{\bullet}_{p}\ a \to \Nf\ a$ is a natural transformation, usually called \emph{reify} or \emph{unquote}, witnessing the fact that terms satisfying the logical predicate $A^{\bullet}_{p}$ admit normal forms.
\end{itemize}

\subsection{Canonicity for cubical type theory}\label{sec:applCanCTT}

The proof of canonicity for cubical type theory from~\cite{HoCanonicityCTT} can be reformulated using the induction principle relative to $F : \square \to \Init_{\mathsf{CTT}}$, \ie{}  \cref{lem:ind_cubes}.
Internally to $\CPsh^{\square}$, we have a universe $\UU^{\mathsf{fib}}$ of fibrant cubical sets.
A displayed type $A^{\bullet} : \Ty^{\bullet}\ A$ is a logical predicate valued in fibrant cubical sets:
\[ A^{\bullet} : (\mmod{F^{\ast}} \to \Tm\ (A\ \mmod{F^{\ast}})) \to \UU^{\mathsf{fib}}. \]

\subsection{Normalization for cubical type theory}\label{sec:applNormCTT}

The proof of normalization for cubical type theory from~\cite{NormalizationCTT} can be reformulated using the induction principle relative to $F : \CA_{\square} \to \Init_{\mathsf{CTT}}$, that is \cref{lem:ind_renaming_cubes}.

\subsection{Syntactic parametricity}\label{sec:applPar}

Syntactic parametricity can be described by a displayed model without context extensions over $\id : \Init_{\Th} \to \Init_{\Th}$.
A displayed type $A^{\bullet} : \Ty^{\bullet}\ A$ is a type-valued logical predicate $A^{\bullet} : \Tm\ A \to \Ty$.

However, we do not have a relative section of this displayed model.
We have the displayed inserter category $\CI(\Init_{\Th}^{\bullet})$; but the map $I : \CI(\Init_{\Th}^{\bullet}) \to \Init_{\Th}$ does not admit a section.
Various applications of syntactic parametricity can use various functors into $\CI(\Init_{\Th}^{\bullet})$.
For instance, if we only care about closed terms, we can consider the functor $\{\diamond\} \to \CI(\Init_{\Th}^{\bullet})$.
This is sufficient to prove that any closed term $f : \Tm\ ((A : \UU) \to A \to A)$ is homotopic to the polymorphic identity function.

\subsection{Conservativity}\label{sec:applConserv}

The conservativity of extensional type theory (ETT) over intensional type theory (ITT) can be obtained using an induction principle relative to  $F : \CRen_{\mathsf{ITT}} \to \Init_{\mathsf{ETT}}$.

The proof involves some congruence $(\sim)$ over $\Init_{\mathsf{ITT}}$; this consists of equivalence relations on types and terms preserving all type-theoretic operations.
A displayed type $A^{\bullet} : \Ty^{\bullet}\ A$ is an element of the quotient of $(A_{0} : \Ty_{\mathsf{ITT}}) \times (F^{\Ty}\ A_{0} = A)$ by this relation.
Displayed terms are defined similarly.
Hofmann's proof involves the quotient model $(\Init_{\mathsf{ITT}}/\sim)$, but by working internally to $\CPsh^{\CRen_{\mathsf{ITT}}}$ we can avoid the (easy but tedious) construction of that model.


\section{Future work}


While we have focused on the type theory $\Th_{\Pi,\BoolTy}$ in this document, we hope that it is clear that these constructions generalize to other type theories.
Nevertheless, it would be good to actually prove that all of these constructions can be done for arbitrary type theories.
In~\cite{QIITs}, a syntactic definition of quotient inductive-inductive type signature is given, along with semantics.
It should be possible to extend this approach and give general definitions of models, morphisms, displayed models (without context extensions), \etc, for arbitrary type theory signatures following \cite{paoloHOAS}.
Other definitions of the general notion of type theory have been proposed recently~\cite{GeneralDefinitionDTT,GeneralFrameworkSemanticsTT}.
One advantage of the approach of~\cite{QIITs} is that its semantics are given by induction on the syntax of signatures; and thus the definitions of models, morphisms, \etc, for a given type theory signature can be computed.


The current proof assistants based on dependent type theory natively support various variants of inductive types.
We believe that the ideas presented in this paper could help towards the implementation of proof assistants that natively support syntax with bindings.


We have used our framework to give short proofs of canonicity, normalization, parametricity and conservativity results for dependent type theory.
We see them as non-trivial results that are also well-understood; and thus serve as good benchmarks for our induction principles.
We hope to apply this framework the proof of novel results in the future.


We would also like to extend this work to other kinds of context extensions and binding structures, such as affine binding structures.
An affine variable cannot be duplicated (in the absence of additional structure) and can therefore be used at most once.
This should give a description of the category of weakenings as the initial object of some category.
The category of weakenings is similar to the category of renamings, but without the ability to duplicate variables.
Using the category of weakenings in a normalization proof allows for non-linear equations in the type theory, such as the group equation $x \cdot x^{-1} = 1$.
The internal language of presheaves over the category of weakenings is also the right setting for proving the decidability of equality.


\newpage

\bibliography{main}

\appendix

\section{The dependent right adjoints \texorpdfstring{$F^{\ast}$}{F\textasciicircum *} and \texorpdfstring{$F_{\ast}$}{F\_{}*}}\label{sec:dra_constructions}

In this section we give explicit definitions of the adjunctions $F_{!} \dashv F^{\ast}$ and $F^{\ast} \dashv F_{\ast}$ and their dependent versions $F_{!} \dashv \modcolor{F^{\ast}}$ and $F^{\ast} \dashv \modcolor{F_{\ast}}$, given a functor $F : \CC \to \CD$.
These definitions are standard category theory, we only record them for the benefit of the reader.

The precomposition functor:
\begin{alignat*}{3}
  & F^{\ast} && :{ } && \CPsh^{\CD} \to \CPsh^{\CC} \\
  & \abs{F^{\ast}\ X'}_\Gamma && \triangleq{ } && \abs{X'}_{F\ \Gamma} \\
  & x[\rho]_{F^{\ast}\ X'} && \triangleq{ } && x[F\ \rho]_{X'} \\
  & \abs{F^{\ast}\ f'}_\Gamma\ x && \triangleq{ } && \abs{f'}_{F\ \Gamma}\ x
\end{alignat*}
Its left adjoint:
\begin{alignat*}{3}
  & F_{!} && :{ } && \CPsh^{\CC} \to \CPsh^{\CD} \\
  & \abs{F_{!}\ X}_{\Gamma'} && \triangleq{ } &&
  \big((\Gamma:\abs{\CC})\times\CD(\Gamma'\to F\ \Gamma)\times\abs{X}_{\Gamma}\big) /{\sim} \text{ where } (\Gamma,\delta',x[\rho]_X) \sim (\Delta,F\ \rho\circ\delta',x) \\
  & (\Gamma,\delta',x)[\rho']_{F_{!}\ X} && \triangleq{ } && (\Gamma,\delta'\circ\rho',x) \\
  & \abs{F_{!}\ f}_{\Gamma'}\ (\Gamma,\delta',x) && \triangleq{ } && (\Gamma,\delta',\abs{f}_\Gamma\ x)
\end{alignat*}
The unit of the adjunction $F_{!} \dashv F^{\ast}$ is given by
\begin{alignat*}{3}
  & \eta_X && :{ } && X \to (F^{\ast}\ (F_{!}\ X)) \\
  & \abs{\eta_X}_\Gamma\ x && \triangleq{ } && (\Gamma,\id_{F\ \Gamma},x)
\end{alignat*}
while the hom-set definition of the adjunction is given by an isomorphism
\[
  \phi : (F_{!}\ X\to X') \cong (X\to F^{\ast}\ X') : \phi^{-1}
\]
natural in $X$ and $X'$, where $\phi\ f' \triangleq
F^{\ast}\ f'\circ\eta_X$ \ie $\abs{\phi\ f'}_\Gamma\ x =
\abs{f'}_{F\ \Gamma}\ (\abs{\eta_X}_\Gamma\ x)$ and
$\abs{\phi^{-1}\ f}_{\Gamma'}\ (\Gamma,\delta',x) \triangleq
(\abs{f}_\Gamma\ x)[\delta']_{X}$. The dependent right adjoint of $F_{!}$:
\begin{alignat*}{3}
  & \modcolor{F^{\ast}} && :{ } && \CDepPsh^{\CD}\ (F_{!}\ X) \to \CDepPsh^{\CC}\ X \\
  & \abs{\modcolor{F^{\ast}}\ A'}_\Gamma\ x && \triangleq{ } && \abs{A'}_{F\ \Gamma}\ (\abs{\eta_X}_\Gamma\ x) \\
  & a'[\rho]_{\modcolor{F^{\ast}}\ A'} && \triangleq{ } && a'[F\ \rho]_{A'}
\end{alignat*}
We have $\modcolor{F^{\ast}}\ A' \circ f = \modcolor{F^{\ast}}\ (A'\circ F_{!}\ f)$. The dependent adjunction $F_{!} \dashv \modcolor{F^{\ast}}$ is an isomorphism
\[
\psi : \CPsh^{\CD}\big((x':F_{!}\ X) \to A'(x')\big) \cong \CPsh^{\CC}\big((x:X)\to (\modcolor{F^{\ast}}\ A')(x)\big) : \psi^{-1}
\]
natural in $X$, where $\abs{\psi\ f'}_\Gamma\ x \triangleq \abs{f'}_{F\ \Gamma}\ (\abs{\eta_X}_\Gamma\ x)$ and $\abs{\psi^{-1}\ f}_{\Gamma'}\ (\Gamma,\delta',x) \triangleq (\abs{f}_\Gamma x)[\delta']_{A'}$.

The right adjoint of $F^{\ast}$:
\begin{alignat*}{3}
  & F_{\ast} && :{ } && \CPsh^{\CC} \to \CPsh^{\CD} \\
  & \abs{F_{\ast}\ X}_{\Gamma'} && \triangleq{ } && \big\{ \alpha:(\Gamma:\abs{\CC})(\delta':\CD(F\ \Gamma\to\Gamma'))\to\abs{X}_{\Gamma} \mid \alpha\ \Gamma\ (\delta'\circ F\ \sigma) = (\alpha\ \Delta\ \delta')[\sigma]_X \big\} \\
  & \alpha[\rho']_{F_{\ast}\ X} && \triangleq{ } && \lambda \Gamma\ \delta' \mapsto \alpha\ \Gamma\ (\rho'\circ\delta') \\
  & \abs{F_{\ast}\ f}_{\Gamma'}\ \alpha && \triangleq{ } && \lambda \Gamma\ \delta' \mapsto \abs{f}_\Gamma\ (\alpha\ \Gamma\ \delta')
\end{alignat*}
The adjunction is an isomorphism $\phi : (F^{\ast}\ X'\to X) \cong
(X'\to F_{\ast}\ X) : \phi^{-1}$ natural in $X$ and $X'$ where $\abs{\phi\ f}_{\Gamma'}\ x' \triangleq \lambda \Gamma\ \delta' \mapsto \abs{f}_\Gamma\ (x'[\delta']_{X'})$ and
$\abs{\phi^{-1}\ f'}_{\Gamma}\ x' \triangleq \abs{f'}_{F\ \Gamma}\ x'\ \Gamma\ \id_{F\ \Gamma}$.
The dependent right adjoint of $F^{\ast}$:
\begin{alignat*}{5}
  & \modcolor{F_{\ast}} && :{ } && \CDepPsh^{\CC}\ (F^{\ast}\ X') \to \CDepPsh^{\CD}\ X' \\
  & \abs{\modcolor{F_{\ast}}\ A}_{\Gamma'}\ x' && \triangleq{ } && \big\{ \alpha:(\Gamma:\abs{\CC})(\delta':\CD(F\ \Gamma\to\Gamma'))\to\abs{A}_{\Gamma}\ (x'[\delta']_{X'}) \mid \\
  & && && \hphantom{\big\{{}} \alpha\ \Gamma\ (\delta'\circ F\ \sigma) = (\alpha\ \Delta\ \delta')[\sigma]_A \big\} \\
  & \alpha [\rho']_{\modcolor{F_{\ast}}\ A} && \triangleq{ } && \lambda \Gamma\ \delta' \mapsto \alpha\ \Gamma\ (\rho'\circ\delta')
\end{alignat*}
We have $\modcolor{F_{\ast}}\ A \circ f' = \modcolor{F^{\ast}}\ (A\circ F_{!}\ f')$. The dependent adjunction $F^{\ast} \dashv \modcolor{F_{\ast}}$ is an isomorphism
\[
\psi : \CPsh^{\CC}\big((x:F^{\ast}\ X') \to A(x)\big) \cong \CPsh^{\CD}\big((x':X')\to (\modcolor{F_{\ast}}\ A)(x')\big) : \psi^{-1}
\]
natural in X' where $\abs{\psi\ f}_{\Gamma'}\ x' \triangleq \lambda\Gamma\ \delta'\mapsto \abs{f}_\Gamma\ (x'[\delta']_{X'})$ and $\abs{\psi^{-1}\ f'}_\Gamma\ x' \triangleq \abs{f'}_{F\ \Gamma}\ x'\ \Gamma\ \id_{F\ \Gamma}$.


\section{Multimodal Type Theory}\label{sec:mtt}

The proofs and constructions performed in the appendix involve more than two presheaf categories and more than a single dependent right adjoint.
We rely on Multimodal Type Theory~\cite{MTT} to provide a single language that embeds the internal languages of all of those presheaf categories and the dependent right adjoints between them.

Our variant of Multimodal Type Theory differs from the one presented in~\cite{MTT} in a couple of ways.
We keep the same syntax for dependent right adjoints as in \cref{sec:dras}; whereas~\cite{MTT} uses weak dependent right adjoints instead, which come with a positive elimination rule instead of the operation $(-\ \mmod{\mu})$.
So as to remove some of the ambiguities of the informal syntax and improve readability in the presence of multiple modalities, we annotate locks with \emph{tick variables}.
The extension of the syntax of Multimodal Type Theory by ticks was used by~\cite{Transpension} for the same purpose.
Ticks were originally introduced in~\cite{ClocksAreTicking}.

\subsection{Multiple modalities}

Multiple modalities are given semantically by multiple dependent right adjoints.
Given a functor $F : \CC \to \CD$, we already have two dependent right adjoints $\modcolor{F^{\ast}}$ and $\modcolor{F_{\ast}}$, which give modalities $(\mmod{F^{\ast}} \to -)$ and $(\mmod{F_{\ast}} \to -)$.
Dependent right adjoints can be composed, and we also have modalities $(\mmod{F_{\ast}F^{\ast}} \to -)$, $(\mmod{F^{\ast}F_{\ast}} \to -)$, \etc, where $(\mmod{F_{\ast}F^{\ast}} \to -) = (\mmod{F_{\ast}} \mmod{F^{\ast}} \to -)$.

\subsubsection{Ticks}

In presence of multiple modalities, or of non-trivial relations between the modalities, the notation $(-\ \mmod{\mu})$ becomes ambiguous.
Suppose for instance that $\modcolor{\mu}$ is a idempotent dependent right adjoint ($\modcolor{\mu \mu} = \modcolor{\mu}$).
Then for any context $\Gamma$, we have $\Gamma, \mmod{\mu}, \mmod{\mu} = \Gamma, \mmod{\mu}$.
If we write $(a\ \mmod{\mu})$ in the ambient context $(\Gamma, \mmod{\mu})$, it is unclear whether the subterm $a$ should live in the context $\Gamma$ or $\Gamma, \mmod{\mu}$.

To avoid this kind of ambiguity, we will annotate locks with \emph{ticks}.
In the above example, we would have $\Gamma, \emod{\mfm}{\mu}, \emod{\mfn}{\mu} = \Gamma, \emod{\mfm\mfn}{\mu}$; and we would write either $(a\ \smod{\mfn})$ if $a$ lives over $\Gamma,\emod{\mfm}{\mu}$ or $(a\ \smod{\mfm \mfn})$ if $a$ lives over $\Gamma$.

We use $\tickcolor{\mfm}$, $\tickcolor{\mfn}$, $\tickcolor{\mfo}$, \etc for tick variables.
The tick variables refer to the locks of the ambient context.
A tick is a formal composition of tick variables, corresponding to the composition of some adjacent locks in the ambient context.
We write $\tickcolor{\bullet}$ for the empty tick, corresponding to the empty composition.
We write $\tickcolor{\overline{\mfm}}$, $\tickcolor{\overline{\mfn}}$, $\tickcolor{\overline{\mfo}}$, \etc to refer to an arbitrary tick.

If $\Gamma$ is a context, then the subterms of $(\emod{\mfm}{\mu} \to -)$ and $(\lambda\ \emod{\mfm}{\mu} \mapsto -)$ live over the context $\Gamma, \emod{\mfm}{\mu}$.

The operation $(-\ \smod{\overline{\mfm}})$ now unbinds the last tick variable of the context; or more generally some suffix of the tick variables.
The ordinary variables that occur after these tick variables are implicitly dropped from the current context.

We omit ticks when no ambiguity can arise.
In fact, we don't need to use ticks outside of this section.

\subsubsection{Morphisms between modalities}

Finally, we have morphisms between modalities.
If $\modcolor{\mu}$ and $\modcolor{\nu}$ are two parallel dependent right adjoints, whose left adjoints are respectively $L_{\mu}$ and $L_{\nu}$, a morphism $\natcolor{\alpha} : \modcolor{\mu} \Ra \modcolor{\nu}$ is a natural transformation $\alpha : L_{\mu} \Ra L_{\nu}$.
For example, given $F : \CC \to \CD$, we have a counit $\natcolor{\varepsilon^{F}} : \modcolor{F_{\ast}} \modcolor{F^{\ast}} \Ra \modcolor{1}$, and a unit $\natcolor{\eta^{F}} : \modcolor{1} \Ra \modcolor{F^{\ast}} \modcolor{F_{\ast}} $, induced by the adjunction $(F_{!} \dashv F^{\ast})$.

Given $\natcolor{\alpha} : \modcolor{\mu} \Ra \modcolor{\nu}$, we obtain a coercion operation $-\ekey{\alpha}{\overline{\mfm}}{\overline{\mfn}}$ that sends types and terms from the context $\Gamma, \emod{\overline{\mfn}}{\nu}$ to the context $\Gamma, \emod{\overline{\mfm}}{\mu}$.
Semantically, this operation is the presheaf restriction operation of types and terms along the morphism $\abs{\alpha}_{\Gamma} : (\Gamma, \emod{\overline{\mfm}}{\mu}) \to (\Gamma, \emod{\overline{\mfn}}{\nu})$.

This induces a map
\begin{alignat*}{3}
  & \mathsf{coe}_{\alpha} && :{ } && \forall (A : \emod{\mfn}{\nu} \to \SPsh) \to (\emod{\mfn}{\nu} \to A\ \smod{\mfn}) \to (\emod{\mfm}{\mu} \to (A\ \smod{\mfn})\ekey{\alpha}{\mfm}{\mfn}) \\
  & \mathsf{coe}_{\alpha}\ a && \triangleq{ } && \lambda\ \emod{\mfm}{\mu} \mapsto (a\ \smod{\mfn})\ekey{\alpha}{\mfm}{\mfn}
\end{alignat*}

For another example, consider composable functors $F : \CC \to \CD$ and $G : \CD \to \CE$.
We have a natural isomorphism $\alpha : (F G)_{!} \simeq F_{!} G_{!}$.
This induces isomorphisms $(\emod{\mfm}{{(FG)}^{\ast}} \to A) \simeq (\emod{\mff\mfg}{F^{\ast}G^{\ast}} \to A)$, whose components are
\[ \lambda a\ \emod{\mff\mfg}{F^{\ast}G^{\ast}} \mapsto (a\ \smod{\mfm})\ekey{\alpha}{\mff \mfg}{\mfm} \]
and
\[ \lambda a\ \emod{\mfm}{(FG)^{\ast}} \mapsto (a\ \smod{\mff\mfg})\ekey{\alpha\inv}{\mfm}{\mff \mfg}. \]

We omit the natural transformation when it can be inferred.
For instance we could have written $\skey{\mff\mfg}{\mfm}$ and $\skey{\mfm}{\mff\mfg}$ above.

More generally, the operation $\ekey{\alpha}{\overline{\mfm}}{\overline{\mfn}}$ can be applied to any type or term over a context of the form $(\Gamma, \emod{\overline{\mfn}}{\nu}, \Delta)$ to send it to the context $(\Gamma, \emod{\overline{\mfm}}{\mu}, \Delta\ekey{\alpha}{\overline{\mfm}}{\overline{\mfn}})$, where $\Delta$ is an extension of the context $(\Gamma, \emod{\overline{\mfn}}{\nu})$ by variable bindings and locks, and $\Delta\ekey{\alpha}{\overline{\mfm}}{\overline{\mfn}}$ applies the operation $\ekey{\alpha}{\overline{\mfm}}{\overline{\mfn}}$ to every type in $\Delta$.
In that case it is interpreted semantically by restriction along the weakening $(\Gamma, \emod{\overline{\mfm}}{\mu}, \Delta\ekey{\alpha}{\overline{\mfm}}{\overline{\mfn}}) \to (\Gamma, \emod{\overline{\mfm}}{\mu}, \Delta)$ of $\abs{\alpha}_{\Gamma} : (\Gamma, \emod{\overline{\mfm}}{\mu}) \to (\Gamma, \emod{\overline{\mfn}}{\nu})$.

The operation $\ekey{\alpha}{\overline{\mfm}}{\overline{\mfn}}$ commutes with all natural type-theoretic operations.
For example, $(A \times B)\ekey{\alpha}{\overline{\mfm}}{\overline{\mfn}} = (A\ekey{\alpha}{\overline{\mfm}}{\overline{\mfn}} \times B\ekey{\alpha}{\overline{\mfm}}{\overline{\mfn}})$.

It commutes with binders:
\[ ((a : A) \to B\ a)\ekey{\alpha}{\overline{\mfm}}{\overline{\mfn}} = (a : A\ekey{\alpha}{\overline{\mfm}}{\overline{\mfn}}) \to B\ekey{\alpha}{\overline{\mfm}}{\overline{\mfn}}\ a. \]
Note that $\ekey{\alpha}{\overline{\mfm}}{\overline{\mfn}}$ is not applied to the bound variable $a$, as it is already applied to the type $A$ of $a$.

It also commutes with $(\mmod{\mu} \to -)$ and $(\lambda\ \mmod{\mu} \mapsto -)$ for a dependent right adjoint $\mu$:
\[ (\emod{\mfo}{\mu} \to A)\ekey{\alpha}{\overline{\mfm}}{\overline{\mfn}} = (\emod{\mfo}{\mu} \to A\ekey{\alpha}{\overline{\mfm}}{\overline{\mfn}}) \]
\[ (\lambda\ \emod{\mfo}{\mu} \mapsto a)\ekey{\alpha}{\overline{\mfm}}{\overline{\mfn}} = (\lambda\ \emod{\mfo}{\mu} \mapsto a\ekey{\alpha}{\overline{\mfm}}{\overline{\mfn}}) \]

It commutes with $(-\ \smod{\mfo})$ when $\tickcolor{\mfo}$ is a tick variable that is bound in $\Delta$.

The operation $\ekey{\alpha}{\overline{\mfm}}{\overline{\mfn}\mfq}$ (where $\tickcolor{\overline{\mfn}\mfq}$ is a non-empty composition of tick variables ending in $\tickcolor{\mfq}$) can only get stuck on $(-\ \smod{\mfq})$ (or more generally on $(-\ \smod{\overline{\mfo}\mfq})$).
The operation $\ekey{\alpha}{\overline{\mfm}}{\bullet}$ (where $\tickcolor{\bullet}$ is the empty composition of ticks) can only be stuck on a variable.

Finally, these operations satisfy some $2$-naturality conditions.
Given two vertically composable morphisms $\natcolor{\alpha} : \modcolor{\mu} \Ra \modcolor{\nu}$ and $\natcolor{\beta} : \modcolor{\nu} \Ra \modcolor{\xi}$, we have
\[ (-)\ekey{\beta}{\mfn}{\mfx}\ekey{\alpha}{\mfm}{\mfn} = (-)\ekey{\alpha\beta}{\mfm}{\mfx}. \]
Given $\natcolor{\alpha} : \modcolor{\mu} \Ra \modcolor{\nu}$ and a dependent right adjoint $\modcolor{\xi}$ such that the whiskering $\natcolor{\alpha}\modcolor{\xi}$ can be formed, we have
\[ (-)\ekey{\alpha\modcolor{\xi}}{\mfm\mfx}{\mfn\mfx} = (-)\ekey{\alpha}{\mfm}{\mfn}. \]
Similarly, when we can form the whiskering $\modcolor{\xi}\natcolor{\alpha}$, we have
\[ (-)\ekey{\modcolor{\xi}\alpha}{\mfx\mfm}{\mfx\mfn} = (-)\ekey{\alpha}{\mfm}{\mfn}. \]


\section{Constructions and Proofs}\label{sec:proofs}

\subsection{Preservation of context extensions}\label{sec:preserv}

In this subsection, we show that preservation of context extensions as defined in \cref{def:preserves_ext} corresponds externally to preservation of context extension as usually defined.
Let $\int$ denote the category of elements functor.

\begin{lemma} \label{lem:loc-rep-via-left-adj}
A dependent presheaf $Y : \CDepPsh^\CC\ X$ is locally representable exactly if the projection functor $\int (X, Y) \to \int X$ is a left adjoint.
\end{lemma}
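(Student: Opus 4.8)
The plan is to prove both implications by exhibiting an explicit adjoint in each direction and checking the hom-set bijection; the universal property of the extended contexts from \cref{locally-representable} does essentially all of the work. Recall the setup: an object of $\int (X, Y)$ is a triple $(\Gamma, x, y)$ with $\Gamma : \abs{\CC}$, $x : \abs{X}_\Gamma$ and $y : \abs{Y}_\Gamma\ x$, and a morphism $(\Delta, x', y') \to (\Gamma, x, y)$ is a $\CC$-morphism $\rho : \Delta \to \Gamma$ with $x[\rho] = x'$ and $y[\rho] = y'$; objects and morphisms of $\int X$ are the same with the $y$-components erased, and $\pi : \int(X,Y) \to \int X$ erases them. "$\pi$ is a left adjoint" is to be read as "$\pi$ admits a right adjoint".

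For the forward direction, assuming $Y$ is locally representable, I would define $R : \int X \to \int(X, Y)$ on objects by $R(\Gamma, x) \triangleq (\Gamma \rhd Y_{\mid x},\ x[\bm{p}^{Y}_{x}],\ \bm{q}^{Y}_{x})$ and on morphisms using the extended-morphism operation $\angles{-,-}$, and then verify $\pi \dashv R$ by exhibiting the bijection
\[ \Hom_{\int(X,Y)}\big((\Delta, x', y'),\ R(\Gamma, x)\big)\ \cong\ \Hom_{\int X}\big((\Delta, x'),\ (\Gamma, x)\big) \]
as the pair of mutually inverse assignments $\sigma \mapsto \bm{p}^{Y}_{x} \circ \sigma$ and $\rho \mapsto \angles{\rho, y'}$. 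Here $y'$ is the element already carried by the domain object, and once $x[\rho] = x'$ it lives in $\abs{Y}_\Delta\ x' = \abs{Y_{\mid x}}\ \rho$, so $\angles{\rho, y'}$ is defined. Well-definedness of both assignments, the fact that they are mutually inverse, and naturality in both arguments are all immediate from the equations $\bm{p}^{Y}_{x} \circ \angles{\rho, y'} = \rho$ and $\bm{q}^{Y}_{x}[\angles{\rho, y'}] = y'$ together with the uniqueness clause of \cref{locally-representable}.

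For the converse, suppose $\pi \dashv R$. Fix $\Gamma$ and $x : \abs{X}_\Gamma$ and write $R(\Gamma, x) = (\Gamma', x', y')$; the counit component $\varepsilon_{(\Gamma,x)} : \pi R(\Gamma, x) \to (\Gamma, x)$ is then a morphism $(\Gamma', x') \to (\Gamma, x)$ of $\int X$, i.e. a $\CC$-map $p : \Gamma' \to \Gamma$ with $x[p] = x'$. I claim $(\Gamma' \xrightarrow{p} \Gamma)$ represents $Y_{\mid x}$ over the slice $\CC/\Gamma$, with generic element $y'$ regarded as an element of $\abs{Y}_{\Gamma'}\ (x[p]) = \abs{Y_{\mid x}}\ p$. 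For a given $\rho : \Delta \to \Gamma$, transposing along $\pi \dashv R$ and using that the transpose of a morphism $\tau : (\Delta, x[\rho], y) \to R(\Gamma, x)$ of $\int(X,Y)$ is $\varepsilon_{(\Gamma,x)} \circ \pi(\tau)$, one gets a bijection, natural in $(\Delta, \rho)$, between elements $y : \abs{Y_{\mid x}}\ \rho$ and $\CC$-morphisms $\tau : \Delta \to \Gamma'$ with $p \circ \tau = \rho$ — i.e. elements of $(\CC/\Gamma)\big((\Delta, \rho), (\Gamma', p)\big)$ — sending $y$ to the transpose of $\rho$ and $\tau$ back to $y'[\tau]$. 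This is exactly the representability of $Y_{\mid x}$ in $\CC/\Gamma$; it then remains only to note that the assignment $(\Gamma, x) \mapsto (\Gamma \rhd Y_{\mid x}, \dots)$ so obtained is functorial, which again follows from uniqueness of extended morphisms.

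The step I expect to be the genuine obstacle is not mathematical but bookkeeping: fixing the variance convention for $\int$ and the composition order so that all the displayed equations typecheck, and checking that the hom-set bijections in both directions are natural rather than merely pointwise, so that they assemble into adjunctions. Once the conventions are pinned down, everything reduces to the defining equations of \cref{locally-representable} and the triangle identities.
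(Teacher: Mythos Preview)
Your proposal is correct and takes essentially the same approach as the paper: the paper's proof is a one-line sketch that identifies the right adjoint at $(\Gamma,x)$ as $(\Gamma \rhd Y_{\mid x},\ x[\bm{p}^{Y}_{x}],\ \bm{q}^{Y}_{x})$ with counit $\bm{p}^{Y}_{x}$, which is exactly your $R$, and leaves the remaining verification (including the converse direction) implicit as an ``easy computation''. Your write-up simply spells out both directions of that computation in full.
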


\begin{proof}
This is an easy computation.
In the notation of \cref{locally-representable}, at object $(\Gamma, x)$, the right adjoint is $(\Gamma \rhd Y_{\mid x}, x[\bm{p}^{Y}_{x}], \bm{q}^{Y}_{x})$ and the counit is $\bm{p}^{Y}_{x} : (\Gamma \rhd Y_{\mid x}, x[\bm{p}^{Y}_{x}]) \to (\Gamma, x)$.
\end{proof}

The setting of the following statement is a functor $F : \CC \to \CD$.
We had omitted universe indices in \cref{def:preserves_ext} for readability of the main body.
Elaborating, we say $F^A$ \emph{preserves $i$-small virtual context extension} if the stated condition is satisfied for $P$ an $i$-small dependent presheaf.
We say that $F^A$ \emph{preserves virtual context extension} if it preserves $i$-small virtual context extension for all $i$.

\begin{proposition} \label{preservation-of-extension}
  Let $X$ be an $i$-small presheaf over $\CC$, $A^{\CC}$ an $i$-small dependent presheaf over $X$, $A^{\CD}$ an $i$-small dependent presheaf over $F_{!}\ X$, and $F^{A}$ a dependent natural transformation from $A^{\CC}$ to $\modcolor{F^{\ast}}\ A^{\CD}$ over $X$.
  This data corresponds to the premises of \cref{def:preserves_ext}, interpreted over the context $X$ of the presheaf model $\CPsh^{\CC}$.

  The following conditions are equivalent:
  \begin{enumerate}[label=(\roman*)]
  \item \label{preservation-of-extension:virtual}
  $F^{A}$ preserves $i$-small virtual context extension,
  \item \label{preservation-of-extension:virtual-any}
  $F^{A}$ preserves $j$-small virtual context extension for a fixed $j \geq i$,
  \item \label{preservation-of-extension:virtual-all}
  $F^{A}$ preserves virtual context extension,
  \end{enumerate}
  and assuming that $A^{\CC}$ and $A^{\CD}$ are locally representable:
  \begin{enumerate}[label=(\roman*),resume]
  \item \label{preservation-of-extension:external}
  for $\Gamma : \abs{\CC}$ and $x : \abs{X}_{\Gamma}$, given a representation
  \[
  \bm{q}^{A^\CC}_x : \abs{A^{\CC}_{\mid x}}\ (\Gamma \rhd A^\CC_{\mid x}, \bm{p}^{A^\CC}_x)
  ,\]
  then
  \[
  \abs{F^{A}}_{\Gamma \rhd A^\CC_{\mid x}}\ x[\rho]\ \bm{q}^{A^\CC}_x : \abs{A^{\CD}_{\mid \left(\abs{\eta^F_X}_\Gamma x\right)}}\ (F\ (\Gamma \rhd A^\CC_{\mid x}), F\ \bm{p}^{A^\CC}_x)
  \]
  is a representation.
  \item \label{preservation-of-extension:ugly}
  for $\Gamma$ and $x$ as above, the comparison morphism
  \[ \angles{F\ \bm{p}^{A^{\CC}}_{x}, \abs{F^{A}}_{\Gamma \rhd A^{\CC}_{\mid x}}\ \left(x[\bm{p}^{A^{\CC}}_{x}]\right)\ \bm{q}^{A^{\CC}}_{x}} : F\ (\Gamma \rhd A^{\CC}_{\mid x}) \to (F\ \Gamma) \rhd A^{\CD}_{\mid\left(\abs{\eta^{F}_{X}}_{\Gamma}\ x\right)} \]
  is invertible.
  \end{enumerate}
  Here, $\eta^{F}_{X} : X \to F^{\ast}\ (F_{!}\ X)$ is the component at $X$ of the unit of the adjunction $(F_{!} \dashv F^{\ast})$.
\end{proposition}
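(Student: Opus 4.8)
The plan is to unfold \cref{def:preserves_ext} over the context $X$ and reduce all five conditions to the invertibility of one comparison map of presheaves over $\CD$. Among \ref{preservation-of-extension:virtual}, \ref{preservation-of-extension:virtual-any} and \ref{preservation-of-extension:virtual-all}, every implication except \ref{preservation-of-extension:virtual}$\,\Rightarrow\,$\ref{preservation-of-extension:virtual-all} is immediate by cumulativity of universes, since an $i$-small presheaf is $j$-small for $j \geq i$. To handle the remaining one I would first describe $F^{A}$ externally: using the explicit constructions of $F_{!}$ and $\modcolor{F^{\ast}}$ from \cref{sec:dra_constructions}, the transformation $F^{A}$ amounts to a morphism
\[ \textstyle g : F_{!}\,(\int(X, A^{\CC})) \to \int(F_{!}\ X, A^{\CD}) \]
of presheaves over $\CD$ which lies over $F_{!}\ X$ and whose $A^{\CD}$-component at a point $(\Gamma, x, a)$ of $\int(X, A^{\CC})$ is $\abs{F^{A}}_{\Gamma}\ x\ a$. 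A routine computation in the modal internal language then identifies, for any dependent presheaf $P$ over $\int(F_{!}\ X, A^{\CD})$, the map $\tau$ of \cref{def:preserves_ext} with restriction of $P$-sections along $g$, pulled back at each stage $(\Gamma, x)$ along the coprojection $\abs{\eta^{F}_{X}}_{\Gamma}\ x : \yo_{F\ \Gamma} \to F_{!}\ X$. Hence ``$\tau$ is invertible for every $P$'' is equivalent to ``$g$ is invertible'', a statement carrying no size bound.

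It therefore suffices to show that \ref{preservation-of-extension:virtual} alone forces $g$ to be invertible. I would apply \ref{preservation-of-extension:virtual} to the dependent presheaf $P_{g}$ over $\int(F_{!}\ X, A^{\CD})$ whose fibres are the fibres of $g$; this is $i$-small because its values are subsets of the presheaf $F_{!}(\int(X, A^{\CC}))$, which is $i$-small since $\CC$, $\CD$, $X$ and $A^{\CC}$ all are. The resulting bijection sends the tautological section $z \mapsto z$ of $g^{\ast}P_{g}$ to a section $s$ of $g$, and the equation $\tau(s) = (z \mapsto z)$ reads as $s \cdot g = \id$, so $g$ is an isomorphism. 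With the trivial implications this gives \ref{preservation-of-extension:virtual}$\,\Leftrightarrow\,$\ref{preservation-of-extension:virtual-any}$\,\Leftrightarrow\,$\ref{preservation-of-extension:virtual-all}, all equivalent to ``$g$ invertible''.

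Now assume $A^{\CC}$ and $A^{\CD}$ locally representable. The equivalence \ref{preservation-of-extension:external}$\,\Leftrightarrow\,$\ref{preservation-of-extension:ugly} is uniqueness of representing objects: the comparison morphism of \ref{preservation-of-extension:ugly} is exactly the canonical map from the candidate representation of \ref{preservation-of-extension:external} into the chosen representation of $A^{\CD}_{\mid \abs{\eta^{F}_{X}}_{\Gamma} x}$, and a morphism into a universal element is invertible precisely when its source is universal. For \ref{preservation-of-extension:ugly}$\,\Leftrightarrow\,$\ref{preservation-of-extension:virtual}, I would use the alternative description, recalled after \cref{locally-representable}, of the presheaf of dependent natural transformations out of a locally representable presheaf: with it, $\tau$ at stage $(\Gamma, x)$ becomes literally restriction of sections along the comparison morphism $c_{\Gamma, x} : F\ (\Gamma \rhd A^{\CC}_{\mid x}) \to (F\ \Gamma) \rhd A^{\CD}_{\mid \abs{\eta^{F}_{X}}_{\Gamma} x}$ of \ref{preservation-of-extension:ugly}. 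Equivalently, pulling $g$ back along the coprojection $\yo_{F\ \Gamma} \to F_{!}\ X$ yields exactly $\yo\,c_{\Gamma, x}$, once one observes that $F_{!}$ preserves the pullback of the locally representable display map of $A^{\CC}$ along $x$ and that these coprojections form a colimit cocone, so by universality of colimits they jointly reflect invertibility. Either way, ``$\tau$ invertible for every $P$'' — equivalently, by the previous paragraph, ``$g$ invertible'' — holds iff every $c_{\Gamma, x}$ is invertible, which is \ref{preservation-of-extension:ugly}.

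The step I expect to be the main obstacle is the external reformulation of $\tau$: faithfully translating the modal internal-language construction, which straddles $\CPsh^{\CC}$ and $\CPsh^{\CD}$ and quantifies over the type $A^{\CD}$ sitting under the lock $\mmod{F^{\ast}}$, into the honest external map $g$ and its stagewise restrictions, and then verifying that the size-bounded hypotheses \ref{preservation-of-extension:virtual} and \ref{preservation-of-extension:virtual-any} are already detected by the single $i$-small test presheaf $P_{g}$. The auxiliary claim that $F_{!}$ preserves pullbacks of locally representable display maps, used in the locally representable case, is a separate small lemma that follows from locality of local representability together with cocontinuity of $F_{!}$.
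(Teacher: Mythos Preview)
Your overall strategy --- reduce everything to invertibility of a single comparison map $g$ of presheaves over $\CD$ and then test $\tau$ at the fibres $P_g$ of $g$ --- is appealing, and it is genuinely different from the paper's mate-calculus argument. But there is a real gap in the step you flag as ``routine''.

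You claim that $\tau$, externalized at a stage $(\Gamma, x)$, is restriction of sections along $g$ pulled back along the coprojection $\yo_{F\Gamma} \to F_{!}\ X$. Unfolding carefully, $\tau$ at $(\Gamma, x)$ is precomposition with a map
\[
  g_{(\Gamma, x)} : F_{!}\bigl(\textstyle\int A^{\CC}_{\mid x}\bigr) \longrightarrow \bigl(\textstyle\int(F_{!}\ X, A^{\CD})\bigr) \times_{F_{!} X} \yo_{F\Gamma},
\]
whose domain is $F_{!}$ of the pullback $\int(X, A^{\CC}) \times_{X} \yo_{\Gamma}$, not the pullback of $F_{!}\int(X, A^{\CC})$ along the coprojection. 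These differ by exactly the comparison map $F_{!}(\text{pullback}) \to \text{pullback}(F_{!})$, which is not an isomorphism in general since $F_{!}$ is only cocontinuous. Consequently ``each $g_{(\Gamma, x)}$ invertible'' is the condition equivalent to ``$\tau$ invertible for all $P$'', whereas your global $g$ is only the colimit of the $g_{(\Gamma, x)}$: invertibility of $g$ follows from, but does not imply, invertibility of each $g_{(\Gamma, x)}$. So from ``$g$ invertible'' you cannot conclude~(iii), and your auxiliary lemma that $F_{!}$ preserves these particular pullbacks is what would close the gap --- but cocontinuity and locality of local representability do not give this; it is false in general.

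The fix is small and brings you back to what the paper does. Instead of the single $P_g$, test with the family $P_{(\Gamma, x)}$ whose total object over $\int(F_{!}\ X, A^{\CD})$ is $F_{!}(\int A^{\CC}_{\mid x})$ itself; at stage $(\Gamma, x)$ the tautological identity section then forces a two-sided inverse of $g_{(\Gamma, x)}$, and each $P_{(\Gamma, x)}$ is $i$-small by your size assumptions. Under the equivalence $\CPsh^{\int(F_{!} X, A^{\CD})} \simeq \CPsh^{\CD}/\!\int(F_{!} X, A^{\CD})$, these $P_{(\Gamma, x)}$ are precisely the objects $v_{!}\,(p_{\CC})^{\ast}\,\yo_{(\Gamma, x)}$ at which the paper tests its mate $\overline{\tau}$; so the corrected argument is the paper's argument unwound. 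Your direct computation~(A) of $\tau$ at $(\Gamma, x)$ as restriction along $c_{\Gamma, x}$ in the locally representable case is correct and coincides with the paper's identification of $\overline{\tau}$ at representables with the mate $\theta$.
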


\begin{proof}
We have a strictly commuting square
\begin{equation} \label{preservation-of-extension:categories}
\begin{tikzcd}
  \int (X, A^\CC)
  \ar[r, "v"]
  \ar[d, "p_\CC"]
&
  \int (F_! X, A^\CD)
  \ar[d, "p_\CD"]
\\
  \int X
  \ar[r, "u"]
&
  \int F_! X
\end{tikzcd}
\end{equation}
of categories where $p_\CC$ and $p_\CD$ are projections, $u$ is given by the unit of the adjunction $F_! \dashv F^*$, and $v$ is induced by $F^A$.

Restriction along the maps in~\eqref{preservation-of-extension:categories} induces a strictly commuting square
\begin{equation} \label{preservation-of-extension:restriction}
\begin{tikzcd}
  \CDepPsh^{\CC}\ (X, A^\CC)
&
  \CDepPsh^{\CD}\ (F_! X, A^\CD)
  \ar[l, "v^*"]
\\
  \CDepPsh^{\CC}\ X
  \ar[u, "(p_\CC)^*"]
&
  \CDepPsh^{\CD}\ \int F_! X
  \ar[u, "(p_\CD)^*"]
  \ar[l, "u^*"]
\rlap{.}
\end{tikzcd}
\end{equation}
We have adjunctions $(p_\CC)^* \dashv (p_\CC)_*$ and $(p_\CD)^* \dashv (p_\CD)_*$ where $(p_\CC)_*$ takes $\Pi$-types over $A^\CC$ and $(p_\CD)_*$ takes $\Pi$-types along $A^\CD$.
Regarding the identity as a natural transformation in~\eqref{preservation-of-extension:restriction} from the right-top to the bottom-left composite and taking its mate with respect to these adjunctions, we obtain a natural transformation
\begin{equation} \label{preservation-of-extension:mate-restriction}
\begin{tikzcd}
  \CDepPsh^{\CC}\ (X, A^\CC)
  \ar[d, "(p_\CC)_*"]
&
  \CDepPsh^{\CD}\ (F_! X, A^\CD)
  \ar[l, "v^*"]
  \ar[d, "(p_\CD)_*"]
\\
  \CDepPsh^{\CC}\ X
&
  \CDepPsh^{\CD}\ F_! X
  \ar[l, "u^*"]
  \ar[ul,shorten <>=20pt,Rightarrow,"\tau"]
\rlap{.}
\end{tikzcd}
\end{equation}
This is the comparison map $\tau$ from \cref{def:preserves_ext}.
Conditions~\ref{preservation-of-extension:virtual}, \ref{preservation-of-extension:virtual-any}, \ref{preservation-of-extension:virtual-all} state that $\tau$ is invertible at dependent presheaves over $(F_! X, A^\CD)$ that are $i$-small, $j$-small for a fixed $j \geq i$, and $j$-small for any $j$, respectively.

The above categories may be seen as presheaf categories.
Switching to that presentation and transposing the preceding square to left adjoints, we obtain the natural transformation
\begin{equation} \label{preservation-of-extension:mate-left-kan-extension}
\begin{tikzcd}
  \CPsh^{\int (X, A^\CC)}
  \ar[r, "v_!"]
  \ar[dr,shorten <>=20pt,Rightarrow,"\overline{\tau}"]
&
  \CPsh^{\int (F_! X, A^\CD)}
\\
  \CPsh^{\int X}
  \ar[r, "u_!"]
  \ar[u, "(p_\CC)^*"]
&
  \CPsh^{\int F_! X}
  \ar[u, "(p_\CD)^*"]
\rlap{.}
\end{tikzcd}
\end{equation}
We have that $\tau$ is invertible exactly if $\overline{\tau}$ is invertible.
By Yoneda, $\tau$ is invertible already if $\overline{\tau}$ is invertible at representables.
Reversely, for $\overline{\tau}$ to be invertible at representables, it suffices for $\tau$ to be invertible at applications of $v_! (p_\CC)^*$ to representables.
By our smallness assumptions on $X$, $A^\CC$, $A^\CD$, these applications are $i$-small.
This shows that conditions~\ref{preservation-of-extension:virtual}, \ref{preservation-of-extension:virtual-any}, \ref{preservation-of-extension:virtual-all} are equivalent and hold exactly if $\overline{\tau}$ is invertible at representables (which holds exactly if $\tau$ and $\overline{\tau}$ are invertible).

Assume now that $A^\CC$ and $A^\CD$ are locally representable.
By \cref{lem:loc-rep-via-left-adj}, the functors $p_\CC$ and $p_\CD$ have right adjoints $q_\CC$ and $q_\CD$, respectively.
Regarding the identity as a natural transformation from the left-bottom to the top-right composite in~\eqref{preservation-of-extension:categories}, we take its mate with respect to these adjunctions, obtaining the natural transformation
\begin{equation*} 
\begin{tikzcd}
  \int (X, A^\CC)
  \ar[r, "v"]
  \ar[dr,shorten <>=10pt,Rightarrow,"\theta"]
&
  \int (F_! X, A^\CD)
\\
  \int X
  \ar[r, "u"]
  \ar[u, "q_\CC"]
&
  \int F_! X
  \ar[u, "q_\CD"]
\rlap{.}
\end{tikzcd}
\end{equation*}
Condition~\ref{preservation-of-extension:ugly} states that $\theta$ is invertible when using the explicit description of $q_\CC$ and $q_\CD$ given in the proof of \cref{lem:loc-rep-via-left-adj}.
Condition~\ref{preservation-of-extension:external} states the same thing without referring to particular choices of context extension: given $M : \abs{\int X}$, if $(N, f)$ is terminal in $(p_\CC \downarrow \int X)$, then $(v\ N, u\ f)$ is terminal in $(p_\CD \downarrow \int F_! X)$.
To see that these conditions are equivalent, recall that terminal objects are unique up to (unique) isomorphisms.

The process of taking mates commutes with presheaf category formation.
Thus, the natural transformation $\tau$ in~\eqref{preservation-of-extension:mate-restriction} is equivalent to the restriction action of $\theta$.
It follows that $\overline{\tau}$ in~\eqref{preservation-of-extension:mate-left-kan-extension} is equivalent to the left Kan extension action of $\theta$.
Recall that left Kan extension of representables is given by the original functor.
Thus, $\overline{\tau}$ is invertible at representables exactly if $\theta$ is invertible.
This shows that~\ref{preservation-of-extension:virtual} and~\ref{preservation-of-extension:external} are equivalent.
\end{proof}

\subsection{Displayed categories}

Displayed categories were introduced in~\cite{DispCats}.
The data of a displayed category over a base $\CD$ is equivalent to the data of a category $\CC$ equipped with a functor into $\CD$.
Many structures on functors that may seem non-categorical because they involve equalities of objects, such as fibration structures, are actually well-behaved when seen as structures over displayed categories.
Some of the constructions that follow are more intuitive when thinking about displayed categories instead of functors.
Because Multimodal Type Theory does not have dependent modes, we have to see displayed categories as functors when working internally.

We write $F : \CC \ratri \CD$ if $F$ is a functor that exhibits $\CC$ as a displayed category over $\CD$.
Given an object $x$ of $\CD$, we may write $\abs{\CC}(x)$ (or $\CC(x)$ or $\CC^{\op}(x)$) for the set of objects of $\CC$ displayed over $x$, that is the set containing the objects $x' : \abs{\CC}$ such that $F\ x' = x$.
Given objects $x$ and $y$ of $\CC$ and a morphism $f : \CD(F\ x \to F\ y)$, we write $\CC(x \to_{f} y)$ for the set of morphisms of $\CC$ from $x$ to $y$ that are displayed over $f$.
In other words, $\CC(x \to_{f} y)$ is the set containing the morphisms $f' : \CC(x \to y)$ such that $F\ f' = f$.

\subsection{Sections of displayed models with context extensions}\label{sec:sections}

\begin{definition}
  A \defemph{section} of a displayed model with context extensions $\CS^{\bullet}$ over a functor $F : \CC \ratri \CS$ consists of a section $S : \CS \to \CC$ of $F$ (up to a natural isomorphism $(S \cdot F) \simeq 1_{\CS}$) along with (internally to $\CPsh^{\CS}$):
  \begin{itemize}
    \item Actions on types and terms.
          \begin{alignat*}{3}
            & S^{\Ty} && :{ } && \forall (A : \Ty^{\CS}) \mmod{S^{\ast}} \to \Ty^{\bullet}\ (\lambda \mmod{F^{\ast}} \to A\smkey{S^{\ast}F^{\ast}}{\bullet}) \\
            & S^{\Tm} && :{ } && \forall A\ (a : \Tm^{\CS}\ A)\ \mmod{S^{\ast}} \to \Tm^{\bullet}\ (S^{\Ty}\ A\ \mmod{S^{\ast}})\ (\lambda \mmod{F^{\ast}} \to a\smkey{S^{\ast}F^{\ast}}{\bullet})
          \end{alignat*}
          where $\smkey{S^{\ast}F^{\ast}}{\bullet}$ is coercion over the natural isomophism $(S \cdot F) \simeq 1_{\CS}$.
    \item Such that for every $A : \Ty^{\CS}$, the total action
          \[ (a : \Tm^{\CS}\ A) \to (\mmod{S^{\ast}} \to (a : \mmod{F^{\ast}} \to \Tm^{\CS}\ A\smkey{S^{\ast}F^{\ast}}{\bullet}) \times (\Tm^{\bullet}\ (S^{\Ty}\ A\ \mmod{S^{\ast}})\ a)) \]
          preserves context extensions.

          As in the definition of morphisms of models, we can then derive actions on derived sorts.
          Given any telescope $X$ of $\CS$, we can define, by induction on $X$, a family
          \[ X^{\bullet} : \mmod{S^{\ast}} \to (\mmod{F^{\ast}} \to X^{\CS}\smkey{S^{\ast}F^{\ast}}{\bullet}) \to \SPsh^{\CC}, \]
          along with an action
          \[ S^{X} : (x : X^{\CS})\ \mmod{S^{\ast}} \to X^{\bullet}\ (\lambda \mmod{F^{\ast}} \to x\smkey{\mfs\mff}{\bullet}) \]
          such that the induced total action
          \[ (x : X^{\CS}) \to (\mmod{S^{\ast}} \to (x : \mmod{F^{\ast}} \to X^{\CS}\smkey{\mfs\mff}{\bullet}) \times (x^{\bullet} : X^{\bullet}\ \mmod{S^{\ast}}\ x)) \]
          has a locally representable codomain and preserves context extensions.
          We can then define, using the pattern matching notation of \cref{def:preserves_ext},
          \begin{alignat*}{3}
            & S^{[X]\Ty} && :{ } && (A : X^{\CS} \to \Ty^{\CS})\ \mmod{S^{\ast}}\ x\ (x^{\bullet} : X^{\bullet}\ \mmod{S^{\ast}}\ x) \\
            &&&&& \to \Ty^{\bullet}\ (\lambda \mmod{F^{\ast}} \to A\smkey{S^{\ast}F^{\ast}}{\bullet}\ (x\ \mmod{F^{\ast}})) \\
            & S^{[X]\Ty} && \triangleq{ } && \lambda A\ \mmod{S^{\ast}}\ (S^{X}\ x\ \mmod{S^{\ast}}) \mapsto S^{\Ty}\ (A\ x)\ \mmod{S^{\ast}} \\
            & S^{[X]\Tm} && :{ } && \forall A\ (a : \Tm^{\CS}\ a)\ \mmod{S^{\ast}}\ x\ (x^{\bullet} : X^{\bullet}\ \mmod{S^{\ast}}\ x) \\
            &&&&& \to \Tm^{\bullet}\ (S^{\Ty}\ A\ \mmod{S^{\ast}})\ (\lambda \mmod{F^{\ast}} \to a\smkey{S^{\ast}F^{\ast}}{\bullet}) \\
            & S^{[X]\Tm} && \triangleq{ } && \lambda a\ \mmod{S^{\ast}}\ (S^{X}\ x\ \mmod{S^{\ast}}) \mapsto S^{\Tm}\ (a\ x)\ \mmod{S^{\ast}}
          \end{alignat*}
    \item And such that all type-theoretic operations are preserved. For example,
          \begin{alignat*}{3}
            & S^{\Ty}\ \BoolTy^{\CS} && ={ } && \lambda \mmod{S^{\ast}} \mapsto \BoolTy^{\bullet} \\
            & S^{\Ty}\ (\elimBool^{\CS}\ P\ t\ f\ n) && ={ } && \elimBool^{\bullet} \circleddollar S^{[\Tm]\Ty}\ P \circledast S^{\Tm}\ t \circledast S^{\Tm}\ f \circledast S^{\Tm}\ n \\
            & S^{\Ty}\ (\Pi^{\CS}\ A\ B) && ={ } && \Pi^{\bullet} \circleddollar S^{\Ty}\ A \circledast S^{[\Tm]\Ty}\ B
          \end{alignat*}

          As in the definition of morphisms of models, we can derive computation rules for $S^{[X]\Ty}$ and $S^{[X]\Tm}$.
  \end{itemize}
\end{definition}

\subsection{Displayed presheaf category}

In what follows, we need to consider categories of presheaves over large categories, and in particular categories of presheaves over categories of presheaves.
We have to be a bit careful about sizes.
If $\CC$ is a category, we write $\widehat{\CC}$ for the category of $\omega$-small presheaves (functors into $\CSet_{\omega}$) over $\CC$, and $\CPsh^{\CC}$ for the category of large presheaves (functors into $\CSet_{\omega+1}$) over $\CC$.
We only use the internal language of $\CPsh^{\CC}$.

The goal of this subsection is to construct the factorization of \cref{con:disp_replace_0}.
\restateDefFactorization*
\restateDispReplace*

We fix a model $\CS$ of $\Th_{\Pi,\BoolTy}$ and a functor $F : \CC \ratri \CS$ for this whose subsection.

\begin{definition}
  We define the \defemph{displayed presheaf category} $\CP$ along with a projection functor $G : \CP \ratri \CS$ (which exhibits $\CP$ as a displayed category over $\CS$) and the \defemph{displayed Yoneda embedding} $Y : \CC \to \CP$.
  The displayed Yoneda embedding is a displayed functor over $\CS$: it satisfies $(Y \cdot G) = F$.
  They are analogous to the usual category of presheaves and Yoneda embedding, but in the slice $2$-category $(\CCat / \CS)$, or equivalently in the $2$-category of displayed categories over $\CS$.

  \begin{mathpar}
    \begin{tikzcd}
      \CC \ar[rd, "F"', -{Triangle[open]}] \ar[rr, "Y"] && \CP \ar[ld, "G", -{Triangle[open]}] \\
      & \CS &
    \end{tikzcd}
  \end{mathpar}

  An object $\Gamma^{\dagger}$ of $\CP$ displayed over an object $\Gamma$ of $\CS$ is a dependent presheaf
  \[ \Gamma^{\dagger} : \forall (\Theta : \CC^{\op}) (\gamma : \CS(F\ \Theta \to \Gamma)) \to \CSet_{\omega}. \]

  A morphism $f^{\dagger} : \CP(\Gamma^{\dagger} \to_{f} \Delta^{\dagger})$ displayed over a morphism $f : \CS(\Gamma \to \Delta)$ is a dependent natural transformation
  \[ f^{\dagger} : \forall (\Theta : \CC^{\op}) (\gamma : \CS(F\ \Theta \to \Gamma)) \to \Gamma^{\dagger}\ \gamma \to \Delta^{\dagger}\ (\gamma \cdot f). \]

  Given an object $\Gamma : \abs{\CC}$, we define an object $\abs{Y}_{\Gamma}$ of $\CP$ displayed over $F\ \Gamma$ by
  \[ \abs{Y}_{\Gamma}\ \Theta\ \gamma \triangleq \CC(\Theta \to_{\gamma} \Gamma) \]
  As this is both contravariant in $\Gamma$ and covariant in $\Theta$, this extends to a displayed functor $Y : \CC \to \CP$.
\end{definition}

\begin{proposition}\label{prop:disp_as_gluing}
  The category $\CP$ is equivalent to the comma category $(\widehat{\CC} \downarrow N_{F})$, where $N_{F} : \CS \to \widehat{\CC}$ is the composition of the Yoneda embedding $\yo^{\CS} : \CS \to \widehat{\CS}$ with $F^{\ast} : \widehat{\CS} \to \widehat{\CC}$.
  \qed
\end{proposition}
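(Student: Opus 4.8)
The plan is to recognize both $\CP$ and the comma category $(\widehat{\CC}\downarrow N_F)$ as the Grothendieck construction of one and the same pseudofunctor $H:\CS\to\CCat$, namely the one sending $\Gamma$ to the slice $\widehat{\CC}/N_F\Gamma$ and a morphism $f:\Gamma\to\Delta$ to postcomposition $(N_F f)_!$ along $N_F f:N_F\Gamma\to N_F\Delta$. That $(\widehat{\CC}\downarrow N_F)$ is (isomorphic to) the Grothendieck construction of $H$ is immediate from the definition of a comma category: an object is $\Gamma:\abs{\CS}$ together with an object $(P,\phi:P\to N_F\Gamma)$ of $H\Gamma$, and a morphism over $f$ is a map $g:P\to Q$ with $\psi\circ g=N_F f\circ\phi$, that is, a morphism in $H\Delta$ from $(N_F f)_!(P,\phi)$ to $(Q,\psi)$.

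The key input for recognizing $\CP$ the same way is the standard equivalence $\widehat{\int_\CC X}\simeq\widehat{\CC}/X$, pseudonatural in $X:\widehat{\CC}$, between presheaves on the category of elements of $X$ and objects of the slice over $X$. First I would instantiate it at $X=N_F\Gamma=F^{\ast}(\yo^{\CS}\Gamma)$: the category of elements $\int_\CC N_F\Gamma$ has objects $(\Theta,\gamma:F\Theta\to\Gamma)$ and a morphism $(\Theta',\gamma')\to(\Theta,\gamma)$ is $\rho:\Theta'\to\Theta$ with $\gamma\cdot F\rho=\gamma'$ --- exactly the index category over which the dependent presheaves $\Gamma^\dagger$ defining the objects of $\CP$ live. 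So the fibre of $G:\CP\to\CS$ over $\Gamma$ is $\widehat{\int_\CC N_F\Gamma}\simeq H\Gamma$.

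For morphisms, observe that $(\Theta,\gamma)\mapsto(\Theta,\gamma\cdot f)$ is precisely the functor $\int_\CC N_F\Gamma\to\int_\CC N_F\Delta$ induced by $N_F f$. A morphism $f^\dagger:\CP(\Gamma^\dagger\to_f\Delta^\dagger)$ is a natural family $\Gamma^\dagger\ \gamma\to\Delta^\dagger\ (\gamma\cdot f)$, i.e.\ a map from $\Gamma^\dagger$ to the restriction of $\Delta^\dagger$ along that functor; by the left-Kan-extension adjunction, and using that under $\widehat{\int_\CC -}\simeq\widehat{\CC}/-$ restriction along $\int_\CC(N_F f)$ corresponds to pullback along $N_F f$ (so its left adjoint corresponds to $(N_F f)_!$), this is the same datum as a morphism $(N_F f)_!\,\Gamma^\dagger\to\Delta^\dagger$ in $H\Delta$. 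Identities and composition in $\CP$ transpose, via these adjunctions, to the identities and the Kan-extension pasting that define composition in the Grothendieck construction of $H$. Hence $\CP$ is equivalent to the Grothendieck construction of $H$, and therefore $\CP\simeq(\widehat{\CC}\downarrow N_F)$; one checks along the way that $G$ becomes the comma projection and $Y$ sends $\Gamma\in\CC$ to $(\yo^{\CC}\Gamma\to F^{\ast}\yo^{\CS}(F\Gamma),\,F\Gamma)$ with structure map the action of $F$ on hom-sets.

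The delicate point is this last compatibility: one must verify that the equivalence $\widehat{\int_\CC X}\simeq\widehat{\CC}/X$ is pseudonatural with the correct variance, so that reindexing of dependent presheaves translates to pullback of slices and its left adjoint to postcomposition $(N_F f)_!$ rather than to some other comparison map, and that the coherence isomorphisms of the two presentations of $H$ match. Once that naturality is pinned down, the rest is routine unwinding. A more hands-on alternative avoids pseudofunctors entirely: directly define the functor $(\widehat{\CC}\downarrow N_F)\to\CP$ sending $(P,\Gamma,\phi)$ to the dependent presheaf $(\Theta,\gamma)\mapsto\{\,p\in P(\Theta)\mid\phi_\Theta(p)=\gamma\,\}$, with the evident action on morphisms, and then check full faithfulness and essential surjectivity by hand; this amounts to the same computation.
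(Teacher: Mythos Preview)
Your argument is correct. The identification of the fibres of $G$ over $\Gamma$ with presheaves on $\int_\CC N_F\Gamma$, the standard equivalence $\widehat{\int_\CC X}\simeq\widehat{\CC}/X$, and the transposition of displayed morphisms via the $(N_F f)_!\dashv(N_F f)^*$ adjunction are all sound, and together they exhibit both $\CP$ and $(\widehat{\CC}\downarrow N_F)$ as the Grothendieck construction of the same pseudofunctor. The hands-on alternative you sketch at the end --- taking fibres of $\phi$ --- is equally valid and is exactly the object-level content of that equivalence.

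The paper does not actually give a proof of this proposition: it is stated with a bare \qedsymbol{} and treated as standard. So there is nothing to compare against in terms of approach; your proposal supplies what the paper omits. The one remark worth making is that the paper later invokes this proposition to conclude that $G$ is a Grothendieck \emph{fibration} (in the proof of \cref{prop:disp_yoneda_rep_1}), whereas your Grothendieck-construction presentation most directly exhibits it as an \emph{opfibration}. Both structures are present on the comma category $(\widehat{\CC}\downarrow N_F)\to\CS$: the opfibration structure comes from postcomposition along $N_F f$, and the fibration structure from pullback along $N_F f$ in $\widehat{\CC}$. Your argument gives the former; the latter is equally immediate once the comma description is in hand.
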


We prove several core properties of $G : \CC \ratri \CP$ and $Y : \CC \to \CP$.

The first of these properties is the generalization of the Yoneda lemma to $\CP$.
\begin{lemma}\label{lem:disp_yoneda_lemma}
  There is a natural family of isomorphisms
  \[ r : \forall (\Gamma^{\dagger} : \CP) (\Delta : \CC^{\op}) (\gamma : \CS(F\ \Delta \to \Gamma)) \to \Gamma^{\dagger}\ \gamma \simeq \CP(\abs{Y}_{\Delta} \to_{\gamma} \Gamma^{\dagger}), \]
  whose components are given by
  \begin{alignat*}{3}
    & r\ \Gamma^{\dagger}\ \Delta\ \gamma\ \gamma^{\dagger} && \triangleq{ } && \lambda \Theta\ \delta\ (\delta' : \CC(\Theta \to_{\delta} \Delta)) \mapsto \gamma^{\dagger}[\delta'] \\
    & r^{-1}\ \Gamma^{\dagger}\ \Delta\ \gamma\ \gamma' && \triangleq{ } && \gamma'\ \Delta\ \id_{F\ \Delta}\ \id_{\Delta} \tag*{\qed}
  \end{alignat*}
\end{lemma}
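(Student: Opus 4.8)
The plan is to treat this as the Yoneda lemma for the displayed presheaf category $\CP$ and to prove it by direct computation, since the maps $r$ and $r^{-1}$ are already written out. First I would check that both formulas are well-typed. Given $\gamma^{\dagger} : \Gamma^{\dagger}\ \gamma$ with $\gamma : \CS(F\ \Delta \to \Gamma)$ and $\delta' : \CC(\Theta \to_{\delta} \Delta)$, so that $F\ \delta' = \delta$, the restriction $\gamma^{\dagger}[\delta']_{\Gamma^{\dagger}}$ lies in $\Gamma^{\dagger}\ (\delta \cdot \gamma)$, which is exactly the fibre of the dependent presheaf $\Gamma^{\dagger}$ required at the component $\abs{Y}_{\Delta}\ \Theta\ \delta = \CC(\Theta \to_{\delta} \Delta)$; and $\delta' \mapsto \gamma^{\dagger}[\delta']$ is a displayed natural transformation because $(\gamma^{\dagger}[\delta'])[\rho] = \gamma^{\dagger}[\rho \cdot \delta']$. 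Dually, $\gamma'\ \Delta\ \id_{F\ \Delta}\ \id_{\Delta}$ lands in $\Gamma^{\dagger}\ (\id_{F\ \Delta} \cdot \gamma) = \Gamma^{\dagger}\ \gamma$, using $\id_{\Delta} : \CC(\Delta \to_{\id_{F\ \Delta}} \Delta)$.

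Next I would verify the two round-trip equations. For $r^{-1} \circ r = \id$, applying $r^{-1}$ to $r\ \Gamma^{\dagger}\ \Delta\ \gamma\ \gamma^{\dagger}$ gives $\gamma^{\dagger}[\id_{\Delta}] = \gamma^{\dagger}$ by the identity law of presheaf restriction. For $r \circ r^{-1} = \id$, I need $(\gamma'\ \Delta\ \id_{F\ \Delta}\ \id_{\Delta})[\delta'] = \gamma'\ \Theta\ \delta\ \delta'$ for every $\delta' : \CC(\Theta \to_{\delta} \Delta)$; this is precisely the naturality of the displayed natural transformation $\gamma'$, instantiated at the restriction of the generic element $\id_{\Delta}$ of $\abs{Y}_{\Delta}$ along the morphism $\delta'$ of $\CC$ (noting $\delta' = \delta' \cdot \id_{\Delta}$ in $\abs{Y}_{\Delta}$ and $F\ \delta' = \delta$). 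This is the only step with real content, and it is the familiar observation that a natural transformation out of a representable is determined by its value on the identity, transported to the displayed setting.

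Finally I would record that the family $r$ is natural: in $(\Delta, \gamma)$ as it ranges over the category of elements of $N_{F}\ \Gamma = F^{\ast}\ \yo^{\CS}_{\Gamma}$, and in $\Gamma^{\dagger}$ along morphisms of $\CP$ over $\CS$; both are routine unfoldings using functoriality of restriction, exactly as for the ordinary Yoneda lemma. As a sanity check and alternative argument one can pass through \cref{prop:disp_as_gluing}: under the equivalence $\CP \simeq (\widehat{\CC} \downarrow N_{F})$, the object $\abs{Y}_{\Delta}$ corresponds to $\yo^{\CC}_{\Delta}$ together with its canonical map to $N_{F}\ (F\ \Delta)$, and the claim reduces to the ordinary Yoneda lemma in $\widehat{\CC}$ cut down to the fibre constraint over $\gamma$. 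I do not expect a genuine obstacle; the only care needed is bookkeeping of the displayed indices (keeping $F\ \delta' = \delta$ straight throughout) and the size conventions of this section, under which all the presheaves in play are $\omega$-small, so the hom-sets are legitimate.
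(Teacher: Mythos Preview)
Your proposal is correct and matches the paper's approach: the paper states the lemma with the explicit formulas for $r$ and $r^{-1}$ and immediately closes with \qed, leaving the verification (well-typedness, the two round-trip identities via presheaf functoriality and naturality of $\gamma'$, and naturality of $r$) as the standard Yoneda computation you have spelled out. Your additional sanity check via \cref{prop:disp_as_gluing} is a nice alternative but not needed.
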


\begin{proposition}\label{prop:disp_yoneda_ff}
  The functor $Y : \CC \to \CP$ is fully faithful.
\end{proposition}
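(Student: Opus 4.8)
The plan is to reduce the statement directly to the displayed Yoneda lemma (\cref{lem:disp_yoneda_lemma}). Fix objects $\Delta, \Gamma : \abs{\CC}$. Since $\CP$ is (the total category of) a displayed category over $\CS$ via $G$, and since $Y$ satisfies $(Y \cdot G) = F$, a morphism $Y\Delta \to Y\Gamma$ in $\CP$ decomposes uniquely as a morphism $f : \CS(F\Delta \to F\Gamma)$ together with a displayed morphism $f^{\dagger} : \CP(\abs{Y}_{\Delta} \to_{f} \abs{Y}_{\Gamma})$; that is, $\CP(Y\Delta \to Y\Gamma) = \sum_{f : \CS(F\Delta \to F\Gamma)} \CP(\abs{Y}_{\Delta} \to_{f} \abs{Y}_{\Gamma})$. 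Dually, because $F$ exhibits $\CC$ as a displayed category over $\CS$, we have $\CC(\Delta \to \Gamma) = \sum_{f : \CS(F\Delta \to F\Gamma)} \CC(\Delta \to_{f} \Gamma)$, the underlying $\CS$-morphism of $g : \CC(\Delta \to \Gamma)$ being $F g$.

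First I would apply \cref{lem:disp_yoneda_lemma} with $\Gamma^{\dagger} := \abs{Y}_{\Gamma}$ (an object of $\CP$ displayed over $F\Gamma$), with the ``$\Delta$'' of the lemma instantiated to our $\Delta$, and with $\gamma := f$. Since $\abs{Y}_{\Gamma}\ \Delta\ f = \CC(\Delta \to_{f} \Gamma)$ by definition of the displayed Yoneda embedding, this produces isomorphisms $r\ \abs{Y}_{\Gamma}\ \Delta\ f : \CC(\Delta \to_{f} \Gamma) \simeq \CP(\abs{Y}_{\Delta} \to_{f} \abs{Y}_{\Gamma})$, natural in all arguments. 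Summing these over $f : \CS(F\Delta \to F\Gamma)$ and applying the two decompositions above yields a bijection $\CC(\Delta \to \Gamma) \simeq \CP(Y\Delta \to Y\Gamma)$.

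It then remains to check that this bijection is exactly the action of $Y$ on morphisms, which is a short unfolding of definitions. Given $g : \CC(\Delta \to_{f} \Gamma)$ (so $f = F g$), the displayed component of $Y g$ is postcomposition by $g$, sending $\delta' : \CC(\Theta \to_{\delta} \Delta)$ to $\delta' \cdot g$; the explicit formula for $r^{-1}$ in \cref{lem:disp_yoneda_lemma} then computes $r^{-1}\ \abs{Y}_{\Gamma}\ \Delta\ f$ of this displayed component to be $\id_{\Delta} \cdot g = g$. Hence the displayed component of $Y g$ equals $r\ \abs{Y}_{\Gamma}\ \Delta\ f\ g$, and since the $\CS$-components also agree (both $f = F g$), the functorial action of $Y$ coincides fibrewise with $r$, which is an isomorphism. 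Therefore $Y$ is fully faithful.

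I do not expect any genuine obstacle here: the only thing requiring care is the bookkeeping of displayed objects against their underlying $\CS$-objects, so that the sums over $f$ on the two sides are matched by the same $\CS$-morphism, together with confirming that the isomorphism extracted from the displayed Yoneda lemma is literally $Y$'s functorial action rather than merely an abstract bijection of the same hom-sets.
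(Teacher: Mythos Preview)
Your proposal is correct and follows essentially the same approach as the paper: both reduce full faithfulness to the displayed Yoneda lemma (\cref{lem:disp_yoneda_lemma}) applied fibrewise over each base morphism $f : \CS(F\Delta \to F\Gamma)$, noting that $\abs{Y}_{\Gamma}\ \Delta\ f = \CC(\Delta \to_{f} \Gamma)$ by definition. You spell out the summation over $f$ and the verification that $r$ agrees with $Y$'s action more explicitly than the paper does, but the argument is the same.
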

\begin{proof}
  We prove that the actions of $Y$ on displayed morphisms are bijective; this implies that its total actions are also bijective.

  Take two objects $\Gamma$ and $\Delta$ of $\CC$ and a base morphism $f : \CS(F\ \Delta \to F\ \Gamma)$.
  By \cref{lem:disp_yoneda_lemma}, we have $\abs{Y}_{\Gamma}\ f \simeq \CP(\abs{Y}_{\Delta} \to_{f} \abs{Y}_{\Gamma})$; and $\abs{Y}_{\Gamma}\ f = \CC(\Delta \to_{f} \Gamma)$ by definition.
  This determines a function $\CC(\Delta \to_{f} \Gamma) \to \CP(\abs{Y}_{\Delta} \to_{f} \abs{Y}_{\Gamma})$ that coincides with the action of $Y$ on displayed morphisms, which is therefore bijective.
\end{proof}

\begin{proposition}\label{prop:disp_yoneda_unit}
  The unit $\eta^{Y} : 1_{\CPsh^{\CC}} \Ra (Y_{!} \cdot Y^{\ast})$ is an isomorphism.
\end{proposition}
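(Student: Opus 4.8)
The plan is to establish the general fact that, for a fully faithful functor, the unit of left Kan extension along it is an isomorphism; concretely, I would check that $\eta^{Y}$ is invertible at representable presheaves and then extend to all of $\CPsh^{\CC}$ by density. Everything needed is already available: \cref{prop:disp_yoneda_ff} (full faithfulness of $Y$), \cref{lem:disp_yoneda_lemma} (the displayed Yoneda lemma), and the explicit description of the left adjoint $F_{!}$ and of the unit in \cref{sec:dra_constructions}, instantiated at $F \triangleq Y : \CC \to \CP$.

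First I would treat the representable case. Since $Y_{!}$ is left Kan extension along $Y$, the formula for $F_{!}$ in \cref{sec:dra_constructions}, applied to $X \triangleq \yo^{\CC}_{\Gamma}$, exhibits a natural isomorphism $Y_{!}\,(\yo^{\CC}_{\Gamma}) \simeq \yo^{\CP}_{Y\,\Gamma}$; the quotient in that formula collapses by the usual co-Yoneda argument. Hence for $\Gamma, \Delta : \CC$ one gets
\[ \abs{Y^{\ast}\,(Y_{!}\,\yo^{\CC}_{\Gamma})}_{\Delta} \;\simeq\; \abs{\yo^{\CP}_{Y\,\Gamma}}_{Y\,\Delta} \;=\; \CP(Y\,\Delta \to Y\,\Gamma), \]
and, unfolding the explicit description of the unit $\eta$ of $(Y_{!} \dashv Y^{\ast})$ from \cref{sec:dra_constructions}, the component $(\eta^{Y}_{\yo^{\CC}_{\Gamma}})_{\Delta} : \CC(\Delta \to \Gamma) \to \CP(Y\,\Delta \to Y\,\Gamma)$ is precisely the action of $Y$ on morphisms. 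By \cref{prop:disp_yoneda_ff} this map is a bijection (one may equally read this off directly from \cref{lem:disp_yoneda_lemma}, which identifies $\CP(\abs{Y}_{\Delta} \to_{f} \abs{Y}_{\Gamma})$ with $\CC(\Delta \to_{f} \Gamma)$). Thus $\eta^{Y}$ is an isomorphism at every representable.

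Next I would extend to arbitrary presheaves by a density argument. Both endofunctors $1_{\CPsh^{\CC}}$ and $Y^{\ast} \cdot Y_{!}$ of $\CPsh^{\CC}$ preserve colimits: $Y_{!}$ is a left adjoint, and $Y^{\ast}$ preserves colimits because it itself has a right adjoint $Y_{\ast}$ (right Kan extension along $Y$, which exists since $\CC$ is small). Every presheaf $X : \CPsh^{\CC}$ is the colimit of the canonical diagram of representables over it, and $\eta^{Y}$ is a natural transformation between colimit-preserving functors that is an isomorphism on each representable; hence $\eta^{Y}_{X}$ is an isomorphism for all $X$.

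The one point that genuinely needs care — the main obstacle, such as it is — is the bookkeeping of universe sizes: since $\CP$ is a large category, one must confirm that $Y_{!}$ and $Y_{\ast}$ exist at the universe levels relevant to the internal language of $\CPsh^{\CP}$ (equivalently, that $Y^{\ast}$ is cocontinuous at those levels). This is unproblematic, because $Y_{!}$ is computed by the quotient formula of \cref{sec:dra_constructions}, whose indexing data is $\CC$ together with hom-sets of $\CP$ and is therefore small enough, and $Y_{\ast}$ exists by right Kan extension as $\CC$ is small; but it is the part of the argument that is about sizes rather than formal category theory, and I would spell it out explicitly.
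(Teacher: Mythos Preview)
Your argument is correct and is essentially the standard proof that full faithfulness of a functor makes the unit of its left-Kan-extension adjunction invertible; the paper's own proof is a one-line appeal to exactly this fact (citing \cite[Prop~4.23]{Kelly05}) after invoking \cref{prop:disp_yoneda_ff}. You have simply unpacked that citation, so there is no substantive difference in approach.
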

\begin{proof}
  This follows from $Y$ being fully faithful (see~\cite[Prop 4.23]{Kelly05}).
\end{proof}
As both $Y_{!}$ and $Y^{\ast}$ admit dependent right adjoints, \cref{prop:disp_yoneda_unit} induces coercion operations internally to $\CPsh^{\CC}$ and $\CPsh^{\CP}$.

\begin{lemma} \label{lem:fib-pullback-left-adj}
Let $p : \CE \to \CC$ be a Grothendieck fibration.
Then pullback along $p$ preserves left adjoints.
\end{lemma}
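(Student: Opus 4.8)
The plan is to read ``pullback along $p$'' as the $2$-functor $p^{\ast} : \CCat/\CC \to \CCat/\CE$ sending a functor $F : \CD \to \CC$ to the projection $\CE \times_{\CC} \CD \to \CE$, and to prove the following: if a $1$-cell $L : (\CD, F) \to (\CD', F')$ of $\CCat/\CC$ has a right adjoint $L \dashv R$ between the underlying total categories (with $R$ not assumed to lie over $\CC$, and with counit $\varepsilon$), then $p^{\ast}L : \CE\times_{\CC}\CD \to \CE\times_{\CC}\CD'$, which acts by $(e,d) \mapsto (e, Ld)$, again has a right adjoint. The obstruction to the naive guess $(e,d') \mapsto (e, Rd')$ is a typing failure: one has $p(e) = F'(d')$ but $F(Rd') = F'(LRd')$, and these need not agree. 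This is precisely where the fibration hypothesis enters.

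Concretely, for each object $(e,d')$ of $\CE\times_{\CC}\CD'$ I would take the counit component pushed down to the base, $\alpha_{d'} := F'(\varepsilon_{d'}) : F(Rd') \to F'(d') = p(e)$, and let $\overline{\alpha} : \alpha^{\ast}e \to e$ be a chosen $p$-cartesian lift of $\alpha_{d'}$ at $e$, so $\alpha^{\ast}e$ lies over $F(Rd')$. Then $\tilde R(e,d') := (\alpha^{\ast}e, Rd')$ is a genuine object of $\CE\times_{\CC}\CD'$, and the action of $\tilde R$ on morphisms is forced by the universal property of cartesian lifts together with functoriality of $R$ (which simultaneously gives functoriality of $\tilde R$). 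I would then verify $p^{\ast}L \dashv \tilde R$ by an explicit hom-set bijection: a morphism $p^{\ast}L(e,d) \to (e',d')$ is a pair $(g : e\to e',\ h : Ld\to d')$ with $p(g) = F'(h)$; transposing $h$ across $L\dashv R$ to $\overline{h} : d \to Rd'$ (so $h = \varepsilon_{d'}\circ L\overline{h}$) gives $p(g) = F'(\varepsilon_{d'})\circ F(\overline{h}) = \alpha_{d'}\circ F(\overline{h})$, whence cartesianness of $\overline{\alpha}$ supplies a unique $\tilde g : e\to\alpha^{\ast}e'$ over $F(\overline{h})$ with $\overline{\alpha}\circ\tilde g = g$, and $(\tilde g,\overline{h})$ is the transpose. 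The inverse assignment sends $(k,m) : (e,d) \to \tilde R(e',d')$ to $(\overline{\alpha}\circ k,\ \varepsilon_{d'}\circ Lm)$.

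What is left is to check that these assignments are mutually inverse and natural in both variables; each check unwinds to the two triangle identities for $L\dashv R$ together with the existence-and-uniqueness clause of cartesian lifts, so it is routine. I expect the only friction to be notational bookkeeping --- tracking which object lies over which, and getting the naturality squares to commute on the nose --- rather than any conceptual difficulty; the one genuinely essential ingredient is the supply of cartesian lifts, that is, that $p$ is a fibration. As a sanity check, taking $\CD' = \CC$, $F' = 1_{\CC}$ and $L$ an honest left adjoint $\CB \to \CC$, the construction specializes to the pullback of $L$ along $p$ in $\CCat$ and recovers the expected right adjoint. A more abstract, choice-free variant would define $\tilde R$ through the universal property of the pullback $\CE\times_{\CC}\CD'$ paired with the fibration structure, but the hands-on argument above is the quickest to carry out in full.
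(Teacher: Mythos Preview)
Your argument is correct, and your reading of ``pullback along $p$'' as the $2$-functor $p^{\ast}:\CCat/\CC\to\CCat/\CE$ is exactly what the paper's single application (Corollary~\ref{cor:loc-rep-pullback-fib}) requires. The paper itself gives no proof, recording the lemma only as ``a standard fact''; what you have written is precisely the standard argument one would supply---repairing the typing of the naive right adjoint via cartesian lifts and verifying the hom-set bijection from the original adjunction plus the universal property of cartesian arrows.
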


\begin{proof}
A standard fact.
\end{proof}

In the following, we switch freely between the point of view of a dependent presheaf over $X$ and a map into $X$.
In particular, a dependent presheaf over $X$ is locally representable exactly if the corresponding map into $X$ is a representable morphism.

\begin{corollary} \label{cor:loc-rep-pullback-fib}
Let $p : \CE \to \CC$ be a Grothendieck fibration.
Let $f \in \CPsh^{\CC}(Y \to X)$ be locally representable.
Then $p^{\ast} f \in \CPsh^{\CE}(p^{\ast} Y \to p^{\ast} X)$ is locally representable.
\end{corollary}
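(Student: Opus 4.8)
The plan is to reduce the statement to Lemma~\ref{lem:loc-rep-via-left-adj} and Lemma~\ref{lem:fib-pullback-left-adj}. Following the remark preceding the corollary, I would regard $f$ as a representable morphism of presheaves over $\CC$ and $p^{\ast}f$ as a morphism of presheaves over $\CE$, and show the latter is representable. By Lemma~\ref{lem:loc-rep-via-left-adj}, applied to the dependent presheaf corresponding to $f$, the functor $\int f : \int Y \to \int X$ induced on categories of elements is a left adjoint; and by the same lemma it suffices to prove that $\int p^{\ast}f : \int p^{\ast}Y \to \int p^{\ast}X$ is a left adjoint.

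The key point I would establish is that, for every presheaf $Z$ over $\CC$, the category of elements $\int p^{\ast}Z$ is canonically the pullback $\CE \times_{\CC} \int Z$ of the projection $\int Z \to \CC$ along $p$, with the projection $\int p^{\ast}Z \to \CE$ corresponding to the projection from this pullback; this is routine from unfolding the definitions. This identification is natural in $Z$, so instantiating it at the map $f : Y \to X$ presents $\int p^{\ast}f$ as the pullback of $\int f$ along the functor $q_{X} : \int p^{\ast}X \to \int X$ coming from the other leg of the pullback $\CE \times_{\CC} \int X$. Since $q_{X}$ is the pullback of $p$ along $\int X \to \CC$, and pullbacks of Grothendieck fibrations are Grothendieck fibrations, $q_{X}$ is a Grothendieck fibration. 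Then Lemma~\ref{lem:fib-pullback-left-adj} shows that pullback along $q_{X}$ preserves left adjoints; applied to the left adjoint $\int f$, this yields that $\int p^{\ast}f$ is a left adjoint, and Lemma~\ref{lem:loc-rep-via-left-adj} finishes the argument.

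The main obstacle is the bookkeeping of the previous paragraph: carefully checking that the square of categories of elements obtained by applying $\int(-)$ to $p^{\ast}f$ and to $f$ is a pullback square, and that the projection functors are identified correctly under $\int p^{\ast}Z \cong \CE \times_{\CC} \int Z$, together with invoking the standard fact that the pullback of a Grothendieck fibration along an arbitrary functor is again a Grothendieck fibration. Once these identifications are in place, the conclusion is a direct application of the two lemmas, with no further computation.
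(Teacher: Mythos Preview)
Your proposal is correct and follows the same approach as the paper. The paper's proof is terser, merely noting that $\int p^{\ast} f$ is the pullback of $\int f$ along $p$ and invoking the two lemmas; your additional step of factoring through $q_{X} : \int p^{\ast}X \to \int X$ and observing that $q_{X}$ is a Grothendieck fibration (as a pullback of $p$) is exactly the right way to make the application of Lemma~\ref{lem:fib-pullback-left-adj} precise.
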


\begin{proof}
  This is the combination of \cref{lem:loc-rep-via-left-adj,cor:loc-rep-pullback-fib}.
  For this, note that $\int p^{\ast} f$ is the pullback of $\int f$ along $p$.
\end{proof}

\begin{proposition}[Internally to $\CPsh^{\CP}$]\label{prop:disp_yoneda_rep_1}
  For every locally representable presheaf $A : \mmod{G^{\ast}} \to \SRepPsh^{\CS}$, the presheaf $(\mmod{G^{\ast}} \to A\ \mmod{G^{\ast}})$ is locally representable.
\end{proposition}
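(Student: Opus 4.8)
The plan is to unfold this internal statement into an external one and then derive it from \cref{lem:loc-rep-via-left-adj} and \cref{lem:fib-pullback-left-adj}. Externally, the data of a term $A : \mmod{G^{\ast}} \to \SRepPsh^{\CS}$ over a context $X : \CPsh^{\CP}$ is, by the modal adjunction, a locally representable dependent presheaf $A$ over $G_{!}\ X$ in $\CPsh^{\CS}$; and $(\mmod{G^{\ast}} \to A\ \mmod{G^{\ast}})$ is the dependent presheaf $\modcolor{G^{\ast}}\ A$ over $X$ in $\CPsh^{\CP}$, whose fibre over $x : \abs{X}_{\Gamma}$ is $\abs{A}_{G\ \Gamma}\ (\abs{\eta_{X}}_{\Gamma}\ x)$, where $\eta_{X} : X \to G^{\ast}\ (G_{!}\ X)$ is the unit of $G_{!} \dashv G^{\ast}$. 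So we must equip $\modcolor{G^{\ast}}\ A$ with a local representability structure, and by \cref{lem:loc-rep-via-left-adj} it suffices to show that the projection $\int (X, \modcolor{G^{\ast}}\ A) \to \int X$ is a left adjoint; since $A$ is locally representable, the same lemma gives a right adjoint to $p_{A} : \int (G_{!}\ X, A) \to \int (G_{!}\ X)$.

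Next I would exhibit $\int (X, \modcolor{G^{\ast}}\ A) \to \int X$ as the base change of $p_{A}$ along the functor $\overline{G} : \int X \to \int (G_{!}\ X)$, $(\Gamma, x) \mapsto (G\ \Gamma, \abs{\eta_{X}}_{\Gamma}\ x)$: unwinding the definition of the category of elements, an object of the pullback of $p_{A}$ along $\overline{G}$ over $(\Gamma, x)$ is exactly an element of $\abs{A}_{G\ \Gamma}\ (\abs{\eta_{X}}_{\Gamma}\ x)$, which is an element of the fibre of $\modcolor{G^{\ast}}\ A$, and similarly on morphisms, so this is a pullback square. The \emph{crux} is that $\overline{G}$ is a Grothendieck fibration. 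I would see this by factoring
\[ \overline{G} \colon \int X \xrightarrow{\int \eta_{X}} \int (G^{\ast}\ (G_{!}\ X)) \xrightarrow{\pi} \int (G_{!}\ X). \]
Here $\int \eta_{X}$ is the category of elements of a morphism of presheaves, hence a discrete fibration; and $\pi$ is, up to canonical isomorphism, the base change of $G : \CP \ratri \CS$ along $\int (G_{!}\ X) \to \CS$, because $\int (G^{\ast}\ Z) \cong \CP \times_{\CS} \int Z$ naturally in $Z : \CPsh^{\CS}$. Finally $G$ itself is a Grothendieck fibration: a cartesian lift of $f : \CS(\Gamma' \to \Gamma)$ at an object $\Gamma^{\dagger}$ over $\Gamma$ is the object $(\Theta, \gamma') \mapsto \Gamma^{\dagger}\ \Theta\ (\gamma' \cdot f)$ over $\Gamma'$ together with the identity displayed natural transformation. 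Since Grothendieck fibrations are closed under base change and composition, $\overline{G}$ is one.

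Then \cref{lem:fib-pullback-left-adj} applies with $p = \overline{G}$ and the left adjoint $p_{A}$: the base change $\int (X, \modcolor{G^{\ast}}\ A) \to \int X$ of $p_{A}$ along the Grothendieck fibration $\overline{G}$ is a left adjoint, so $\modcolor{G^{\ast}}\ A$ is locally representable by \cref{lem:loc-rep-via-left-adj}. (One could also avoid categories of elements altogether: $\modcolor{G^{\ast}}\ A$ is the pullback along $\eta_{X}$ of $G^{\ast}$ applied to the representable morphism $A \to G_{!}\ X$; the functor $G^{\ast} : \CPsh^{\CS} \to \CPsh^{\CP}$ preserves representable morphisms by \cref{cor:loc-rep-pullback-fib} since $G$ is a Grothendieck fibration, and representable morphisms are stable under pullback.) I expect the only real work to be bookkeeping --- unfolding the modal statement into external data over $\CPsh^{\CP}$, computing the fibre of $\modcolor{G^{\ast}}\ A$ through the unit $\eta_{X}$, and checking that the square of categories of elements is a genuine pullback with $\overline{G}$ factoring as stated --- none of it deep, but the indices need care.
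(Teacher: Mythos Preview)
Your proof is correct and follows essentially the same route as the paper. The paper's argument is precisely your parenthetical alternative: it works at the universal level, observing that the dependent presheaf in question is $G^{\ast}$ applied to the universal locally representable dependent presheaf $A : \SRepPsh^{\CS} \vdash A$, notes that $G$ is a Grothendieck fibration (invoking the comma-category description of $\CP$ rather than your direct cartesian-lift computation), and concludes by \cref{cor:loc-rep-pullback-fib}. Your main argument unfolds this to an arbitrary context $X$ and therefore has to check that the composite $\overline{G}$ is a fibration, which is extra bookkeeping the universal approach avoids; but the core content --- $G$ is a fibration, and pullback along a fibration preserves representable morphisms --- is identical.
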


\begin{proof}
We have to show the judgment
$
A : \mmod{G^{\ast}} \to \SRepPsh^{\CS} \vdash \isRep(\mmod{G^{\ast}} \to A\ \mmod{G^{\ast}})
.$
Inhabitants of this type correspond to local representability structures of the dependent presheaf
$
A : \mmod{G^{\ast}} \to \SRepPsh^{\CS} \vdash \mmod{G^{\ast}} \to A\ \mmod{G^{\ast}}
.$
This is the image of the universal locally representable dependent presheaf
$
A : \SRepPsh^{\CS} \vdash A
$
under $G^{\ast}$.
From \cref{prop:disp_as_gluing}, we see that $G^{\ast}$ is a Grothendieck fibration.
We conclude by \cref{cor:loc-rep-pullback-fib}.
\end{proof}

\begin{proposition}[Internally to $\CPsh^{\CP}$]\label{prop:disp_yoneda_rep_2}
  Given an $\omega$-small presheaf $A : \mmod{Y_{\ast}} \to \SPsh_{\omega}^{\CC}$, the presheaf $(\mmod{Y_{\ast}} \to A\ \mmod{Y_{\ast}})$ is locally representable.
\end{proposition}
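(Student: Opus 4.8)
The plan is to compute $\modcolor{Y_{\ast}}$ explicitly through the presentation $\CP \simeq (\widehat{\CC} \downarrow N_{F})$ of \cref{prop:disp_as_gluing}. Write an object of $\CP$ as a triple $(\Phi, \Gamma_{0}, \phi)$ with $\Phi : \widehat{\CC}$, $\Gamma_{0} : \abs{\CS}$ and $\phi : \Phi \to N_{F}\,\Gamma_{0}$; then $G$ is the projection to $\Gamma_{0}$, and $Y\,\Delta = (\yo_{\Delta}, F\,\Delta, -)$ with the canonical structure map. The first step is to observe, using the displayed Yoneda lemma \cref{lem:disp_yoneda_lemma}, that a morphism $Y\,\Delta \to (\Phi, \Gamma_{0}, \phi)$ in $\CP$ is the same thing as an element of $\abs{\Phi}_{\Delta}$, so that $Y^{\ast}\,\yo^{\CP}_{\Gamma} \cong \Phi$ for $\Gamma = (\Phi, \Gamma_{0}, \phi)$. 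This identification is what makes $\modcolor{Y_{\ast}}$ computable.

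The second step is to reduce to a representable ambient context. Writing $\tilde{A}$ for the external meaning of $A\ \mmod{Y_{\ast}}$ --- an $\omega$-small dependent presheaf over $Y^{\ast}\,C$, with $C$ the ambient context --- the claim to be proved is that $\modcolor{Y_{\ast}}\,\tilde{A}$ is locally representable over $C$. By the definition of local representability it suffices, for each $\Gamma : \abs{\CP}$ and $c : \abs{C}_{\Gamma}$, to show that the presheaf $(\modcolor{Y_{\ast}}\,\tilde{A})_{\mid c}$ over $\CP/\Gamma$ is representable. By the naturality of $\modcolor{Y_{\ast}}$ under base change (the identity $\modcolor{Y_{\ast}}\,\tilde{A} \circ f' = \modcolor{Y_{\ast}}\,(\tilde{A} \circ Y^{\ast} f')$, immediate from the explicit description of $\modcolor{Y_{\ast}}$ in \cref{sec:dra_constructions}), this presheaf is $\modcolor{Y_{\ast}}\,\tilde{B}$, viewed as a presheaf over $\CP/\Gamma \simeq \int \yo^{\CP}_{\Gamma}$, where $\tilde{B}$ is the restriction of $\tilde{A}$ along $Y^{\ast}(\hat{c}) : Y^{\ast}\,\yo^{\CP}_{\Gamma} \to Y^{\ast}\,C$ (with $\hat{c}$ classifying $c$). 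So it is enough to show: for any $\Gamma$ and any $\omega$-small dependent presheaf $\tilde{B}$ over $Y^{\ast}\,\yo^{\CP}_{\Gamma}$, the presheaf $\modcolor{Y_{\ast}}\,\tilde{B}$ over $\CP/\Gamma$ is representable, i.e.\ $\Gamma$ admits an extension by $\modcolor{Y_{\ast}}\,\tilde{B}$ in $\CP$.

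The third step is to exhibit this extension. Via $Y^{\ast}\,\yo^{\CP}_{\Gamma} \cong \Phi$, the presheaf $\tilde{B}$ is an $\omega$-small dependent presheaf over $\Phi$; let $\pi : T \to \Phi$ be the corresponding total map in $\widehat{\CC}$, with $T$ again $\omega$-small (this is why no local-representability hypothesis on $A$ is needed). I would then claim $\Gamma \rhd \modcolor{Y_{\ast}}\,\tilde{B} = (T, \Gamma_{0}, \phi \circ \pi)$, with projection $(\pi, \id_{\Gamma_{0}}) : (T, \Gamma_{0}, \phi \circ \pi) \to (\Phi, \Gamma_{0}, \phi)$; in words, one extends the gluing datum by replacing its $\widehat{\CC}$-component with the fibrewise total space of $\tilde{B}$, leaving the $\CS$-component untouched. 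To verify the universal property, unfold the presheaf-of-sections formula for $\modcolor{Y_{\ast}}$ from \cref{sec:dra_constructions}: for $\rho : \Gamma' \to \Gamma$ in $\CP$ with $\widehat{\CC}$-component $v : \Phi' \to \Phi$, an element of $\abs{\modcolor{Y_{\ast}}\,\tilde{B}}_{\Gamma'}\,\rho$ unwinds --- using $Y^{\ast}\,\yo^{\CP}_{\Gamma'} \cong \Phi'$ and the coherence condition in that formula --- to a morphism $u : \Phi' \to T$ in $\widehat{\CC}$ with $\pi \circ u = v$, i.e.\ precisely a lift of $\rho$ through $(\pi, \id_{\Gamma_{0}})$; this correspondence is natural in $(\Gamma', \rho) : \CP/\Gamma$. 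Hence $\modcolor{Y_{\ast}}\,\tilde{B}$ over $\CP/\Gamma$ is represented by $(T, \Gamma_{0}, \phi \circ \pi)$, the generic element being the one corresponding to $\id_{T}$.

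The main obstacle is this last unwinding: while the guiding idea --- ``context extension by $\modcolor{Y_{\ast}}\,\tilde{B}$ is the fibrewise total space'' --- is simple, matching it precisely against the explicit formula for $\modcolor{Y_{\ast}}$, the comma-category description of hom-sets, and \cref{lem:disp_yoneda_lemma}, while tracking all coherence conditions, is the fiddly part. An alternative that sidesteps \cref{prop:disp_as_gluing} is to work directly with the displayed-object datum $\Gamma^{\dagger} : \forall (\Theta : \CC^{\op})(\gamma : \CS(F\,\Theta \to \Gamma_{0})) \to \CSet_{\omega}$ and set $(\Gamma \rhd \modcolor{Y_{\ast}}\,\tilde{B})^{\dagger}\,\Theta\,\gamma \triangleq (e : \Gamma^{\dagger}\,\Theta\,\gamma) \times \abs{\tilde{B}}_{\Theta}\,(\gamma, e)$ with the evident projection; the verification then becomes a direct computation with the formula of \cref{sec:dra_constructions} and \cref{lem:disp_yoneda_lemma}, with exactly the same content.
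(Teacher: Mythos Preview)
Your proposal is correct and takes essentially the same approach as the paper: both compute $\modcolor{Y_{\ast}}$ explicitly via the displayed Yoneda lemma and exhibit the extended context as the fibrewise total space over the $\widehat{\CC}$-component while leaving the $\CS$-component fixed. The paper works directly in the displayed-object presentation (your ``alternative'' at the end is literally its construction of $\Gamma^{\rhd}$), whereas your primary route goes through the comma-category description of \cref{prop:disp_as_gluing}; the paper also reduces via the universal $\omega$-small dependent presheaf rather than by localizing to representable contexts, but these are cosmetic differences.
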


\begin{proof}
We have to show the judgment
$
A : \mmod{Y_{\ast}} \to \SPsh_{\omega}^{\CC} \vdash \isRep(\mmod{Y_{\ast}} \to A\ \mmod{Y_{\ast}})
.$
Inhabitants of this type correspond to local representability structures of the dependent presheaf
$
A : \mmod{Y_{\ast}} \to \SPsh_{\omega}^{\CS} \vdash \mmod{Y_{\ast}} \to A\ \mmod{Y_{\ast}}
.$
This is the image of the universal dependent presheaf $A : \SPsh^{\CS} \vdash A$ under $Y_*$.
So it suffices to show the following: given an $\omega$-small dependent presheaf $N$ over $M$ in $\SPsh^{\CC}$, the dependent presheaf $Y_* N$ over $Y_* M$ in $\SPsh^{\CP}$ is locally representable.

Let us inspect the action of the functor $Y_*$ on $M : \SPsh^{\CC}$.
From \cref{sec:dra_constructions}, we have for $\Gamma : \CC$ and $\Gamma^\dagger : \CP(\Gamma)$ that $|Y_* M|_{(\Gamma, \Gamma^\dagger)}$ consists of a dependent natural transformation
\[
(\Delta : \CC^{\op}) (g : \CP(Y \Delta, (\Gamma, \Gamma^\dagger))) \to |M|_\Delta.
\]
Regarding $\CC$ as displayed over $\CS$, this writes as
\[
(\Delta_\CS : \CS^{\op}) (\Delta_\CC : \CC^{\op}(\Delta_\CC)) (u : \CS(\Delta_\CS \to \Gamma)) (u^\dagger : \CP(Y \Delta_\CC \to_u \Gamma^\dagger)) \to |M|_{\Delta_\CC}.
\]
By \cref{lem:disp_yoneda_lemma} (displayed Yoneda), this is naturally isomorphic to the type of dependent natural transformations
\[
(\Delta_\CS : \CS^{\op}) (\Delta_\CC : \CC^{\op}(\Delta_\CC)) (u : \CS(\Delta_\CS \to \Gamma)) (u^\dagger : \Gamma^\dagger\ u)) \to |M|_{\Delta_\CC}.
\]

Similarly, for a dependent presheaf $N$ over $M$ in $\SPsh^{\CC}$, we can describe the dependent presheaf $Y_* N$ over $Y_* M$ in $\SPsh^{\CP}$ as follows.
Given $\Gamma : \CC$ and $\Gamma^\dagger : \CP(\Gamma)$ and $\alpha : |Y_* M|_{(\Gamma, \Gamma^\dagger)}$, then $|Y_* N|_{(\Gamma,\Gamma^{\dagger})}\ \alpha$ is the type of dependent natural transformations
\[
  (\Delta_\CS : \CS^{\op}) (\Delta_\CC : \CC^{\op}(\Delta_\CC)) (u : \CS(\Delta_\CS \to \Gamma)) (u^\dagger : \Gamma^\dagger\ u) \to |N|\ (\alpha\ \Delta_\CS\ \Delta_\CC\ u\ u^\dagger).
\]

We now show that $Y_{\ast} N$ is locally representable.
Let $(\Gamma,\Gamma^{\dagger})$ be an object of $\CP$ and $\alpha : \abs{Y_{\ast} M}_{(\Gamma,\Gamma^{\dagger})}$.
We have to define a representing object for the presheaf (over $(\CP/(\Gamma,\Gamma^{\dagger}))$)
\begin{alignat*}{3}
  & (Y_{\ast} N)_{\mid \alpha} && :{ } && ((\Omega,\Omega^{\dagger}) : \CP^{\op}) (\rho : \CS(\Omega \to \Gamma)) (\rho^{\dagger} : \CP(\Omega^{\dagger} \to_{\gamma} \Gamma^{\dagger})) \to \SSet \\
  & \abs{(Y_{\ast}N)_{\mid \alpha}}\ (\Omega,\Omega^{\dagger})\ \rho\ \rho^{\dagger} && \triangleq{ } && \abs{Y_{\ast} N}_{(\Omega,\Omega^{\dagger})}\ (\lambda \Delta_{\CS}\ \Delta_{\CC}\ u\ u^{\dagger} \mapsto \alpha\ \Delta_{\CS}\ \Delta_{\CC}\ (u \cdot \rho)\ (\rho^{\dagger}\ u\ u^{\dagger}))
\end{alignat*}

The extended context is $(\Gamma, \Gamma^{\rhd})$ where
\begin{alignat*}{3}
  & \Gamma^{\rhd}\ \Delta_{\CS}\ \Delta_{\CC}\ (u : \CS(\Delta_{\CS} \to \Gamma)) && \triangleq{ } &&
  (u^{\dagger} : \Gamma^{\dagger}\ \Delta_{\CS}\ \Delta_{\CC}\ u) \times (n : \abs{N}\ \Delta_{\CS}\ \Delta_{\CC}\ (\alpha\ \Delta_{\CS}\ \Delta_{\CC}\ u\ u^{\dagger}))
\end{alignat*}
Note that this is only well-defined because $N$ is $\omega$-small.

The projection morphism $\bm{p} : \CP(\Gamma^{\rhd} \to_{\id} \Gamma^{\dagger})$ forgets the component $n$.
The generic element $\bm{q} : \abs{(Y_{\ast} N)_{\mid \alpha}}\ (\Gamma,\Gamma^{\rhd})\ \id\ \bm{p}$ is given by $\bm{q}\ \Delta_{\CS}\ \Delta_{\CC}\ u\ (u^{\dagger}, n) \triangleq n$.

Finally, we have to check the universal property of the extended context.
Given $(\Delta,\Delta^{\dagger}) : \CP$, a morphism from $(\Omega, \Omega^{\dagger})$ to $(\Gamma,\Gamma^{\rhd})$ consists of a morphism $\rho : \CS(\Omega \to \Gamma)$ and a dependent natural transformation
\[ \rho^{\rhd} : \forall \Delta_{\CS}\ \Delta_{\CC}\ (u : \CS(\Delta_{\CS} \to \Omega)) \to \Omega^{\dagger}\ u \to \Gamma^{\rhd}\ (u \cdot \rho). \]
By definition of $\Gamma^{\rhd}$, this is equivalently given by a pair of dependent natural transformations
\begin{alignat*}{3}
 & \rho^{\dagger} && :{ } && \forall \Delta_{\CS}\ \Delta_{\CC}\ (u : \CS(\Delta_{\CS} \to \Omega)) \to \Omega^{\dagger}\ u \to \Gamma^{\dagger}\ (u \cdot \rho) \\
 & \rho^{n} && :{ } && \forall \Delta_{\CS}\ \Delta_{\CC}\ (u : \CS(\Delta_{\CS} \to \Omega))\ (u^{\dagger} : \Omega^{\dagger}\ u) \to \abs{N}\ \Delta_{\CS}\ \Delta_{\CC}\ (\alpha\ \Delta_{\CS}\ \Delta_{\CC}\ (u \cdot \rho)\ (\rho^{\dagger}\ u\ u^{\dagger})),
\end{alignat*}
\ie. by $\rho^{\dagger} : \CP(\Omega^{\dagger} \to_{\rho} \Gamma^{\dagger})$ and $\rho^{n} : \abs{(Y_{\ast} N)_{\mid \alpha}}\ (\Omega,\Omega^{\dagger})\ \rho\ \rho^{\dagger}$.
This shows that $(\Gamma, \Gamma^{\rhd})$ satisfies the universal property of a representing object of $(Y_{\ast} N)_{\mid \alpha}$.
\end{proof}

\begin{proposition}[Internally to $\CPsh^{\CP}$]\label{prop:disp_yoneda_rep_3}
  For every $\omega$-small presheaf $A : \mmod{Y_{\ast}} \to \SPsh_{\omega}^{\CC}$, the unique map
  \[ (\mmod{Y_{\ast}} \to A\ \mmod{Y_{\ast}}) \to (\mmod{G^{\ast}} \to \Unit) \]
  preserves context extensions.
  Equivalently, the constant map
  \[ (\mmod{G^{\ast}} \to B\ \mmod{G^{\ast}}) \to ((\mmod{Y_{\ast}} \to A\ \mmod{Y_{\ast}}) \to (\mmod{G^{\ast}} \to B\ \mmod{G^{\ast}})) \]
  is an isomorphism for every $B : \mmod{G^{\ast}} \to \SPsh^{\CS}$.
\end{proposition}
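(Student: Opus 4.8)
The plan is to establish the second, equivalent formulation: that for every $B : \mmod{G^{\ast}} \to \SPsh^{\CS}$ the constant map $(\mmod{G^{\ast}} \to B\ \mmod{G^{\ast}}) \to \bigl((\mmod{Y_{\ast}} \to A\ \mmod{Y_{\ast}}) \to (\mmod{G^{\ast}} \to B\ \mmod{G^{\ast}})\bigr)$ is an isomorphism of presheaves over $\CP$. First I would translate this into external terms using the explicit descriptions of the dependent right adjoints $Y_{\ast}$ and $G^{\ast}$ from \cref{sec:dra_constructions}. Since the ambient context of $\CPsh^{\CP}$ is empty, $A$ and $B$ denote closed presheaves; let $M$ be the $\omega$-small presheaf over $\CC$ denoted by $A$ and $B_{0}$ the presheaf over $\CS$ denoted by $B$. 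Then $(\mmod{Y_{\ast}} \to A\ \mmod{Y_{\ast}})$ is the presheaf $Y_{\ast} M$ over $\CP$, and $(\mmod{G^{\ast}} \to B\ \mmod{G^{\ast}})$ is $G^{\ast} B_{0}$ (the dependent right adjoint $\modcolor{G^{\ast}}$ restricts to $G^{\ast}$ on closed presheaves). So the goal reduces to: the constant map $G^{\ast} B_{0} \to (Y_{\ast} M \Ra G^{\ast} B_{0})$ is an isomorphism in $\CPsh^{\CP}$.

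Next I would invoke \cref{prop:disp_yoneda_rep_2}, by which $Y_{\ast} M$ is locally representable (as a dependent presheaf over $\Unit$), and apply the alternative semantics of function presheaves along a locally representable presheaf recalled in \cref{sec:int_lang_psh}. This gives, for each $\xi = (\Gamma, \Gamma^{\dagger}) : \abs{\CP}$, a natural identification $\abs{Y_{\ast} M \Ra G^{\ast} B_{0}}_{\xi} \cong \abs{G^{\ast} B_{0}}_{(\Gamma, \Gamma^{\rhd})}$, where $(\Gamma, \Gamma^{\rhd})$ denotes the context extension of $\xi$ by $Y_{\ast} M$ and the constant map is identified with restriction along the projection $\bm{p} : (\Gamma, \Gamma^{\rhd}) \to (\Gamma, \Gamma^{\dagger})$. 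The decisive observation, which I would extract directly from the proof of \cref{prop:disp_yoneda_rep_2}, is that $\Gamma^{\rhd}$ is a presheaf displayed over the \emph{same} base object $\Gamma : \abs{\CS}$ — it adjoins only the fibre of $M$, introducing no new $\CS$-component — and that $\bm{p}$ is displayed over $\id_{\Gamma}$, so $G\ \bm{p} = \id_{\Gamma}$. Because $G^{\ast} B_{0}$ is pulled back along $G$, restriction along $\bm{p}$ on it equals restriction along $G\ \bm{p} = \id_{\Gamma}$ in $B_{0}$, hence is the identity; thus $\abs{G^{\ast} B_{0}}_{(\Gamma, \Gamma^{\rhd})} = \abs{B_{0}}_{\Gamma} = \abs{G^{\ast} B_{0}}_{\xi}$, and under these identifications the constant map is the identity. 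Naturality in $\xi$ is routine, and this finishes the proof.

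The only non-formal step is the observation that the context extension produced in \cref{prop:disp_yoneda_rep_2} lies over $\id_{\CS}$; this is read off immediately from its proof, where $\Gamma^{\rhd}$ is built over $\Gamma$ and $\bm{p}$ is exactly the displayed morphism $\CP(\Gamma^{\rhd} \to_{\id} \Gamma^{\dagger})$ forgetting the adjoined fibre, so I expect no real obstacle. If one prefers not to reopen that proof, the same fact follows from \cref{prop:disp_as_gluing}: under $\CP \simeq (\widehat{\CC} \downarrow N_{F})$ the presheaf $Y_{\ast} M$ is the restriction along the projection $\CP \to \widehat{\CC}$ of the presheaf on $\widehat{\CC}$ represented by $M$, hence is vertical over $\CS$, so every context extension it generates is fixed by $G$.
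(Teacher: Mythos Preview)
Your proposal is correct and takes essentially the same approach as the paper: both arguments rest on the single observation that the projection $\bm{p} : (\Gamma,\Gamma^{\rhd}) \to (\Gamma,\Gamma^{\dagger})$ built in the proof of \cref{prop:disp_yoneda_rep_2} satisfies $G\ \bm{p} = \id_{\Gamma}$, from which restriction of $G^{\ast} B_{0}$ along $\bm{p}$ is the identity. The paper states this in one sentence; you unfold the translation through the alternative semantics of function types along a locally representable presheaf, which is a reasonable elaboration of the same idea.
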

\begin{proof}
  This follows from the fact that the projection map $\bm{p} : \CP((\Gamma,\Gamma^{\rhd}) \to (\Gamma,\Gamma^{\dagger}))$ constructed in the proof of \cref{prop:disp_yoneda_rep_2} is sent by $G$ to the identity morphism $\id : \CS(\Gamma \to \Gamma)$.
\end{proof}

We can now forget the definitions of $\CP$, $G$ and $Y$, as we will only rely on these properties.

We will work internally to $\CPsh^{\CP}$, $\CPsh^{\CC}$ and $\CPsh^{\CS}$ and use the dependent right adjoints $\modcolor{F^{\ast}}$, $\modcolor{G^{\ast}}$, $\modcolor{Y^{\ast}}$, $\modcolor{Y_{\ast}}$ and their compositions.
There is actually, up to isomorphism, only a single new composite dependent right adjoint $\modcolor{Y_{\ast}Y^{\ast}}$.







\begin{construction}\label{con:disp_replace_model}
  We construct a displayed model with context extensions $\CS^{\dagger}$ over $G : \CP \ratri \CS$.
  Furthermore, we equip $Y : \CC \to \CP$ with actions on displayed types and terms that preserve all displayed type-theoretic operations.
\end{construction}
\begin{proof}[Construction]
  We pose, internally to $\CPsh^{\CP}$,
  \begin{alignat*}{3}
    & \Ty^{\dagger} && :{ } && (A : \mmod{G^{\ast}} \to \Ty^{\CS}) \to \SPsh^{\CP} \\
    & \Ty^{\dagger}\ A && \triangleq{ } &&
    \mmod{Y_{\ast}} \to \Ty^{\bullet}\ (\lambda \mmod{F^{\ast}} \mapsto (A\ \mmod{G^{\ast}})\smkey{Y_{\ast}F^{\ast}}{G^{\ast}}) \\
    & \Tm^{\dagger} && :{ } && \forall A\ (A^{\dagger} : \Ty^{\dagger}\ A) (a : \mmod{G^{\ast}} \to \Tm^{\CS}\ (A\ \mmod{G^{\ast}})) \to \SPsh^{\CP} \\
    & \Tm^{\dagger}\ A^{\dagger}\ a && \triangleq{ } &&
    \mmod{Y_{\ast}} \to \Tm^{\bullet}\ (A^{\dagger}\ \mmod{Y_{\ast}})\ (\lambda \mmod{F^{\ast}} \mapsto (a\ \mmod{G^{\ast}})\smkey{Y_{\ast}F^{\ast}}{G^{\ast}})
  \end{alignat*}
  By \cref{prop:disp_yoneda_rep_1}, the family $(\mmod{G^{\ast}} \to \Tm^{\CS}\ (A\ \mmod{G^{\ast}}))$ is locally representable.
  By \cref{prop:disp_yoneda_rep_2}, the family $\Tm^{\dagger}\ A^{\dagger}\ a$ is also locally representable.
  Thus the dependent sum
  \[ (a : \mmod{G^{\ast}} \to \Tm^{\CS}\ (A\ \mmod{G^{\ast}})) \times (a^{\dagger} : \Tm^{\dagger}\ A^{\dagger}\ a) \]
  is a locally representable family of presheaves.
  The fact that the first projection map
  \[ (a : \mmod{G^{\ast}} \to \Tm^{\CS}\ (A\ \mmod{G^{\ast}})) \times (a^{\dagger} : \Tm^{\dagger}\ A^{\dagger}\ a) \xrightarrow{\lambda (a,a^{\dagger}) \mapsto a} (\mmod{G^{\ast}} \to \Tm^{\CS}\ (A\ \mmod{G^{\ast}})) \]
  preserves context extensions follows from \cref{prop:disp_yoneda_rep_3}.

  We have, internally to $\CPsh^{\CC}$, the following isomorphisms
  \begin{alignat*}{3}
    & Y^{\Ty} && :{ } &&
    (A : \mmod{F^{\ast}} \to \Ty^{\CS}) \\
    &&&&& \to \Ty^{\bullet}\ A \simeq (\mmod{Y^{\ast}} \to \Ty^{\dagger}\ (\lambda \mmod{G^{\ast}} \mapsto (A\ \mmod{F^{\ast}})\smkey{Y^{\ast}G^{\ast}}{F^{\ast}})) \\
    & Y^{\Ty} && \triangleq{ } &&
    \lambda A\ A^{\bullet}\ \mmod{Y^{\ast}Y_{\ast}} \mapsto A^{\bullet}\smkey{Y^{\ast}Y_{\ast}}{\bullet} \\
    & Y^{\Ty,-1} && \triangleq{ } &&
    \lambda A\ A^{\dagger}\ \mapsto (A^{\dagger}\ \mmod{Y^{\ast}Y_{\ast}})\smkey{\bullet}{Y^{\ast}Y_{\ast}} \\
    & Y^{\Tm} && :{ } &&
    \forall (A : \mmod{F^{\ast}} \to \Ty^{\CS})\ (A^{\bullet} : \Ty^{\bullet}\ A)\ (a : \mmod{F^{\ast}} \to \Tm^{\CS}\ (A\ \mmod{F^{\ast}})) \\
    &&&&& \to \Tm^{\bullet}\ A^{\bullet}\ a \simeq (\mmod{Y^{\ast}} \to \Tm^{\dagger}\ (Y^{\Ty}\ A^{\bullet}\ \mmod{Y^{\ast}})\ (\lambda \mmod{G^{\ast}} \mapsto (a\ \mmod{F^{\ast}})\smkey{Y^{\ast}G^{\ast}}{F^{\ast}})) \\
    & Y^{\Tm} && \triangleq{ } &&
    \lambda A^{\bullet}\ a\ a^{\bullet}\ \mmod{Y^{\ast}Y_{\ast}} \mapsto a^{\bullet}\smkey{Y^{\ast}Y_{\ast}}{\bullet} \\
    & Y^{\Tm,-1} && \triangleq{ } &&
    \lambda A^{\bullet}\ a\ a^{\dagger}\ \mapsto (a^{\dagger}\ \mmod{Y^{\ast}Y_{\ast}})\smkey{\bullet}{Y^{\ast}Y_{\ast}}
  \end{alignat*}

  More generally, for every (global) telescope $X$ of $\CS$, we have the following isomorphisms
  \begin{alignat*}{3}
    & Y^{[X]\Ty} && :{ } &&
    (A : \mmod{F^{\ast}} \to X \to \Ty^{\CS}) \\
    &&&&& \to (\forall x\ (x^{\bullet} : X^{\bullet}\ x) \to \Ty^{\bullet}\ (A \circledast x)) \\
    &&&&& \simeq (\forall \mmod{Y^{\ast}}\ x\ (x^{\dagger} : X^{\dagger}\ x) \to \Ty^{\dagger}\ (\lambda \mmod{G^{\ast}} \mapsto (A\ \mmod{F^{\ast}})\smkey{Y^{\ast}G^{\ast}}{F^{\ast}})\ (x\  \mmod{G^{\ast}})) \\
    & Y^{[X]\Ty} && \triangleq{ } &&
    \lambda A\ A^{\bullet}\ \mmod{Y^{\ast}}\ x\ x^{\dagger}\ \mmod{Y_{\ast}} \mapsto A^{\bullet}\smkey{Y^{\ast}Y_{\ast}}{\bullet}\ (Y^{X,-1}\ (\lambda \mmod{Y^{\ast}} \mapsto x^{\dagger}\smkey{Y_{\ast}Y^{\ast}}{\bullet})) \\
    & Y^{[X]\Ty,-1} && \triangleq{ } &&
    \lambda A\ A^{\dagger}\ x^{\bullet} \mapsto (A^{\dagger}\ \mmod{Y^{\ast}}\ (Y^{X}\ x^{\bullet}\ \mmod{Y^{\ast}})\ \mmod{Y_{\ast}})\smkey{\bullet}{Y^{\ast}Y_{\ast}} \\
    & Y^{[X]\Tm} && :{ } &&
    \forall (A : \mmod{F^{\ast}} \to X \to \Ty^{\CS})\ (A^{\bullet} : \forall x\ (x^{\bullet} : X^{\bullet}\ x) \to \Ty^{\bullet}\ (A \circledast x)) \\
    &&&&& \phantom{\forall{ }} (a : \mmod{F^{\ast}} \to (x : X) \to \Tm^{\CS}\ (A\ \mmod{F^{\ast}}\ x)) \\
    &&&&& \to (\forall x\ (x^{\bullet} : X^{\bullet}\ x) \to \Tm^{\bullet}\ (A^{\bullet}\ x^{\bullet})\ (a \circledast x)) \\
    &&&&& \simeq (\forall \mmod{Y^{\ast}}\ x\ (x^{\dagger} : X^{\dagger}\ x) \to \\
    &&&&& \hphantom{\simeq (} \Tm^{\dagger}\ (Y^{[X]\Ty}\ A^{\bullet}\ \mmod{Y^{\ast}}\ x^{\dagger})\ (\lambda \mmod{G^{\ast}} \mapsto (a\ \mmod{F^{\ast}})\smkey{Y^{\ast}G^{\ast}}{F^{\ast}})\ (x\  \mmod{G^{\ast}})) \\
    & Y^{[X]\Tm} && \triangleq{ } &&
    \lambda a\ a^{\bullet}\ \mmod{Y^{\ast}}\ x\ x^{\dagger}\ \mmod{Y_{\ast}} \mapsto a^{\bullet}\smkey{Y^{\ast}Y_{\ast}}{\bullet}\ (Y^{X,-1}\ (\lambda \mmod{Y^{\ast}} \mapsto x^{\dagger}\smkey{Y_{\ast}Y^{\ast}}{\bullet})) \\
    & Y^{[X]\Tm,-1} && \triangleq{ } &&
    \lambda a\ a^{\dagger}\ x^{\bullet} \mapsto (a^{\dagger}\ \mmod{Y^{\ast}}\ (Y^{X}\ x^{\bullet}\ \mmod{Y^{\ast}})\ \mmod{Y_{\ast}})\smkey{\bullet}{Y^{\ast}Y_{\ast}}
  \end{alignat*}
  where $X^{\bullet}$, $X^{\dagger}$ and $Y^{X}$ can be defined by induction on $X$.
  In particular we obtain bijective actions of $Y$ on every derived sort of the theory.

  It remains to define the type-theoretic operations of $\CS^{\dagger}$.
  Each operation of $\CS^{\dagger}$ should be derived from the corresponding operation of $\CS^{\bullet}$.
  We want all of the displayed operations to be preserved by $Y^{\Ty}$ and $Y^{\Tm}$.

  In the case of the $\Pi$ type former, this translates to the following equation, internally to $\CPsh^{\CC}$.
  \begin{alignat*}{3}
    & Y^{\Ty}\ (\Pi^{\bullet}\ A^{\bullet}\ B^{\bullet})\ \mmod{Y^{\ast}} && ={ } && \Pi^{\dagger}\ (Y^{\Tm}\ A^{\bullet}\ \mmod{Y^{\ast}})\ (Y^{[\Tm]\Ty}\ B^{\bullet}\ \mmod{Y^{\ast}})
  \end{alignat*}

  To define $\Pi^{\dagger}$, we essentially have solve this equation
  We look for a candidate with the following shape (where $\bm{A^{\ddagger}}$, $\bm{B^{\ddagger}}$, \etc, are still to be determined).
  \begin{alignat*}{3}
    & \Pi^{\dagger} && ={ } && \lambda A^{\dagger}\ B^{\dagger}\ \mmod{Y_{\ast}} \mapsto \Pi^{\bullet}\ \bm{A^{\ddagger}}(A^{\dagger}, \mmod{Y_{\ast}})\ \bm{B^{\ddagger}}(B^{\dagger}, \mmod{Y_{\ast}})
  \end{alignat*}

  The following equation should then hold internally to $\CPsh^{\CP}$.
  \begin{alignat*}{3}
    & \lambda \mmod{Y_{\ast}} \mapsto \Pi^{\bullet}\ A^{\bullet}\ B^{\bullet} && ={ } && \lambda \mmod{Y_{\ast}} \mapsto \Pi^{\bullet}\ \bm{A^{\ddagger}}(Y^{\Ty}\ A^{\bullet}\ \mmod{Y^{\ast}}, \mmod{Y_{\ast}})\ \bm{B^{\ddagger}}(Y^{[\Tm]\Ty}\ B^{\bullet}\ \mmod{Y^{\ast}}, \mmod{Y_{\ast}})
  \end{alignat*}

  We use the following definitions for $\bm{A^{\ddagger}}$ and $\bm{B^{\ddagger}}$.
  \begin{alignat*}{3}
    & \bm{A^{\ddagger}}(A^{\dagger}, \mmod{Y_{\ast}}) && \triangleq{ } && A^{\dagger}\ \mmod{Y_{\ast}} \\
    & \bm{B^{\ddagger}}(B^{\dagger}, \mmod{Y_{\ast}}) && \triangleq{ } &&
    \lambda a^{\bullet} \mapsto (B^{\dagger}\smkey{Y_{\ast}Y^{\ast}}{\bullet}\ (Y^{\Tm}\ a^{\bullet}\ \mmod{Y^{\ast}})\ \mmod{Y_{\ast}})\smkey{\bullet}{Y^{\ast}Y_{\ast}}
  \end{alignat*}

  Thus we obtain
  \begin{alignat*}{3}
    & \Pi^{\dagger} && ={ } && \lambda A^{\dagger}\ B^{\dagger}\ \mmod{Y_{\ast}} \mapsto \Pi^{\bullet}\ (A^{\dagger}\ \mmod{Y_{\ast}})\ (\lambda a^{\bullet} \mapsto (B^{\dagger}\smkey{Y_{\ast}Y^{\ast}}{\bullet}\ (Y^{\Tm}\ a^{\bullet}\ \mmod{Y^{\ast}})\ \mmod{Y_{\ast}})\smkey{\bullet}{Y^{\ast}Y_{\ast}})
  \end{alignat*}

  In general, we have to solve equations of the form
  \[ \bm{y^{\ddagger}}(Y^{[X]\Ty}\ y^{\bullet}\ \mmod{Y^{\ast}}, \mmod{Y_{\ast}}) = y^{\bullet}. \]
  or
  \[ \bm{y^{\ddagger}}(Y^{[X]\Tm}\ y^{\bullet}\ \mmod{Y^{\ast}}, \mmod{Y_{\ast}}) = y^{\bullet}. \]
  They admit solutions with the following shape:
  \[ \bm{y^{\ddagger}}(y^{\dagger}, \mmod{Y_{\ast}}) \triangleq \lambda x^{\bullet} \mapsto (y^{\dagger}\smkey{Y_{\ast}Y^{\ast}}{\bullet}\ (Y^{X}\ x^{\bullet}\ \mmod{Y^{\ast}})\ \mmod{Y_{\ast}})\smkey{\bullet}{Y^{\ast}Y_{\ast}}. \]

  This provides a definition of all displayed type-theoretic operations of $\CS^{\dagger}$.
  \begin{alignat*}{3}
    & \Pi^{\dagger}\ A^{\dagger}\ B^{\dagger} && ={ } && \lambda \mmod{Y_{\ast}} \mapsto \Pi^{\bullet}\ (A^{\dagger}\ \mmod{Y_{\ast}})\ (\lambda a^{\bullet} \mapsto (B^{\dagger}\smkey{Y_{\ast}Y^{\ast}}{\bullet}\ (Y^{\Tm}\ a^{\bullet}\ \mmod{Y^{\ast}})\ \mmod{Y_{\ast}})\smkey{\bullet}{Y^{\ast}Y_{\ast}}) \\
    & \app^{\dagger}\ f^{\dagger}\ a^{\dagger} && ={ } && \lambda \mmod{Y_{\ast}} \mapsto \app^{\bullet}\ (f^{\dagger}\ \mmod{Y_{\ast}})\ (a^{\dagger}\ \mmod{Y_{\ast}}) \\
    & \lam^{\dagger}\ b^{\dagger} && ={ } && \lambda \mmod{Y_{\ast}} \mapsto \lam^{\bullet}\ (\lambda a^{\bullet} \mapsto (b^{\dagger}\smkey{Y_{\ast}Y^{\ast}}{\bullet}\ (Y^{\Tm}\ a^{\bullet}\ \mmod{Y^{\ast}})\ \mmod{Y_{\ast}})\smkey{\bullet}{Y^{\ast}Y_{\ast}}) \\
    &&&&& \dots
  \end{alignat*}

  These operations satisfy the equations that hold in $\CS^{\bullet}$.
\end{proof}

\subsection{Displayed inserters}

We fix two parallel displayed functors $K, L : \CC \to \CD$ over a base category $\CS$.

\begin{mathpar}
  \begin{tikzcd}
    \CC \ar[rd, "F"', -{Triangle[open]}] \ar[rr, "K", shift left] \ar[rr, "L"', shift right] && \CD \ar[dl, "G", -{Triangle[open]}] \\
    &\CS&
  \end{tikzcd}
\end{mathpar}

\begin{definition}
  The \defemph{displayed inserter} of $K$ and $L$ is a displayed category $I : \CI \ratri \CC$ over $\CC$.
  \begin{itemize}
    \item A object of $\CI$ displayed over an object $x$ of $\CC$ is a displayed morphism
          \[ \iota^{x} : \CD(K\ x \to_{\id_{F\ x}} L\ x). \]
    \item A morphism of $\CI$ from $\iota^{x}$ to $\iota^{y}$ displayed over $f : \CC(x \to y)$ is a proof of the commutation of the square
          \begin{mathpar}
            \begin{tikzcd}
              K\ x \ar[r, "\iota^{x}"] \ar[d, "K\ f"'] & L\ x \ar[d, "L\ f"] \\
              K\ y \ar[r, "\iota^{y}"] & L\ y
            \end{tikzcd}
          \end{mathpar}
  \end{itemize}

  There is a natural transformation $\iota : (I \cdot K) \Ra (I \cdot L)$ formed by the morphisms $\iota^{x}$.

  The category $\CI$ satisfies the following universal property: for every category $\CA$ along with a functor $A : \CA \to \CC$ and a natural transformation $\gamma : (A \cdot K) \Ra (A \cdot L)$, there exists a unique functor $B : \CA \to \CI$ such that $A = (B \cdot I)$ and $\gamma = (B \cdot \iota)$.
  \lipicsEnd{}
\end{definition}

\begin{proposition}[Internally to $\CPsh^{\CI}$]\label{prop:inserter_rep}
  Assume given locally representable presheaves
  \begin{alignat*}{3}
    & A^{\CS} && :{ } && \mmod{I^{\ast}F^{\ast}} \to \SRepPsh^{\CS} \\
    & A^{\CC} && :{ } && \mmod{I^{\ast}} \to \SRepPsh^{\CC} \\
    & A^{\CD} && :{ } && \mmod{I^{\ast}L^{\ast}} \to \SRepPsh^{\CD}
  \end{alignat*}
  along with actions of $K$, $L$ and $G$ on these presheaves
  \begin{alignat*}{3}
    & K^{A} && :{ } && \mmod{I^{\ast}} \to A^{\CC}\ \mmod{I^{\ast}} \to \mmod{K^{\ast}} \to (A^{\CD}\ \mmod{I^{\ast}F^{\ast}})\emkey{\iota}{I^{\ast}K^{\ast}}{I^{\ast}L^{\ast}} \\
    & L^{A} && :{ } && \mmod{I^{\ast}} \to A^{\CC}\ \mmod{I^{\ast}} \to \mmod{L^{\ast}} \to A^{\CD}\ \mmod{I^{\ast}L^{\ast}} \\
    & G^{A} && :{ } && \mmod{I^{\ast}L^{\ast}} \to A^{\CD}\ \mmod{I^{\ast}L^{\ast}} \to \mmod{G^{\ast}} \to (A^{\CS}\ \mmod{I^{\ast}F^{\ast}})\smkey{L^{\ast}G^{\ast}}{F^{\ast}}
  \end{alignat*}
  such that $L^{A}$ and $G^{A}$ preserve context extensions and the two composed actions of $F$ on $A$ coindide, \ie the equality
  \begin{alignat*}{1}
    & (G^{A}\ \mmod{I^{\ast}L^{\ast}}\ (L^{A}\ \mmod{I^{\ast}}\ a\ \mmod{L^{\ast}})\ \mmod{G^{\ast}})\smkey{F^{\ast}}{L^{\ast}G^{\ast}} \\
    & \quad { }= ((G^{A}\ \mmod{I^{\ast}L^{\ast}})\emkey{\iota}{I^{\ast}K^{\ast}}{I^{\ast}L^{\ast}}\ (K^{A}\ \mmod{I^{\ast}}\ a\ \mmod{K^{\ast}})\ \mmod{G^{\ast}})\smkey{F^{\ast}}{K^{\ast}G^{\ast}}
  \end{alignat*}
  holds over the context $(\mmod{I^{\ast}}, a : A^{\CC}\ \mmod{I^{\ast}}, \mmod{F^{\ast}})$.

  Then the presheaf
  \[ A^{\CI} \triangleq \mmod{I^{\ast}} \to \{ a : A^{\CC}\ \mmod{I^{\ast}} \mid \mmod{K^{\ast}} \to (K^{A}\ \mmod{I^{\ast}}\ a\ \mmod{K^{\ast}}) = (L^{A}\ \mmod{I^{\ast}}\ a\ \mmod{L^{\ast}})\emkey{\iota}{I^{\ast}K^{\ast}}{I^{\ast}L^{\ast}} \}\]
  is locally representable and the first projection map
  \[ I^{A} : A^{\CI} \to \mmod{I^{\ast}} \to A^{\CC}\ \mmod{I^{\ast}} \]
  preserves context extensions.
\end{proposition}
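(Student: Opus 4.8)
The plan is to reduce local representability to the left-adjoint characterization of \cref{lem:loc-rep-via-left-adj}, construct the required right adjoint by hand using the comparison isomorphism coming from ``$L^A$ preserves context extensions'', and then read off preservation of context extensions for $I^A$ from \cref{preservation-of-extension}. I will argue externally, via categories of elements, as in the proof of \cref{preservation-of-extension}; internally the same argument would be drowned in the coercion bookkeeping for $\modcolor{I^\ast}$, $\modcolor{K^\ast}$, $\modcolor{L^\ast}$, $\modcolor{G^\ast}$.

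First I would pass to categories of elements. Write $p_\CC : \int_\CC A^\CC \to \CC$, $p_\CD : \int_\CD A^\CD \to \CD$, $p_\CS : \int_\CS A^\CS \to \CS$ for the projections; local representability of $A^\CC$, $A^\CD$, $A^\CS$ gives, by \cref{lem:loc-rep-via-left-adj}, right adjoints $q_\CC$, $q_\CD$, $q_\CS$ with the explicit description recalled there. The actions $K^A$, $L^A$, $G^A$ induce functors $\widehat{K^A}, \widehat{L^A} : \int_\CC A^\CC \to \int_\CD A^\CD$ (over $K$ and $L$) and $\widehat{G^A} : \int_\CD A^\CD \to \int_\CS A^\CS$ (over $G$), and the hypothesis that the two composite $F$-actions coincide says exactly $\widehat{K^A}\cdot\widehat{G^A} = \widehat{L^A}\cdot\widehat{G^A}$. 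The key observation is that $\int_\CI A^\CI$ is canonically isomorphic to the displayed inserter of $\widehat{K^A}$ and $\widehat{L^A}$ over $\int_\CS A^\CS$: an object of either is an object $x$ of $\CC$, an element $a$ of $A^\CC$ over $x$, and a vertical morphism $Kx \to Lx$ under which $K^A a$ and $L^A a$ correspond, and morphisms match up similarly. Under this identification, the projection $\int_\CI A^\CI \to \CI$ is the functor induced on displayed inserters by the commuting squares $\widehat{K^A}\cdot p_\CD = p_\CC\cdot K$ and $\widehat{L^A}\cdot p_\CD = p_\CC\cdot L$ together with $p_\CC$, $p_\CD$.

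Next I construct a right adjoint $R : \CI \to \int_\CI A^\CI$ to this projection. On $(x,\iota^x)$, let $(x^+, \bm p, \bm q) = q_\CC(x)$ be the extended context of $A^\CC$ at $x$. Since $L^A$ preserves context extensions, \cref{preservation-of-extension} says the comparison $c_L : L x^+ \to (Lx)\rhd A^\CD$ is invertible, with $L^A\bm q = \bm q_{(Lx)\rhd A^\CD}[c_L]$ and $c_L\cdot\bm p_{(Lx)\rhd A^\CD} = L\bm p$. Using the universal property of $(Lx)\rhd A^\CD$, the morphism $K\bm p\cdot\iota^x : Kx^+ \to Lx$ paired with the element $K^A\bm q$ of $A^\CD$ over $Kx^+$ yields $\widetilde\iota : Kx^+ \to (Lx)\rhd A^\CD$; set $\iota^{x^+} := \widetilde\iota\cdot c_L^{-1} : Kx^+ \to Lx^+$. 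Short computations with the $\beta$-rules for extended contexts then give: (i) $G\iota^{x^+} = \id_{Fx^+}$, since $G\widetilde\iota = Gc_L = \langle F\bm p, F^A\bm q\rangle$, using $G\iota^x = \id$ and the coincidence of the two $F$-actions on $\bm q$ (so that $(x^+,\iota^{x^+})$ is an object of $\CI$); (ii) $\iota^{x^+}\cdot L\bm p = K\bm p\cdot\iota^x$, so that $\bm p$ lifts to a counit morphism $(x^+,\iota^{x^+})\to(x,\iota^x)$ in $\CI$; and (iii) $L^A\bm q[\iota^{x^+}] = K^A\bm q$, i.e.\ $\bm q$ satisfies the equalizer condition, so $\bm q$ lies in $A^\CI$ over $(x^+,\iota^{x^+})$. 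One checks naturality in $(x,\iota^x)$, and sets $R(x,\iota^x) := ((x^+,\iota^{x^+}),\bm q)$. For the universal property: given $((z,\iota^z),b)$ in $\int_\CI A^\CI$ and a morphism $g : (z,\iota^z)\to(x,\iota^x)$ of $\CI$, local representability of $A^\CC$ produces the unique $\langle g,b\rangle : z\to x^+$ in $\CC$ with $\langle g,b\rangle\cdot\bm p = g$ and $\bm q[\langle g,b\rangle] = b$; using that $g$ is a morphism of $\CI$, that $b$ satisfies the equalizer condition over $(z,\iota^z)$, naturality of $K^A$ and $L^A$, and (ii)--(iii), the naturality square of $\langle g,b\rangle$ with respect to $\iota^z$ and $\iota^{x^+}$ commutes, so $\langle g,b\rangle$ lifts to the required morphism $((z,\iota^z),b)\to((x^+,\iota^{x^+}),\bm q)$, unique because morphisms of $\CI$ over a fixed base morphism are unique and the underlying map is forced by local representability. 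By \cref{lem:loc-rep-via-left-adj}, $A^\CI$ is locally representable. Finally, $\widehat{I^A}\cdot q_\CC = q_\CI\cdot I$ strictly (both send $(x,\iota^x)$ to $(x^+,\bm q)$), so the comparison morphism measuring preservation of context extensions for $I^A$ is the identity, and $I^A$ preserves context extensions by \cref{preservation-of-extension}.

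The main obstacle is the construction of $\iota^{x^+}$ and the verification of (i)--(iii): this is the only place all the hypotheses enter — local representability of the three presheaves, preservation of context extensions by $L^A$ and $G^A$, and the coincidence of the two composite $F$-actions — and it is easy to get tangled because it interleaves the universal property of the extended context of $A^\CC$, the comparison isomorphism $c_L$ for $L^A$, and (if carried out internally rather than externally as above) the coercions attached to the modalities $\modcolor{I^\ast}$, $\modcolor{K^\ast}$, $\modcolor{L^\ast}$, $\modcolor{G^\ast}$.
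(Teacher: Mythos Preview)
Your proposal is correct and follows essentially the same route as the paper: both arguments externalize, construct the extended object $(x^{+},\iota^{x^{+}})$ by pairing $K\bm{p}\cdot\iota^{x}$ with $K^{A}\bm{q}$ via the universal property of the $A^{\CD}$-extension of $Lx$, verify $G\iota^{x^{+}}=\id$ using the coincidence of the two composite $F$-actions, and then check the universal property by hand. Your packaging is slightly cleaner in two respects: you invoke \cref{lem:loc-rep-via-left-adj} throughout rather than unfolding local representability, and you record the pleasant observation that $\int_{\CI} A^{\CI}$ is itself the displayed inserter of $\widehat{K^{A}}$ and $\widehat{L^{A}}$ over $\int_{\CS} A^{\CS}$, which makes the construction of the right adjoint feel inevitable; the paper instead works directly with the slice presheaves $A^{\CI}_{\mid(y,\iota^{y})}$. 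Neither approach avoids the core computation you flag as the main obstacle.
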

\begin{proof}
  We translate the statement externally.
  Fix an object $(x,\iota^{x})$ of $\CI$.
  We have locally representable dependent presheaves
  \begin{alignat*}{3}
    & A^{\CS} && :{ } && \forall (y : \CS^{\op}) (\rho : \CS(y \to F\ x)) \to \CSet \\
    & A^{\CC} && :{ } && \forall (y : \CC^{\op}) (\rho : \CC(y \to x)) \to \CSet \\
    & A^{\CD} && :{ } && \forall (y : \CD^{\op}) (\rho : \CD(y \to L\ x)) \to \CSet
  \end{alignat*}
  and dependent natural transformations
  \begin{alignat*}{3}
    & K^{A} && :{ } && \forall (y : \CC^{\op}) (\rho : \CC(y \to x)) \to A^{\CC}\ \rho \to A^{\CD}\ (K\ \rho \cdot \iota^{x}) \\
    & L^{A} && :{ } && \forall (y : \CC^{\op}) (\rho : \CC(y \to x)) \to A^{\CC}\ \rho \to A^{\CD}\ (L\ \rho) \\
    & G^{A} && :{ } && \forall (y : \CD^{\op}) (\rho : \CC(y \to L\ x)) \to A^{\CD}\ \rho \to A^{\CS}\ (G\ \rho)
  \end{alignat*}
  such that $L^{A}$ and $G^{A}$ preserves context extensions and such that for every $\rho : \CC(y \to x)$ and $a : A^{\CC}\ \rho$,
  we have $G^{A}\ (L\ \rho)\ (L^{A}\ \rho\ a) = G^{A}\ (K\ \rho \cdot \iota^{x})\ (K^{A}\ \rho\ a)$.

  We have to show that the dependent presheaf
  \begin{alignat*}{3}
    & A^{\CI} && :{ } && \forall ((y,\iota^{y}) : \CI^{\op}) (\rho : \CI((y,\iota^{y}) \to (x,\iota^{x}))) \to \CSet \\
    & A^{\CI}\ (y,\iota^{y})\ \rho && \triangleq{ } && \{ a : A^{\CC}\ \rho \mid K^{A}\ \rho\ a = (L^{A}\ \rho\ a)[\iota^{y}] \}
  \end{alignat*}
  is locally representable.

  Fix some object $(y,\iota^{y})$ of $\CI$ along with $\rho : \CI((y,\iota^{y}) \to (x,\iota^{x}))$.
  Recall that $\rho$ is a morphism $\rho : \CC(y \to x)$ such that $\iota^{y} \cdot K\ \rho = L\ \rho \cdot \iota^{x}$.

  We have to show that the presheaf
  \begin{alignat*}{3}
    & A^{\CI}_{\mid (y,\iota^{y})} && :{ } && \forall (z : \CI^{\op}) (\sigma : \CI(z \to (y,\iota^{y}))) \to \CSet \\
    & A^{\CI}_{\mid (y,\iota^{y})}\ \sigma && \triangleq{ } && A^{\CI}\ (\sigma \cdot \rho)
  \end{alignat*}
  is representable.

  We know that $A^{\CC}_{\mid y}$ and $A^{\CD}_{\mid L\ y}$ are representable and that $L^{A}$ preserves context extensions.
  Thus we have some representing object $\bm{p} : \CC(y^{\rhd} \to y)$ of $A^{\CC}_{\mid y}$, and we know that $L\ \bm{p} : \CD(L\ y^{\rhd} \to L\ y)$ is a representing object of $A^{\CD}_{\mid L\ y}$.
  We have a generic element $\bm{q} : A^{\CC}\ (\bm{p} \cdot \rho)$ for $A^{\CC}_{\mid y}$, and $L^{A}\ (\bm{p} \cdot \rho)\ \bm{q} : A^{\CD}\ (L\ (\bm{p} \cdot \rho))$ is a generic element for $A^{\CD}_{\mid L\ y}$.

  We construct a morphism $\iota^{y^{\rhd}} : \CD(K\ y^{\rhd} \to L\ y^{\rhd})$ such that the square
  \begin{mathpar}
    \begin{tikzcd}
      K\ y^{\rhd} \ar[d, "K\ \bm{p}"'] \ar[r, "\iota^{y^{\rhd}}"] & L\ y^{\rhd} \ar[d, "L\ \bm{p}"] \\
      K\ y \ar[r, "\iota^{y}"] & L\ y
    \end{tikzcd}
  \end{mathpar}
  commutes and such that $G\ \iota^{y^{\rhd}} = \id_{F\ y}$.

  By the universal property of $L\ y^{\rhd}$, we define $\iota^{y^{\rhd}}$ as the extension of $K\ \bm{p} \cdot \iota^{y}$ by the element $K^{A}\ (\bm{p} \cdot \rho)\ \bm{q} : A^{\CD}\ (K\ \bm{p} \cdot \iota^{y} \cdot L\ \rho)$.
  \[ \iota^{y^{\rhd}} \triangleq \angles{K\ \bm{p} \cdot \iota^{y}, K^{A}\ (\bm{p} \cdot \rho)\ \bm{q}}. \]
  We can then compute
  \begin{alignat*}{3}
    & G\ \iota^{y^{\rhd}}
    && ={ } && G\ \angles{K\ \bm{p} \cdot \iota^{y}, K^{A}\ (\bm{p} \cdot \rho)\ \bm{q}} \\
    &&& ={ } && \angles{G\ (K\ \bm{p}) \cdot G\ \iota^{y}, G^{A}\ (K\ (\bm{p} \cdot \rho) \cdot \iota^{x})\ (K^{A}\ (\bm{p} \cdot \rho)\ \bm{q})} \\
    &&& ={ } && \angles{G\ (L\ \bm{p}), G^{A}\ (L\ (\bm{p} \cdot \rho))\ (L^{A}\ (\bm{p} \cdot \rho)\ \bm{q})} \\
    &&& ={ } && G\ \angles{L\ \bm{p}, L^{A}\ (\bm{p} \cdot \rho)\ \bm{q}} \\
    &&& ={ } && G\ (L\ \angles{\bm{p},\bm{q}}) \\
    &&& ={ } && \id
  \end{alignat*}

  This defines an object $(y^{\rhd}, \iota^{y^{\rhd}})$ of $\CI$, equipped with a projection $\bm{p}$ into $(y,\iota^{y})$.
  It remains to show that this object represents $A^{\CI}_{\mid (y,\iota^{y})}$.

  Fix an object $(z,\iota^{z})$ of $\CI$ along with a morphism $\sigma : \CI((z,\iota^{z}) \to (y,\iota^{z}))$.
  We know that factorizations of $\sigma : \CC(z \to y)$ through $\bm{p} : \CC(y^{\rhd} \to y)$ are in natural bijection with elements of $A^{\CC}\ (\sigma \cdot \rho)$.
  We extend this bijection to $\CI$.
  Because a displayed morphism of $\CI$ over a morphism of $\CC$ only consists of propositional data, we only need to construct a logical equivalence at the level of $A^{\CI}$.

  Take an element $a : A^{\CI}\ (\sigma \cdot \rho)$.
  By the universal property of $y^{\rhd}$, we have an extended morphism $\angles{\sigma, a} : \CC(z \to y^{\rhd})$.
  Let's show that the square
  \begin{mathpar}
    \begin{tikzcd}
      K\ z \ar[r, "\iota^{z}"] \ar[d, "K\ \angles{\sigma,a}"'] & L\ z \ar[d, "L\ \angles{\sigma,a}"] \\
      K\ y^{\rhd} \ar[r, "\iota^{y^{\rhd}}"] & L\ y^{\rhd}
    \end{tikzcd}
  \end{mathpar}
  commutes.

  By the universal property of $L^{y^{\rhd}}$, both $K\ \angles{\sigma,a} \cdot \iota^{y^{\rhd}}$ and $\iota^{z} \cdot L\ \angles{\sigma,a}$ can be written as the extension of some morphism in $\CD(K\ z \to L\ y)$ by some element of $A^{\CD}$.
  We compute
  \begin{alignat*}{3}
    & K\ \angles{\sigma,a} \cdot a^{y^{\rhd}} && ={ } && K\ \angles{\sigma,a} \cdot \angles{K\ \bm{p} \cdot \iota^{y}, K^{A}\ (\bm{p} \cdot \rho)\ \bm{q}} \\
    &&& ={ } && \angles{K\ (\angles{\sigma,a} \cdot \bm{p}) \cdot \iota^{y}, (K^{A}\ (\bm{p} \cdot \rho)\ \bm{q})[K\ \angles{\sigma,a}]} \\
    &&& ={ } && \angles{K\ \sigma \cdot \iota^{y}, (K^{A}\ (\bm{p} \cdot \rho)\ \bm{q})[K\ \angles{\sigma,a}]} \\
    &&& ={ } && \angles{K\ \sigma \cdot \iota^{y}, K^{A}\ (\sigma \cdot \rho)\ a}
  \end{alignat*}
  and
  \begin{alignat*}{3}
    & \iota^{z} \cdot L\ \angles{\sigma,a} && ={ } && \iota^{z} \cdot \angles{L\ \sigma, L^{A}\ (\sigma \cdot \rho)\ a} \\
    &&& ={ } && \angles{\iota^{z} \cdot L\ \sigma, (L^{A}\ (\sigma \cdot \rho)\ a)[\iota^{z}]}
  \end{alignat*}
  Now $K\ \sigma \cdot \iota^{y} = \iota^{z} \cdot L\ \sigma$ because $\sigma$ is a morphism of $\CI$, and $K^{A}\ (\sigma \cdot \rho)\ a = (L^{A}\ (\sigma \cdot \rho)\ a)[\iota^{z}]$ because $a$ is an element of $A^{\CI}$.
  Thus we have $K\ \angles{\sigma,a} \cdot \iota^{y^{\rhd}} = \iota^{z} \cdot L\ \angles{\sigma,a}$, as needed.

  This concludes the proof that $A^{\CI}_{\mid (y,\iota^{y})}$ is representable.
  The dependent presheaf $A^{\CI}$ is thus locally representable.

  Finally, we also have to check that the representing objects that we have constructed are natural in $(x,\iota^{x})$.
  This follows from the fact that the representing objects of $A^{\CC}$ are natural in $x$.
\end{proof}

We now return to the setting of our induction principles.
We fix a base model $\CS$ of $\Th_{\Pi,\BoolTy}$, a functor $F : \CC \to \CS$ equipped with a displayed model without context extensions $\CS^{\bullet}$, a factorization $(\CC \xrightarrow{Y} \CP \xrightarrow{G} \CS, \CS^{\dagger})$ and a section $S$ of $\CS^{\dagger}$.
We let $\CI(\CS^{\bullet})$ be the displayed inserter of $Y$ and $F \cdot S$.

The following diagram describes the categories and functors in play.
\begin{mathpar}
 \begin{tikzcd}
    \CI(\CS^{\bullet}) \ar[d, "I", -{Triangle[open]}] && \\
    \CC \ar[rr, "Y"] \ar[dr, "F", -{Triangle[open]}] &&
    \CP \ar[dl, "G"', -{Triangle[open]}] \\
    & \CS \ar[ur, "S_{0}"', bend right] &
  \end{tikzcd}
\end{mathpar}

\begin{proposition}\label{prop:inserter_terminal}
  If $\CC$ has a terminal object that is preserved by $F : \CC \to \CS$, then $\CI(\CS^{\bullet})$ has a terminal object that is preserved by $I : \CI(\CS^{\bullet}) \to \CC$.
\end{proposition}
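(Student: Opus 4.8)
The plan is to exhibit the terminal object of $\CI(\CS^{\bullet})$ explicitly, lying over the terminal object of $\CC$, and to reduce its universal property to the fact that parallel maps into a terminal object of $\CP$ coincide.

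First I would fix a terminal object $\top_{\CC}$ of $\CC$ and write $\top_{\CS}$ for $F\,\top_{\CC}$, which is terminal in $\CS$ since $F$ preserves terminal objects. The key preliminary fact is that $Y\,\top_{\CC}$ is terminal in $\CP$: unfolding the definition of $\CP$, a morphism from an object $\Gamma^{\dagger}$ of $\CP$ over $\Gamma$ to $Y\,\top_{\CC}$ must lie over the unique map $\Gamma \to \top_{\CS}$, and amounts to a dependent natural transformation valued in sets of the form $\CC(\Theta \to_{\delta} \top_{\CC})$, each of which is a singleton because $\top_{\CC}$ is terminal in $\CC$ and $\top_{\CS}$ is terminal in $\CS$ (the unique total map $\Theta \to \top_{\CC}$ necessarily lies over the unique map $F\,\Theta \to \top_{\CS}$); so that codomain presheaf is terminal and the transformation is unique. (Equivalently, by \cref{prop:disp_as_gluing} and $\yo_{\top_{\CC}} = 1_{\widehat{\CC}}$, the object $Y\,\top_{\CC}$ corresponds to the terminal object $(1_{\widehat{\CC}}, \top_{\CS}, \id)$ of $(\widehat{\CC} \downarrow N_{F})$.) Next I would note that $S_{0}\,\top_{\CS} = (F \cdot S_{0})\,\top_{\CC}$ is also terminal in $\CP$: the section $S_{0}$ of $\CS^{\dagger}$ is in particular a morphism of models $\CS \to \CP$, hence preserves the terminal object up to isomorphism, so $S_{0}\,\top_{\CS} \cong Y\,\top_{\CC}$ is terminal.

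Now I would build the candidate object. Since $S_{0}\,\top_{\CS}$ is terminal in $\CP$ there is a unique morphism $\iota^{\top_{\CC}} \colon Y\,\top_{\CC} \to S_{0}\,\top_{\CS}$ in $\CP$, and because $\top_{\CS}$ is terminal in $\CS$ this morphism automatically lies over $\id_{F\,\top_{\CC}}$; hence $(\top_{\CC}, \iota^{\top_{\CC}})$ is an object of $\CI(\CS^{\bullet})$. To check it is terminal, take any object $(c, \iota^{c})$: there is a unique morphism $f \colon c \to \top_{\CC}$ in $\CC$, and by definition a morphism $(c,\iota^{c}) \to (\top_{\CC}, \iota^{\top_{\CC}})$ over $f$ is just a proof that the naturality square $\iota^{\top_{\CC}} \circ Yf = S_{0}(Ff) \circ \iota^{c}$ commutes. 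Its two sides are parallel morphisms $Yc \to S_{0}\,\top_{\CS}$ into a terminal object of $\CP$, hence equal; so the square commutes, and since a morphism of $\CI(\CS^{\bullet})$ over a fixed morphism of $\CC$ carries no further data, this is the unique morphism $(c,\iota^{c}) \to (\top_{\CC}, \iota^{\top_{\CC}})$. Finally $I$ preserves this terminal object, since $I(\top_{\CC}, \iota^{\top_{\CC}}) = \top_{\CC}$. I expect the main obstacle to be the step asserting that $S_{0}$ preserves the terminal object of $\CS$ (equivalently, that $S_{0}\,\top_{\CS}$ is terminal in $\CP$): this rests on $\CP$ together with $\CS^{\dagger}$ being a genuine model, so that it has a terminal object and $S_{0}$ can be treated as a terminal-object-preserving morphism of models, which one should make precise via \cref{prop:disp_as_gluing} (the comma category $(\widehat{\CC} \downarrow N_{F})$ has terminal object $(1_{\widehat{\CC}}, \top_{\CS}, \id)$ regardless of whether $\CC$ has a terminal object). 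Everything else is a routine unwinding of the definitions of $\CP$, of the displayed inserter, and of their universal properties; the terminality argument is short precisely because, over a fixed base morphism in $\CC$, a morphism of $\CI(\CS^{\bullet})$ is merely the assertion that a square commutes.
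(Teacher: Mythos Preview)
Your proposal is correct and follows essentially the same approach as the paper: exhibit $(\top_{\CC},\iota^{\top_{\CC}})$ with $\iota^{\top_{\CC}}$ the unique map into the terminal object $S_{0}(F\,\top_{\CC})$ of $\CP$, and conclude terminality because the required naturality square has both legs landing in a terminal object. Your explicit verification that $Y\,\top_{\CC}$ is terminal in $\CP$ is correct but not strictly needed (only terminality of $S_{0}(F\,\top_{\CC})$ is used), and you rightly flag that the step ``$S_{0}$ preserves terminal objects'' is the one place where one must appeal to the model structure on $\CP$ and the definition of section; the paper simply asserts this.
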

\begin{proof}
  The terminal object of $\CI(\CS^{\bullet})$ is $(\diamond,\iota^{\diamond})$, where $\diamond$ is the terminal object of $\CC$ and $\iota^{\diamond} : \CP(Y\ \diamond \to_{\id} S\ (F\ \diamond))$.
  Since both $S$ and $F$ both preserve terminal objects, there is a unique such $\iota^{\diamond}$.
  This also implies that $(\diamond,\iota^{\diamond})$ is terminal.
\end{proof}

We can also specialize \cref{prop:inserter_rep} to this situation.
\begin{proposition}[Internally to $\CPsh^{\CI(S^{\bullet})}$]\label{prop:inserter_rep_cor}
  Assume given the following data:
  \begin{itemize}
    \item A locally representable presheaf $A^{\CS} : \mmod{I^{\ast}F^{\ast}} \to \SRepPsh^{\CS}$.
    \item A dependent presheaf $A^{\bullet} : \mmod{I^{\ast}} \to (\mmod{F^{\ast}} \to A^{\CS}\ \mmod{I^{\ast}F^{\ast}}) \to \SPsh^{\CC}$.
    \item A dependent presheaf $A^{\dagger} : \mmod{I^{\ast}F^{\ast}S_{0}^{\ast}} \to (\mmod{G^{\ast}} \to (A^{\CS}\ \mmod{I^{\ast}F^{\ast}})\smkey{S_{0}^{\ast}G^{\ast}}{\bullet}) \to \SPsh^{\CS}$
          such that the first projection map
          \begin{alignat*}{1}
            & (a : \mmod{G^{\ast}} \to (A^{\CS}\ \mmod{I^{\ast}F^{\ast}})\smkey{S_{0}^{\ast}G^{\ast}}{\bullet}) \times
            (a^{\dagger} : A^{\dagger}\ \mmod{I^{\ast}F^{\ast}S_{0}^{\ast}}\ a) \\
            & \quad { }\xrightarrow{\lambda (a,a^{\dagger}) \mapsto a}
            (\mmod{G^{\ast}} \to (A^{\CS}\ \mmod{F^{\ast}})\smkey{S_{0}^{\ast}G^{\ast}}{\bullet})
          \end{alignat*}
          has a locally representable domain and preserves context extensions.
    \item A bijective action
          \begin{alignat*}{3}
            & Y^{A} && :{ } && \mmod{I^{\ast}}\ \{a : \mmod{F^{\ast}} \to A^{\CS}\ \mmod{I^{\ast}F^{\ast}}\} \to \\
            &&&&& A^{\bullet}\ \mmod{I^{\ast}}\ a \simeq (\mmod{Y^{\ast}} \to (A^{\dagger}\ \mmod{I^{\ast}F^{\ast}S_{0}^{\ast}})\emkey{\iota}{I^{\ast}Y^{\ast}}{I^{\ast}F^{\ast}S_{0}^{\ast}}\ (\lambda \mmod{G^{\ast}} \mapsto (a\ \mmod{G^{\ast}})\smkey{F^{\ast}}{Y^{\ast}G^{\ast}})).
          \end{alignat*}
    \item An action
          \[ S^{A} : \mmod{I^{\ast}F^{\ast}}\ (a : A^{\CS}\ \mmod{I^{\ast}F^{\ast}})\ \mmod{S_{0}^{\ast}} \to A^{\dagger}\ \mmod{I^{\ast}F^{\ast}S_{0}^{\ast}}\ (\lambda \mmod{G^{\ast}} \mapsto a\smkey{S_{0}^{\ast}G^{\ast}}{\bullet}) \]
          whose induced total action preserves context extensions.
    \item A locally representable presheaf $A^{\CC} : \mmod{I^{\ast}} \to \SRepPsh^{\CC}$.
    \item An action $F^{A} : \mmod{I^{\ast}} \to A^{\CC}\ \mmod{I^{\ast}} \to \mmod{F^{\ast}} \to A^{\CS}\ \mmod{I^{\ast}F^{\ast}}$ that preserves context extensions.
    \item A map $f : \mmod{I^{\ast}} \to (a : A^{\CC}\ \mmod{I^{\ast}}) \to A^{\bullet}\ \mmod{I^{\ast}F^{\ast}}\ (F^{A}\ \mmod{I^{\ast}}\ a)$.
  \end{itemize}

  We pose
  \begin{alignat*}{3}
    & S^{A} && :{ } && \mmod{I^{\ast}}\ (a : \mmod{F^{\ast}} \to A^{\CS}\ \mmod{I^{\ast}F^{\ast}}) \to A^{\bullet}\ \mmod{I^{\ast}}\ a \\
    & S^{A}\ \mmod{I^{\ast}}\ a && \triangleq{ } && Y^{A,-1}\ (\lambda \mmod{Y^{\ast}} \mapsto (S^{A}\ (a\ \mmod{F^{\ast}})\ \mmod{S_{0}^{\ast}})\emkey{\iota}{I^{\ast}Y^{\ast}}{I^{\ast}F^{\ast}S_{0}^{\ast}})
  \end{alignat*}

  Then the presheaf
  \begin{alignat*}{3}
    & A^{\CI(S^{\bullet})} && \triangleq{ } && \mmod{I^{\ast}} \to \{ a : A^{\CC}\ \mmod{I^{\ast}} \\
    &&&&& \phantom{\mmod{I^{\ast}} \to { }} \mid S^{A}\ \mmod{I^{\ast}}\ (\lambda \mmod{F^{\ast}} \mapsto F^{A}\ \mmod{I^{\ast}}\ a\ \mmod{F^{\ast}}) = f\ \mmod{I^{\ast}}\ a \}
  \end{alignat*}
  is locally representable and the first projection map
  \[ A^{\CI(S^{\bullet})} \to \mmod{I^{\ast}} \to A^{\CC}\ \mmod{I^{\ast}} \]
  preserves context extensions.
\end{proposition}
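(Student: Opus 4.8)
The plan is to obtain this as the announced specialization of \cref{prop:inserter_rep}, instantiated with $\CD = \CP$, $K = Y$ and $L = F \cdot S$, so that the displayed inserter of $K$ and $L$ over $\CC$ is exactly $\CI(\CS^{\bullet})$. First I would assemble the data required by \cref{prop:inserter_rep}. Keep $A^{\CS}$ and $A^{\CC}$ as given, and for the role of ``$A^{\CD}$'' take
\[ A^{\CP} \triangleq \mmod{I^{\ast}F^{\ast}S_{0}^{\ast}} \to \bigl((a : \mmod{G^{\ast}} \to (A^{\CS}\ \mmod{I^{\ast}F^{\ast}})\smkey{S_{0}^{\ast}G^{\ast}}{\bullet}) \times (a^{\dagger} : A^{\dagger}\ \mmod{I^{\ast}F^{\ast}S_{0}^{\ast}}\ a)\bigr), \]
which is locally representable directly by the hypothesis that the first projection map associated with $A^{\dagger}$ has a locally representable domain. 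The key computation is then that restricting $A^{\CP}$ along $Y$ gives, using the factorization identity $Y \cdot G = F$ together with the bijection $Y^{A}$, a presheaf isomorphic to $(a : \mmod{F^{\ast}} \to A^{\CS}\ \mmod{I^{\ast}F^{\ast}}) \times (a^{\bullet} : A^{\bullet}\ \mmod{I^{\ast}}\ a)$, whereas restricting $A^{\CP}$ along $L = F \cdot S$ gives, using the natural isomorphism $S \cdot G \simeq 1_{\CS}$ that exhibits $S$ as a section of $G$, the analogous sum with $A^{\bullet}$ replaced by $A^{\dagger}$.

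Under these identifications I would take $G^{A}$ to be that same first projection, which preserves context extensions by hypothesis; $K^{A}\ \mmod{I^{\ast}}\ a \triangleq (F^{A}\ \mmod{I^{\ast}}\ a,\, f\ \mmod{I^{\ast}}\ a)$; and $L^{A}\ \mmod{I^{\ast}}\ a \triangleq (F^{A}\ \mmod{I^{\ast}}\ a,\, S^{A}\ (F^{A}\ \mmod{I^{\ast}}\ a))$, each transported along the isomorphisms above. That $L^{A}$ preserves context extensions follows by combining the hypotheses that $F^{A}$ preserves context extensions and that the total action induced by $S^{A}$ preserves context extensions, unfolding the representing objects as in the proof of \cref{prop:inserter_rep}. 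The compatibility condition demanded by \cref{prop:inserter_rep} --- that the two composite actions of $F$ on $A$, through $K$ then $G$ and through $L$ then $G$, agree --- holds on the nose, since in both cases the composite amounts to discarding the non-base component, which leaves $F^{A}$.

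\Cref{prop:inserter_rep} then yields that
\[ A^{\CI} \triangleq \mmod{I^{\ast}} \to \{\, a : A^{\CC}\ \mmod{I^{\ast}} \mid \mmod{K^{\ast}} \to (K^{A}\ \mmod{I^{\ast}}\ a\ \mmod{K^{\ast}}) = (L^{A}\ \mmod{I^{\ast}}\ a\ \mmod{L^{\ast}})\emkey{\iota}{I^{\ast}K^{\ast}}{I^{\ast}L^{\ast}} \,\} \]
is locally representable, with first projection into $\mmod{I^{\ast}} \to A^{\CC}\ \mmod{I^{\ast}}$ preserving context extensions. It remains to identify this with the presheaf $A^{\CI(\CS^{\bullet})}$ of the statement. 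The base components of $K^{A}\ a$ and of $(L^{A}\ a)\emkey{\iota}{I^{\ast}K^{\ast}}{I^{\ast}L^{\ast}}$ agree automatically by the compatibility condition, so the defining equation reduces to equality of the remaining components, namely $f\ \mmod{I^{\ast}}\ a = Y^{A,-1}(\lambda \mmod{Y^{\ast}} \mapsto (S^{A}\ (F^{A}\ \mmod{I^{\ast}}\ a\ \mmod{F^{\ast}})\ \mmod{S_{0}^{\ast}})\emkey{\iota}{I^{\ast}Y^{\ast}}{I^{\ast}F^{\ast}S_{0}^{\ast}})$; this is precisely the condition $f\ \mmod{I^{\ast}}\ a = S^{A}\ \mmod{I^{\ast}}\ (\lambda \mmod{F^{\ast}} \mapsto F^{A}\ \mmod{I^{\ast}}\ a\ \mmod{F^{\ast}})$ with $S^{A}$ the map posed in the statement, so the two subpresheaves and their first projections coincide. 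I expect the main obstacle to be the modal bookkeeping --- threading the section isomorphism $S \cdot G \simeq 1_{\CS}$ through the two restriction isomorphisms for $A^{\CP}$ and getting all the coercions right --- together with the verification that $L^{A}$ preserves context extensions, which is not a formal consequence of preservation by its two components but relies on the explicit analysis of representing objects from the proof of \cref{prop:inserter_rep}.
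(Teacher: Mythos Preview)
Your proposal is correct and follows essentially the same route as the paper: both instantiate \cref{prop:inserter_rep} with $K = Y$, $L = F \cdot S_{0}$, take for the $\CP$-level presheaf the $\Sigma$-type pairing the base with $A^{\dagger}$, let $G^{A}$ be its first projection, and define $K^{A}$ and $L^{A}$ as the pairs $(F^{A}\ a,\, f\ a)$ and $(F^{A}\ a,\, S^{A}(F^{A}\ a))$ respectively, after which the inserter condition reduces to the equation defining $A^{\CI(\CS^{\bullet})}$. Your only overcaution is the remark that preservation of context extensions by $L^{A}$ ``is not a formal consequence'' of preservation by its components: it is, since $L^{A}$ is the composite of $F^{A}$ with the total action of $S^{A}$, and preservation of context extensions is closed under composition.
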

\begin{proof}
  This follows from \cref{prop:inserter_rep}, applied to the following presheaves
  \begin{alignat*}{3}
    & B^{\CS}\ \mmod{I^{\ast}F^{\ast}}
    && \triangleq{ } && A^{\CS}\ \mmod{I^{\ast}F^{\ast}} \\
    & B^{\CC}\ \mmod{I^{\ast}}
    && \triangleq{ } && (a : \mmod{F^{\ast}} \to A^{\CS}\ \mmod{I^{\ast}F^{\ast}}) \times (A^{\bullet}\ \mmod{I^{\ast}}\ a) \\
    & B^{\CP}\ \mmod{I^{\ast}F^{\ast}S_{0}^{\ast}}
    && \triangleq{ } && (a : \mmod{G^{\ast}} \to (A^{\CS}\ \mmod{I^{\ast}F^{\ast}})\smkey{S_{0}^{\ast}G^{\ast}}{\bullet}) \times (A^{\dagger}\ \mmod{I^{\ast}F^{\ast}S_{0}^{\ast}}\ a)
  \end{alignat*}
  and to the following actions
   \begin{alignat*}{1}
     & Y^{A}\ \mmod{I^{\ast}}\ a\ \mmod{Y^{\ast}} \\
     & \quad \triangleq  (\lambda \mmod{G^{\ast}} \mapsto (F^{A}\ \mmod{I^{\ast}}\ a\ \mmod{F^{\ast}})\smkey{F^{\ast}}{Y^{\ast}G^{\ast}}, Y^{A}\ \mmod{I^{\ast}}\ (f\ \mmod{I^{\ast}}\ a)\ \mmod{Y^{\ast}}) \\
     & {(F \cdot S)}^{A}\ \mmod{I^{\ast}}\ a\ \mmod{F^{\ast}S_{0}^{\ast}} \\
     & \quad \triangleq (\lambda \mmod{G^{\ast}} \mapsto (F^{A}\ \mmod{I^{\ast}}\ a\ \mmod{F^{\ast}})\smkey{S_{0}^{\ast}G^{\ast}}{\bullet}, S^{A}\ \mmod{I^{\ast}F^{\ast}}\ (F^{A}\ \mmod{I^{\ast}}\ a\ \mmod{F^{\ast}})\ \mmod{S_{0}^{\ast}}) \\
     & G^{A}\ \mmod{I^{\ast}F^{\ast}S_{0}^{\ast}}\ (a, -)\ \mmod{G^{\ast}} \triangleq a\ \mmod{G^{\ast}}
     \tag*{\qedhere}
  \end{alignat*}
\end{proof}

\subsection{Relative sections}

From the point of view of $\CI$, the relative section of $\CS^{\bullet}$ already exists.
That is, we have, for every telescope $X$, operations
\begin{alignat*}{3}
  & S_{\iota}^{[X]\Ty} && :{ } && \forall \mmod{I^{\ast}} (A : \mmod{F^{\ast}} \to X \to \Ty)\ x\ (x^{\bullet} : X^{\bullet}\ x) \to \Ty^{\bullet}\ (A \circledast x) \\
  & S_{\iota}^{[X]\Ty} && \triangleq{ } && \lambda \mmod{I^{\ast}}\ A\ x^{\bullet} \mapsto Y^{[X]\Ty,-1}\ (\lambda \mmod{Y^{\ast}} \mapsto (S^{[X]\Ty}\ (A\ \mmod{F^{\ast}})\ \mmod{S_{0}^{\ast}})\emkey{\iota}{I^{\ast}Y^{\ast}}{I^{\ast}F^{\ast}S_{0}^{\ast}})\ x^{\bullet} \\
  & S_{\iota}^{[X]\Tm} && :{ } && \forall \mmod{I^{\ast}}\ A\ (a : \forall \mmod{F^{\ast}}\ x \to \Tm\ (A\ \mmod{F^{\ast}}\ x))\ x\ (x^{\bullet} : X^{\bullet}\ x) \\
  &&&&& \to \Tm^{\bullet}\ (S^{[X]\Ty}\mmod{I^{\ast}}\ A\ x\ x^{\bullet})\ (a \circledast x) \\
  & S_{\iota}^{[X]\Tm} && \triangleq{ } && \lambda \mmod{I^{\ast}}\ a\ x^{\bullet} \mapsto Y^{[X]\Tm,-1}\ (\lambda \mmod{Y^{\ast}} \mapsto (S^{[X]\Tm}\ (a\ \mmod{F^{\ast}})\ \mmod{S_{0}^{\ast}})\emkey{\iota}{I^{\ast}Y^{\ast}}{I^{\ast}F^{\ast}S_{0}^{\ast}})\ x^{\bullet},
\end{alignat*}
where $X^{\bullet} : (\mmod{F^{\ast}} \to X\ \mmod{F^{\ast}}) \to \SPsh^{\CC}$ is defined by induction on the telescope $X$.

Given any section $\angles{\alpha}$ of $I : \CI \to \CC$, we can then define
\begin{alignat*}{3}
  & S_{\alpha}^{[X]\Ty} && :{ } && \forall (A : \mmod{F^{\ast}} \to X \to \Ty)\ x\ (x^{\bullet} : X^{\bullet}\ x) \to \Ty^{\bullet}\ (A \circledast x) \\
  & S_{\alpha}^{[X]\Ty} && \triangleq{ } && \lambda A\ x^{\bullet} \mapsto (S^{[X]\Ty}_{\iota}\mmod{I^{\ast}})\smkey{\bullet}{\angles{\alpha}^{\ast}I^{\ast}}\ A\ x^{\bullet} \\
  & S_{\alpha}^{[X]\Tm} && :{ } && \forall A\ (a : \forall \mmod{F^{\ast}}\ x \to \Tm\ (A\ \mmod{F^{\ast}}\ x))\ x\ (x^{\bullet} : X^{\bullet}\ x) \\
  &&&&& \to \Tm^{\bullet}\ (S^{[X]\Ty}\ A\ x\ x^{\bullet})\ (a \circledast x) \\
  & S_{\alpha}^{[X]\Tm} && \triangleq{ } && \lambda a\ x^{\bullet} \mapsto (S^{[X]\Tm}_{\iota}\mmod{I^{\ast}})\smkey{\bullet}{\angles{\alpha}^{\ast}I^{\ast}}\ a\ x^{\bullet}
\end{alignat*}

Note that we have
\begin{alignat*}{3}
  & S_{\alpha}^{[X]\Ty} && ={ } && \lambda A\ x^{\bullet} \mapsto Y^{[X]\Ty,-1}\ (\lambda \mmod{Y^{\ast}} \mapsto (S^{[X]\Ty}\ (A\ \mmod{F^{\ast}})\ \mmod{S_{0}^{\ast}})\emkey{\alpha}{Y^{\ast}}{F^{\ast}S_{0}^{\ast}})\ x^{\bullet} \\
  & S_{\alpha}^{[X]\Tm} && ={ } && \lambda a\ x^{\bullet} \mapsto Y^{[X]\Tm,-1}\ (\lambda \mmod{Y^{\ast}} \mapsto (S^{[X]\Tm}\ (a\ \mmod{F^{\ast}})\ \mmod{S_{0}^{\ast}})\emkey{\alpha}{Y^{\ast}}{F^{\ast}S_{0}^{\ast}})\ x^{\bullet}
\end{alignat*}

These actions preserve all type-theoretic operations.
This follows from the fact that the actions of $S_{0}$ and $Y$ on types and terms preserve the type-theoretic operations.
We give the detailed proof for $\BoolTy$ and $\Pi$.
\begin{alignat*}{3}
  & S_{\alpha}^{[X]\Ty}\ (\lambda \mmod{F^{\ast}}\ x \mapsto \BoolTy)
  && ={ } && Y^{[X]\Ty,-1}\ (\lambda \mmod{Y^{\ast}}\ x^{\dagger} \mapsto (S^{[X]\Ty}\ \BoolTy\ \mmod{S_{0}^{\ast}}\ (x^{\dagger}\ \mmod{S_{0}^{\ast}}))\emkey{\alpha}{Y^{\ast}}{F^{\ast}S_{0}^{\ast}})
  \tag*{(definition of $S_{\alpha}^{[X]\Ty}$)} \\
  &&& ={ } && Y^{[X]\Ty,-1}\ (\lambda \mmod{Y^{\ast}}\ x^{\dagger} \mapsto \BoolTy^{\dagger}\emkey{\alpha}{Y^{\ast}}{F^{\ast}S_{0}^{\ast}})
  \tag*{($S_{\alpha}^{[X]\Ty}$ preserves $\BoolTy$)} \\
  &&& ={ } && Y^{[X]\Ty,-1}\ (\lambda \mmod{Y^{\ast}}\ x^{\dagger} \mapsto \BoolTy^{\dagger})
  \tag*{(commutation with $-\emkey{\alpha}{Y^{\ast}}{F^{\ast}S_{0}^{\ast}}$)} \\
  &&& ={ } && \BoolTy^{\bullet}
  \tag*{($Y^{[X]\Ty}$ preserves $\BoolTy$)}
\end{alignat*}
\begin{alignat*}{1}
  & S_{\alpha}^{[X]\Ty}\ (\lambda \mmod{F^{\ast}}\ x \mapsto \Pi\ (A\ \mmod{F^{\ast}}\ x)\ (B\ \mmod{F^{\ast}}\ x)) \\
  & \quad ={ } Y^{[X]\Ty,-1}\ (\lambda \mmod{Y^{\ast}}\ x^{\dagger} \mapsto (S_{\alpha}^{[X]\Ty}\ (\lambda x \mapsto \Pi\ (A\ \mmod{F^{\ast}}\ x)\ (B\ \mmod{F^{\ast}}\ x))\ \mmod{S_{0}^{\ast}}\ (x^{\dagger}\ \mmod{S_{0}^{\ast}}))\emkey{\alpha}{Y^{\ast}}{F^{\ast}S_{0}^{\ast}})
  \tag*{(definition of $S_{\alpha}^{[X]\Ty}$)} \\
  & \quad ={ } Y^{[X]\Ty,-1}\ (\lambda \mmod{Y^{\ast}}\ x^{\dagger} \mapsto (\Pi^{\dagger}\ (S_{\alpha}^{[X]\Ty}\ A\ \mmod{S_{0}^{\ast}}\ (x^{\dagger}\ \mmod{S_{0}^{\ast}}))\ (S_{\alpha}^{[X,\Tm]\Ty}\ B\ \mmod{S_{0}^{\ast}}\ (x^{\dagger}\ \mmod{S_{0}^{\ast}})))\emkey{\alpha}{Y^{\ast}}{F^{\ast}S_{0}^{\ast}})
  \tag*{($S_{\alpha}^{[X]\Ty}$ preserves $\Pi$)} \\
  & {\quad ={ } Y^{[X]\Ty,-1}\ (\lambda \mmod{Y^{\ast}}\ x^{\dagger} \mapsto \Pi^{\dagger}\ }
  (S_{\alpha}^{[X]\Ty}\ A\ \mmod{S_{0}^{\ast}}\ (x^{\dagger}\ \mmod{S_{0}^{\ast}}))\emkey{\alpha}{Y^{\ast}}{F^{\ast}S_{0}^{\ast}} \\
  & \hphantom{\quad ={ } Y^{[X]\Ty,-1}\ (\lambda \mmod{Y^{\ast}}\ x^{\dagger} \mapsto \Pi^{\dagger}\ }
  (S_{\alpha}^{[X,\Tm]\Ty}\ B\ \mmod{S_{0}^{\ast}}\ (x^{\dagger}\ \mmod{S_{0}^{\ast}}))\emkey{\alpha}{Y^{\ast}}{F^{\ast}S_{0}^{\ast}})
  \tag*{(commutation with $-\emkey{\alpha}{Y^{\ast}}{F^{\ast}S_{0}^{\ast}}$)} \\
  & {\quad ={ } \Pi^{\bullet}\ }
  (Y^{[X]\Ty,-1}\ (\lambda \mmod{Y^{\ast}}\ x^{\dagger} \mapsto (S_{\alpha}^{[X]\Ty}\ A\ \mmod{S_{0}^{\ast}}\ (x^{\dagger}\ \mmod{S_{0}^{\ast}}))\emkey{\alpha}{Y^{\ast}}{F^{\ast}S_{0}^{\ast}})) \\
  & \hphantom{\quad ={ } \Pi^{\bullet}\ }
  (Y^{[X,\Tm]\Ty,-1}\ (\lambda \mmod{Y^{\ast}}\ x^{\dagger} \mapsto (S_{\alpha}^{[X,\Tm]\Ty}\ B\ \mmod{S_{0}^{\ast}}\ (x^{\dagger}\ \mmod{S_{0}^{\ast}}))\emkey{\alpha}{Y^{\ast}}{F^{\ast}S_{0}^{\ast}})) \\
  \tag*{($Y^{[X]\Ty}$ preserves $\Pi$)} \\
  & \quad ={ } \Pi^{\bullet}\ (S^{[X]\Ty}_{\alpha}\ A)\ (S^{[X,\Tm]\Ty}_{\alpha}\ B)
  \tag*{(definitions of $S^{[X]\Ty}_{\alpha}$ and $S^{[X,\Tm]\Ty}_{\alpha}$)} \\
\end{alignat*}

\subsection{Induction principles}\label{sec:indPrinc}

\restateIndTerminal*
\begin{proof}
  By biinitiality of $\Init_{\Th_{\Pi,\BoolTy}}$, we have a section $S_{0}$ of the displayed model $\Init_{\Th_{\Pi,\BoolTy}}^{\dagger}$.

  By \cref{prop:inserter_terminal} $\CI(\Init_{\Th_{\Pi,\BoolTy}}^{\bullet})$ has a terminal object.
  This determines a section $\angles{\alpha}$ of $I : \CI(\Init_{\Th_{\Pi,\BoolTy}}^{\bullet}) \to \{\diamond\}$.
  We thus have a relative section $S_{\alpha}$ of $\Init_{\Th_{\Pi,\BoolTy}}^{\bullet}$.
\end{proof}

\restateIndRenamings*
\begin{proof}
  By biinitiality of $\Init_{\Th_{\Pi,\BoolTy}}$, we have a section $S_{0}$ of the displayed model $\Init_{\Th_{\Pi,\BoolTy}}^{\dagger}$.

  We now equip $\CI(\Init_{\Th_{\Pi,\BoolTy}}^{\bullet})$ with the structure of a renaming algebra.
  By \cref{prop:inserter_terminal} $\CI(\Init_{\Th_{\Pi,\BoolTy}}^{\bullet})$ has a terminal object.
  We pose
  \[ \Var^{\CI}\ A \triangleq \mmod{I^{\ast}} \to \{ a : \Var^{\CRen}\ (A\ \mmod{I^{\ast}}) \mid S_{\iota}^{\Tm}\ (\var\ a) = \var^{\bullet}\ \mmod{I^{\ast}}\ a \}. \]

  By \cref{prop:inserter_rep}, $\Var^{\CI}$ is a family of locally representable presheaves.
  Thus $\CI(\Init_{\Th_{\Pi,\BoolTy}}^{\bullet})$ has the structure of a renaming algebra.
  By biinitiality of $\CRen$, we obtain a section $\angles{\alpha}$ of $I : \CI(\Init_{\Th_{\Pi,\BoolTy}}^{\bullet}) \to \CRen$, and thus a relative section $S_{\alpha}$ of $\Init_{\Th_{\Pi,\BoolTy}}^{\bullet}$.

  The morphism $\angles{\alpha}$ of renaming algebras has an action on variables.
  \begin{alignat*}{1}
    & \angles{\alpha}^{\Var} : \forall (a : \Var^{\CRen}\ A)\ \mmod{\angles{\alpha}^{\ast}I^{\ast}} \to S_{\iota}^{\Tm}\ (\var\ a\smkey{\bullet}{\angles{\alpha}^{\ast}I^{\ast}}) = \var^{\bullet}\ \mmod{I^{\ast}}\ a\smkey{\bullet}{\angles{\alpha}^{\ast}I^{\ast}}
  \end{alignat*}
  Thus given any variable $a : \Var^{\CRen}\ A$, we know that $S_{\alpha}^{\Tm}\ (\var\ a) = (\var^{\bullet}\ \mmod{I^{\ast}})\smkey{\bullet}{\angles{\alpha}^{\ast}I^{\ast}}\ a$.
\end{proof}

\restateIndCube*
\begin{proof}
  By biinitiality of $\Init_{\mathsf{CTT}}$, we have a section $S_{0}$ of the displayed model $\Init_{\mathsf{CTT}}^{\dagger}$.

  We now equip $\CI(\Init_{\mathsf{CTT}}^{\bullet})$ with the structure of a renaming algebra.
  By \cref{prop:inserter_terminal} $\CI(\Init_{\mathsf{CTT}}^{\bullet})$ has a terminal object.
  We pose
  \[ \MI^{\CI} \triangleq \mmod{I^{\ast}} \to \{ i : \MI^{\square} \mid S_{\iota}^{\MI}\ (\mathsf{int}\ i) = \mathsf{int}^{\bullet}\ \mmod{I^{\ast}}\ i \}. \]
  Since we have $S^{\MI}_{\iota}\ 0^{\square} = 0^{\bullet}$ and $S^{\MI}_{\iota}\ 1^{\square} = 1^{\bullet}$, we can lift $0^{\square}$ and $1^{\square}$ to elements $0^{\CI},1^{\CI} : \MI^{\CI}$.

  By \cref{prop:inserter_rep}, $\MI^{\CI}$ is a locally representable presheaf.
  Thus $\CI(\Init_{\mathsf{CTT}}^{\bullet})$ has the structure of a cubical algebra.
  Furthermore $I : \CI(\Init_{\mathsf{CTT}}^{\bullet}) \to \square$ is a morphism of cubical algebras.
  By biinitiality of $\square$, we obtain a section $\angles{\alpha}$ of $I : \CI(\Init_{\mathsf{CTT}}^{\bullet}) \to \square$, and thus a relative section $S_{\alpha}$ of $\Init_{\mathsf{CTT}}^{\bullet}$.

  The morphism $\angles{\alpha}$ of cubical algebras has an action on the interval.
  \begin{alignat*}{1}
    & \angles{\alpha}^{\MI} : \forall (i : \MI^{\square})\ \mmod{\angles{\alpha}^{\ast}I^{\ast}} \to S_{\iota}^{\MI}\ (\mathsf{int}\ i\smkey{\bullet}{\angles{\alpha}^{\ast}I^{\ast}}) = \mathsf{int}^{\bullet}\ \mmod{I^{\ast}}\ i\smkey{\bullet}{\angles{\alpha}^{\ast}I^{\ast}}
  \end{alignat*}
  Thus given any interval element $i : \MI^{\square}$, we know that $S_{\alpha}^{\MI}\ (\mathsf{int}\ i) = (\mathsf{int}^{\bullet}\ \mmod{I^{\ast}})\smkey{\bullet}{\angles{\alpha}^{\ast}I^{\ast}}\ i$.
\end{proof}

\restateIndAtomicCube*
\begin{proof}
  Similar to the proofs of \cref{lem:ind_renamings} and \cref{lem:ind_cubes}.
\end{proof}


\section{Normalization}\label{sec:normalization}

In this section, we describe a normalization proof for a type theory $\Th_{\UU}$ with a hierarchy of universes indexed by natural numbers, closed under $\Pi$-types and boolean types.
This proof relies on an induction principle relative to $\CRen \to \Init_{\Th_{\UU}}$.

The proof follows the same structure as Coquand normalization proof from~\cite{CoquandNormalization}; it is an algebraic presentation of Normalization by Evaluation (NbE).
There is one important difference between the type theory $\Th_{\UU}$ and the type theory considered in~\cite{CoquandNormalization}.
Our type theory is not cumulative; type-formers at different universe level are distinct.
We believe that proving normalization for a cumulative type theory should be possible in our framework, but getting the details right is tricky and the definitions become very verbose.
These details, such as the precise definition of the normal forms, were omitted from Coquand's proof.
We prove slightly more; in addition to the existence of normal forms, we also prove uniqueness.
Proving uniqueness relies on the computation rules of relative sections.

\subsection{The type theory \texorpdfstring{$\Th_{\UU}$}{T\_{}U}}
We describe the structure of a model of $\Th_{\UU}$ over a category $\CC$, internally to $\CPsh^{\CC}$.

Types and terms are now indexed by universe levels.
\begin{alignat*}{3}
  & \Ty && :{ } && \forall (i : \Nat) \to \SPsh^{\CC} \\
  & \Tm && :{ } && \forall (i : \Nat)\ (A : \Ty_{i}) \to \SRepPsh^{\CC}
\end{alignat*}

We have lifting functions that can be used to move between different universe levels.
\begin{alignat*}{3}
  & \Lift && :{ } && \forall (i : \Nat) \to \Ty_{i} \to \Ty_{i+1} \\
  & \lift && :{ } && \forall (i : \Nat)\ (A : \Ty_{i}) \to \Tm_{i}\ A \simeq \Tm_{i+1}\ (\Lift_{i}\ A)
\end{alignat*}

We have a hierarchy of universes, indexed by universe levels.
The terms of the $i$-th universe are in bijection with the types of level $i$.
\begin{alignat*}{3}
  & \UU && :{ } && \forall (i : \Nat) \to \Ty_{i+1} \\
  & \El && :{ } && \forall (i : \Nat) \to \Tm_{i+1}\ \UU_{i} \simeq \Ty_{i}
\end{alignat*}

Finally, we have $\Pi$-types and boolean numbers types at every universe level.
The motive of the eliminator for booleans can be at any universe level.
\begin{alignat*}{3}
  & \Pi && :{ } && \forall i\ (A : \Ty_{i})\ (B : \Tm\ A \to \Ty_{i}) \to \Ty_{i} \\
  & \app && :{ } && \forall i\ A\ B \to \Tm_{i}\ (\Pi\ A\ B) \simeq ((a : \Tm\ A) \to \Tm\ (B\ a)) \\
  & - && :{ } && \elimBool\ P\ t\ f\ \true = t \\
  & - && :{ } && \elimBool\ P\ t\ f\ \false = f
\end{alignat*}
\begin{alignat*}{3}
  & \BoolTy && :{ } && \forall i \to \Ty_{i} \\
  & \true,\false && :{ } && \forall i \to \Tm_{i}\ \BoolTy \\
  & \elimBool && :{ } && \forall i\ j\ (P : \Tm\ \BoolTy \to \Ty_{j})\ (t : \Tm\ (P\ \true))\ (f : \Tm\ (P\ \false))\ (b : \Tm\ \BoolTy_{i}) \to \Tm\ (P\ b)
\end{alignat*}

We have, as for $\Th_{\Pi,\BoolTy}$, notions of displayed models without context extensions and of relative sections for $\Th_{\UU}$.
The following variant of \cref{lem:ind_renamings} can easily be proven for $\Th_{\UU}$.

\begin{definition}
  A \defemph{renaming algebra} over a model $\CS$ of $\Th_{\UU}$ is a category $\CR$ with a terminal object, along with a functor $F : \CR \to \CS$ preserving the terminal object, a locally representable dependent presheaf of variables
  \[ \Var^{\CR} : \forall i\ (A : \mmod{F^{\ast}} \to \Ty_{i}^{\CS}) \to \SRepPsh^{\CR} \]
  and an action on variables $\var : \forall i\ A\ (a : \Var_{i}\ A)\ \mmod{F^{\ast}} \to \Tm_{i}^{\CS}\ (A\ \mmod{F^{\ast}})$ that preserves context extensions.

  The category of renamings $\CRen_{\CS}$ over a model $\CS$ is defined as the biinitial renaming algebra over $\CS$.
  We denote the category of renamings of the biinitial model $\Init_{\Th_{\UU}}$ by $\CRen$.
  \lipicsEnd{}
\end{definition}

\begin{lemma}[Induction principle relative to $\CRen \to \Init_{\Th_{\UU}}$]\label{lem:ind_renamings_univ}
  Let $\Init_{\Th_{\UU}}^{\bullet}$ be a global displayed model without context extensions over $F : \CRen \to \Init_{\Th_{\UU}}$, along with, internally to $\CI(\CS^{\bullet})$, a global map
  \[ \var^{\bullet} : \forall \mmod{I^{\ast}}\ i\ (A : \mmod{F^{\ast}} \to \Ty_{i}) (a : \Var_{i}\ A) \to \Tm^{\bullet}\ (S_{\iota}^{\Ty}\ \mmod{I^{\ast}}\ A)\ (\var\ a). \]

  Then there exists a relative section $S_{\alpha}$ of $\Init_{\Th_{\UU}}^{\bullet}$.
  It satisfies the additional computation rule
  \[ S_{\alpha}^{\Tm}\ (\var\ a) = (\var^{\bullet}\ \mmod{I^{\ast}})\smkey{\bullet}{\angles{\alpha}^{\ast}I^{\ast}}\ a. \]
  \qed{}
\end{lemma}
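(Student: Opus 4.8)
The plan is to reproduce the proof of \cref{lem:ind_renamings} almost verbatim, the sole novelty being that the presheaf of variables and the action on variables now carry an extra index $i : \Nat$. First I would apply \cref{con:disp_replace_0} (hence also \cref{con:disp_replace_model}) to the displayed model without context extensions $\Init_{\Th_{\UU}}^{\bullet}$ over $F : \CRen \to \Init_{\Th_{\UU}}$, obtaining a factorization $(\CRen \xrightarrow{Y} \CP \xrightarrow{G} \Init_{\Th_{\UU}}, \Init_{\Th_{\UU}}^{\dagger})$. By biinitiality of $\Init_{\Th_{\UU}}$ among models of $\Th_{\UU}$, the displayed model with context extensions $\Init_{\Th_{\UU}}^{\dagger}$ admits a section $S_{0}$. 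As in \cref{sec:indPrinc}, this yields, internally to $\CPsh^{\CI(\Init_{\Th_{\UU}}^{\bullet})}$, operations $S_{\iota}^{[X]\Ty}$ and $S_{\iota}^{[X]\Tm}$ preserving all the type-theoretic operations of $\Th_{\UU}$.

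Next I would equip $\CI(\Init_{\Th_{\UU}}^{\bullet})$ with the structure of a renaming algebra over $\Init_{\Th_{\UU}}$. By \cref{prop:inserter_terminal} it has a terminal object preserved by $I$, since $\CRen$ does and $F$ preserves it. For the variables I would set, internally to $\CPsh^{\CI(\Init_{\Th_{\UU}}^{\bullet})}$,
\[ \Var^{\CI}\ i\ A \triangleq \mmod{I^{\ast}} \to \{ a : \Var^{\CRen}\ i\ (A\ \mmod{I^{\ast}}) \mid S_{\iota}^{\Tm}\ (\var\ a) = \var^{\bullet}\ \mmod{I^{\ast}}\ i\ A\ a \}, \]
with the action on variables inherited from $\CRen$. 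By \cref{prop:inserter_rep}, applied with $A^{\CS}$ the family $\Tm_{i}^{\CS}$ of terms of level $i$, $A^{\CC} = \Var^{\CRen}\ i$, and the comparison data extracted from $\var$, $S^{\Tm}$ and $\var^{\bullet}$, the family $\Var^{\CI}$ is locally representable and the first projection into $\mmod{I^{\ast}} \to \Var^{\CRen}\ i\ (A\ \mmod{I^{\ast}})$ preserves context extensions. Hence $\CI(\Init_{\Th_{\UU}}^{\bullet})$ is a renaming algebra over $\Init_{\Th_{\UU}}$ and $I$ is a morphism of renaming algebras.

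By biinitiality of $\CRen$ among renaming algebras over $\Init_{\Th_{\UU}}$, I then obtain a section $\angles{\alpha} : \CRen \to \CI(\Init_{\Th_{\UU}}^{\bullet})$ of $I$, and therefore a relative section $S_{\alpha}$ of $\Init_{\Th_{\UU}}^{\bullet}$ by \cref{def:rel_section}. The morphism $\angles{\alpha}$ of renaming algebras comes with an action on variables $\angles{\alpha}^{\Var}$ witnessing the equation $S_{\iota}^{\Tm}\ (\var\ a) = \var^{\bullet}\ \mmod{I^{\ast}}\ i\ A\ a$ after restriction along $\angles{\alpha}$; unfolding the definition of $S_{\alpha}^{\Tm}$ as the pullback of $S_{\iota}^{\Tm}$ along $\angles{\alpha}$ then gives the stated computation rule $S_{\alpha}^{\Tm}\ (\var\ a) = (\var^{\bullet}\ \mmod{I^{\ast}})\smkey{\bullet}{\angles{\alpha}^{\ast}I^{\ast}}\ a$.

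The only real work, and the part I expect to be most tedious, is verifying that every supporting construction survives the passage from $\Th_{\Pi,\BoolTy}$ to $\Th_{\UU}$: the displayed presheaf category $\CP$ together with its representability properties (\cref{prop:disp_yoneda_rep_1,prop:disp_yoneda_rep_2,prop:disp_yoneda_rep_3}), the displayed inserter with \cref{prop:inserter_rep} and \cref{prop:inserter_terminal}, and the construction of $\Init_{\Th_{\UU}}^{\dagger}$ from $\Init_{\Th_{\UU}}^{\bullet}$ in \cref{con:disp_replace_model}. The first two use nothing beyond the bare sort structure $(\Ty, \Tm)$ and local representability of $\Tm$, so they apply unchanged to the universe-indexed sorts $(\Ty_{i}, \Tm_{i})$. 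The construction of $\CS^{\dagger}$, however, must be spelled out for each of the operations $\Lift$, $\lift$, $\UU$, $\El$, $\Pi$, $\app$, $\BoolTy$, $\true$, $\false$, $\elimBool$ of $\Th_{\UU}$, following the recipe used for $\Pi$ and the boolean operations in \cref{con:disp_replace_model}; this is routine but bureaucratic, which is precisely why the lemma is only sketched here.
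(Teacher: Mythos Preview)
Your proposal is correct and follows essentially the same approach as the paper. The paper does not spell out a proof of \cref{lem:ind_renamings_univ} beyond noting that it is a variant of \cref{lem:ind_renamings} that ``can easily be proven for $\Th_{\UU}$''; your outline reproduces the proof of \cref{lem:ind_renamings} with the extra universe index $i$ threaded through, and your final paragraph correctly identifies the only nontrivial bookkeeping (extending \cref{con:disp_replace_model} to the operations of $\Th_{\UU}$) as routine.
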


\subsection{Normal forms}

The goal of normalization is to prove that every term admits a unique normal form.
We first need to define normal types, normal forms and neutral terms (which correspond to stuck computations).
They are defined, internally to $\CPsh^{\CRen}$, as inductive families indexed by the terms of $\Init_{\Th_{\UU}}$.
\begin{alignat*}{3}
  & \Ne && :{ } && \forall i\ (A : \mmod{F^{\ast}} \to \Ty_{i}) \to (\mmod{F^{\ast}} \to \Tm_{i}\ (A\ \mmod{F^{\ast}})) \to \SPsh^{\CRen} \\
  & \Nf && :{ } && \forall i\ (A : \mmod{F^{\ast}} \to \Ty_{i}) \to (\mmod{F^{\ast}} \to \Tm_{i}\ (A\ \mmod{F^{\ast}})) \to \SPsh^{\CRen} \\
  & \NfTy && :{ } && \forall i \to (\mmod{F^{\ast}} \to \Ty_{i}) \to \SPsh^{\CRen}
\end{alignat*}
An element of $\Ne\ a$ (\resp{} $\Nf\ a$) is a witness of the fact that the term $a$ is a neutral term (\resp{} admits a normal form).
An element of $\NfTy\ A$ is a witness that the type $A$ admits a normal form.

We list below the constructors of these inductive families.
\begin{alignat*}{3}
  & \var^{\ne} && :{ } && \forall i\ A\ (a : \Var_{i}\ A) \to \Ne_{i}\ A\ (\var\ a) \\
  & \lift^{-1,\ne} && :{ } && \forall i\ A \to \Ne_{i+1}\ (\Lift_{i}\ A)\ a \to \Ne_{i}\ (\lift^{-1}\ a) \\
  & \app^{\ne} && :{ } && \forall i\ A\ B\ f\ a \to \Ne_{i}\ f \to \Nf_{i}\ a \\
  &&&&& \to \Ne_{i}\ (\app \circleddollar A \circledast B \circledast f \circledast a) \\
  & \elimBool^{\ne} && :{ } && \forall i\ j\ P\ t\ f\ b \to ((m : \Var\ (\lambda \mmod{F^{\ast}} \mapsto \BoolTy_{i})) \to \NfTy_{j}\ (P \circledast \var\ m)) \\
  &&&&& \to \Nf\ t \to \Nf\ f \to \Ne_{i}\ b \to \Ne_{i}\ (\elimBool \circleddollar P \circledast f \circledast t \circledast b) \\
\end{alignat*}
\begin{alignat*}{3}
  & \nfty^{\nf} && :{ } && \forall i \to \NfTy_{i}\ A \to \Nf_{i+1}\ (\lambda \mmod{F^{\ast}} \mapsto \UU_{i})\ A \\
  & \ne^{\nf}_\BoolTy && :{ } && \forall i\ a \to \Ne_{i}\ (\lambda \mmod{F^{\ast}} \mapsto \BoolTy_{i})\ a \to \Nf_{i}\ a \\
  & \ne^{\nf}_{\El} && :{ } && \forall i\ A \to \Ne_{i+1}\ (\lambda \mmod{F^{\ast}} \mapsto \UU_{i})\ A \\
  &&&&& \to \Ne_{i}\ (\El \circleddollar A)\ a \to \Nf_{i}\ a \\
  & \lift^{\nf} && :{ } && \forall i\ A \to \Nf_{i}\ A\ a \to \Nf_{i+1}\ (\Lift\ A)\ (\lift\ a) \\
  & \true^{\nf} && :{ } && \forall i \to \Nf_{i}\ \true \\
  & \false^{\nf} && :{ } && \forall i \to \Nf_{i}\ \false \\
  & \lam^{\nf} && :{ } && \forall i\ A\ B\ b \to \NfTy_{i}\ A \to ((a : \Var\ A) \to \NfTy_{i}\ (B \circledast \var\ a)) \\
  &&&&& \to ((a : \Var\ A) \to \Nf_{i}\ (b \circledast \var\ a)) \\
  &&&&& \to \Nf_{i}\ (\lam \circleddollar A \circledast B \circledast b) \\
\end{alignat*}
\begin{alignat*}{3}
  & \ne^{\nfty}_{\UU} && :{ } && \forall i\ A \to \Ne_{i+1}\ (\lambda\mmod{F^{\ast}} \to \UU_{i})\ A \to \NfTy_{i+1}\ (\El \circleddollar A) \\
  & \UU^{\nfty} && :{ } && \forall i \to \NfTy_{i+1}\ (\lambda\mmod{F^{\ast}} \to \UU_{i}) \\
  & \Lift^{\nfty} && :{ } && \forall i \to \NfTy_{i}\ A \to \NfTy_{i+1}\ (\Lift\ A) \\
  & \BoolTy^{\nfty} && :{ } && \forall i \to \NfTy_{i}\ (\lambda\mmod{F^{\ast}} \to \BoolTy_{i}) \\
  & \Pi^{\nfty} && :{ } && \forall i\ A\ B \to \NfTy_{i}\ A \to ((a : \Var\ A) \to \NfTy_{i}\ (B \circledast \var\ a)) \\
  &&&&& \to \NfTy_{i}\ (\Pi \circleddollar A \circledast B)
\end{alignat*}

The construction of our normalization function will work for any algebra ($\Ne$, $\Nf$, $\NfTy$, $\Ne$, $\Nf$, $\NfTy$, $\dotsc$) with the above signature.
The choice of the initial algebra is only needed to show uniqueness of normal forms in~\cref{lem:stab_norm}.

\subsection{The normalization displayed model}

We now construct a displayed model without context extensions $\Init_{\Th_{\UU}}^{\bullet}$ over $F : \CRen \to \Init_{\Th_{\UU}}$, internally to $\CPsh^{\CRen}$.

A displayed type $A^{\bullet} : \Ty^{\bullet}\ A$ of $\Init_{\Th_{\UU}}^{\bullet}$ over a type $A : \mmod{F^{\ast}} \to \Ty_{i}$ consists of four components $(A^{\bullet}_{\nfty}, A^{\bullet}_{p}, A^{\bullet}_{\ne}, A^{\bullet}_{\nf})$.
\begin{itemize}
  \item $A^{\bullet}_{\nfty} : \NfTy_{i}\ A$ is a witness of the fact that the type $A$ admits a normal form.
  \item $A^{\bullet}_{p} : (\mmod{F^{\ast}} \to \Tm\ (A\ \mmod{F^{\ast}})) \to \SPsh^{\CRen}_{i}$ is a proof-relevant logical predicate over the terms of type $A$, valued in $i$-small presheaves.
  \item $A^{\bullet}_{\ne} : \forall a \to \Ne_{i}\ a \to A^{\bullet}_{p}\ a$ shows that neutral terms satisfy the logical predicate $A^{\bullet}_{p}$.
    The function $A^{\bullet}_{\ne}$ is often called \emph{unquote} or \emph{reflect}.
  \item $A^{\bullet}_{\nf} : \forall a \to A^{\bullet}_{p}\ a \to \Nf_{i}\ a$ shows that the terms that satisfy the logical predicate $A^{\bullet}_{p}$ admit normal forms.
    The function $A^{\bullet}_{\nf}$ is often called \emph{quote} or \emph{reify}.
\end{itemize}

A displayed term $a^{\bullet} : \Tm^{\bullet}\ A^{\bullet}\ a$ of type $A^{\bullet}$ over a term $(a : \mmod{F^{\ast}} \to \Tm\ (A\ \mmod{F^{\ast}}))$ is an inhabitant $a^{\bullet}$ of $A^{\bullet}_{p}\ a$, \ie{} a witness of the fact that $a$ satisfies the logical predicate $A^{\bullet}_{p}$.

The displayed lifted types are defined as follows.
Because the structures of $\Th_{\UU}$ are not strictly preserved by these lifting operations, $\Lift_{i}\ A$ can be seen as a record type with a projection $\lift^{-1}$ and a constructor $\lift$.
This definition would directly extend to $\Sigma$-types or to other record types.
\begin{alignat*}{3}
  & \Lift^{\bullet} && :{ } && \forall i\ A \to \Ty^{\bullet}_{i}\ A \to \Ty^{\bullet}_{i+1}\ (\Lift_{i} \circleddollar A) \\
  & {(\Lift^{\bullet}_{i}\ A^{\bullet})}_{\nfty} && \triangleq{ } && \Lift^{\nfty}\ A^{\bullet}_{\nfty} \\
  & {(\Lift^{\bullet}_{i}\ A^{\bullet})}_{p} && \triangleq{ } && \lambda a \mapsto A^{\bullet}_{p}\ (\lift^{-1} \circleddollar a) \\
  & {(\Lift^{\bullet}_{i}\ A^{\bullet})}_{\ne} && \triangleq{ } && \lambda a_{\ne} \mapsto A^{\bullet}_{\ne}\ (\lift^{-1,\ne}\ a_{\ne}) \\
  & {(\Lift^{\bullet}_{i}\ A^{\bullet})}_{\nf} && \triangleq{ } && \lambda a^{\bullet} \mapsto \lift^{\nf}\ (A^{\bullet}_{\nf}\ a^{\bullet})
\end{alignat*}

The definition of the displayed universes of the normalization displayed model is below.
\begin{alignat*}{3}
  & \UU^{\bullet}_{i} && :{ } && \Ty_{i+1}^{\bullet}\ (\lambda\ \mmod{F^{\ast}} \mapsto \UU_{i}) \\
  & \UU^{\bullet}_{i,\nfty} && \triangleq{ } && \UU^{\nfty}_{i} \\
  & \UU^{\bullet}_{i,p} && \triangleq{ } && \lambda A \mapsto \Ty^{\bullet}_{i}\ (\El \circleddollar A) \\
  & {(\UU^{\bullet}_{i,\ne}\ A\ A_{\ne})}_{\nfty} && \triangleq{ } && \ne^{\nfty}_{\UU}\ A_{\ne} \\
  & {(\UU^{\bullet}_{i,\ne}\ A\ A_{\ne})}_{p} && \triangleq{ } && \lambda a \mapsto \Ne\ a \\
  & {(\UU^{\bullet}_{i,\ne}\ A\ A_{\ne})}_{\ne} && \triangleq{ } && \lambda a_{\ne} \mapsto a_{\ne} \\
  & {(\UU^{\bullet}_{i,\ne}\ A\ A_{\ne})}_{\nf} && \triangleq{ } && \lambda a_{\ne} \mapsto \ne^{\nf}_{\El}\ a_{\ne} \\
  & \UU^{\bullet}_{i,\nf}\ A\ A^{\bullet} && \triangleq{  } && A^{\bullet}_{\nfty}
\end{alignat*}
The most interesting part is the component $\UU^{\bullet}_{i,\ne}$ that constructs a displayed type over any neutral element of the universe; any element of a neutral type is itself neutral.

For $\Pi$-types, the logical predicates are defined in the same way as for canonicity.
\begin{alignat*}{3}
  & {(\Pi^{\bullet}\ A^{\bullet}\ B^{\bullet})}_{\nfty} && \triangleq{ } &&
  \Pi^{\nfty}\ A^{\nfty}\ (\lambda a_{\var} \mapsto \mathsf{let}\ a^{\bullet} = A^{\bullet}_{\ne}\ (\var^{\ne}\ a_{\var})\ \mathsf{in}\ {(B^{\bullet}\ a^{\bullet})}_{\nfty}) \\
  & {(\Pi^{\bullet}\ A^{\bullet}\ B^{\bullet})}_{p}\ A && \triangleq{ } &&
  \forall a\ (a^{\bullet} : A^{\bullet}_{p}) \to {(B^{\bullet}\ a^{\bullet})}_{p} \\
  & {(\Pi^{\bullet}\ A^{\bullet}\ B^{\bullet})}_{\ne}\ f_{\ne} && \triangleq{ } &&
  \lambda a^{\bullet} \mapsto {(B^{\bullet}\ a^{\bullet})}_{\ne}\ (\app^{\ne}\ f_{\ne}\ (A^{\bullet}_{\nf}\ a^{\bullet})) \\
  & {(\Pi^{\bullet}\ A^{\bullet}\ B^{\bullet})}_{\nf}\ f^{\bullet} && \triangleq{ } && \lam^{\nf}\ (\lambda a_{\var} \mapsto \mathsf{let}\ a^{\bullet} = A^{\bullet}_{\ne}\ (\var^{\ne}\ a_{\var})\ \mathsf{in}\ {(B^{\bullet}\ a^{\bullet})}_{\nf}\ (f^{\bullet}\ a^{\bullet}))
\end{alignat*}

For booleans, we define an inductive family $\BoolTy^{\bullet}_{p} : (\mmod{F^{\ast}} \to \Tm\ \BoolTy) \to \SPsh^{\CRen}$ generated by
\begin{alignat*}{3}
  & \true^{\bullet} && :{ } && \BoolTy^{\bullet}_{p}\ (\lambda\ \mmod{F^{\ast}} \mapsto \true) \\
  & \false^{\bullet} && :{ } && \BoolTy^{\bullet}_{p}\ (\lambda\ \mmod{F^{\ast}} \mapsto \false) \\
  & \ne^{\BoolTy^{\bullet}_{p}} && :{ } && \forall b \to \Ne\ b \to \BoolTy^{\bullet}_{p}\ b
\end{alignat*}
This family witnesses the fact that a boolean term is an element of the free bipointed presheaf generated by the neutral boolean terms.
This extends to the following definition of the displayed boolean type in the normalization model.
\begin{alignat*}{3}
  & \BoolTy^{\bullet}_{\nfty} && \triangleq{ } && \BoolTy^{\nfty} \\
  & \BoolTy^{\bullet}_{\ne}\ n_{\ne} && \triangleq{ } && \ne^{\BoolTy^{\bullet}_{p}} \\
  & \BoolTy^{\bullet}_{\nf}\ \true^{\bullet} && \triangleq{ } && \true^{\nf} \\
  & \BoolTy^{\bullet}_{\nf}\ \false^{\bullet} && \triangleq{ } && \false^{\nf} \\
  & \BoolTy^{\bullet}_{\nf}\ (\ne^{\BoolTy^{\bullet}_{p}}\ b_{\ne}) && \triangleq{ } && \ne^{\nf}_{\BoolTy}\ b_{\ne}
\end{alignat*}

The displayed boolean eliminator $\elimBool^{\bullet}$ is defined using the induction principle of $\BoolTy^{\bullet}_{p}$.

\subsection{Normalization}

Given any displayed type $A^{\bullet}$, every variable of type $A$ satisfies the logical predicate $A^{\bullet}_{p}$; we can define
\begin{alignat*}{3}
  & \var^{\bullet} && :{ } && \forall \mmod{I^{\ast}}\ i\ (A : \mmod{F^{\ast}} \to \Ty_{i}) (a : \Var\ A) \to \Tm^{\bullet}\ (S^{\Ty}_{\iota}\ \mmod{I^{\ast}}\ A)\ (\var\ a) \\
  & \var^{\bullet} && \triangleq{ } && \lambda \mmod{I^{\ast}}\ A\ a \mapsto {(S^{\Ty}_{\iota}\ \mmod{I^{\ast}}\ A)}_{\ne}\ (\var^{\ne}\ a)
\end{alignat*}

We can now apply \cref{lem:ind_renamings_univ} to $\Init_{\Th_{\UU}}^{\bullet}$.
We obtain a relative section $S_{\alpha}$ of $\Init_{\Th_{\UU}}^{\bullet}$.

This proves the existence of normal forms, as witnessed by the following normalization function, internally to $\CPsh^{\CRen}$.
\begin{alignat*}{3}
  & \norm && :{ } && \forall i\ (A : \mmod{F^{\ast}} \to \Ty_{i})\ (a : \mmod{F^{\ast}} \to \Tm\ (A\ \mmod{F^{\ast}})) \to \Nf_{i}\ a \\
  & \norm\ A\ a && \triangleq{ } && {(S_{\alpha}^{\Ty}\ A)}_{\nf}\ (S_{\alpha}^{\Tm}\ a)
\end{alignat*}

\subsection{Stability of normalization}

It remains to show the uniqueness of normal forms.
It follows from stability of normalization:%
\begin{lemma}[Internally to $\CPsh^{\CRen}$]\label{lem:stab_norm}
  For every $a_{\ne} : \Ne_{i}\ A\ a$, we have $S_{\alpha}^{\Tm}\ a = {(S_{\alpha}^{\Ty}\ A)}_{\ne}\ a_{\ne}$, and for every $a_{\nf} : \Nf_{i}\ A\ a$, we have ${(S_{\alpha}^{\Ty}\ A)}_{\nf}\ (S_{\alpha}^{\Tm}\ a) = a_{\nf}$.
  Furthermore for every $A_{\nfty} : \NfTy_{i}\ A$, we have ${(S_{\alpha}^{\Ty}\ A)}_{\nfty} = A_{\nfty}$.
\end{lemma}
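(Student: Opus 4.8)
The plan is to prove all three equations simultaneously by mutual induction on the inductive families $\Ne$, $\Nf$ and $\NfTy$, working internally to $\CPsh^{\CRen}$. Since the displayed model $\Init_{\Th_{\UU}}^{\bullet}$ enters only through the relative section $S_{\alpha}$, and since $S_{\alpha}$ preserves all type-theoretic operations (established when relative sections were constructed), each constructor case will reduce to unfolding the definition of the corresponding displayed operation together with the induction hypotheses. The motives are $P_{\ne}(a_{\ne}) \triangleq (S_{\alpha}^{\Tm}\ a = (S_{\alpha}^{\Ty}\ A)_{\ne}\ a_{\ne})$, $P_{\nf}(a_{\nf}) \triangleq ((S_{\alpha}^{\Ty}\ A)_{\nf}\ (S_{\alpha}^{\Tm}\ a) = a_{\nf})$ and $P_{\nfty}(A_{\nfty}) \triangleq ((S_{\alpha}^{\Ty}\ A)_{\nfty} = A_{\nfty})$; these are propositional equalities in a presheaf topos and hence proof-irrelevant, so no coherence obligations arise and the recursion is well-defined.

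The base case is $\var^{\ne}$: by the computation rule of \cref{lem:ind_renamings_univ} we have $S_{\alpha}^{\Tm}\ (\var\ a) = (\var^{\bullet}\ \mmod{I^{\ast}})\smkey{\bullet}{\angles{\alpha}^{\ast}I^{\ast}}\ a$, and $\var^{\bullet}$ is by definition $(S_{\iota}^{\Ty}\ \mmod{I^{\ast}}\ A)_{\ne}\ (\var^{\ne}\ a)$, which after transport along $\angles{\alpha}$ is exactly $(S_{\alpha}^{\Ty}\ A)_{\ne}\ (\var^{\ne}\ a)$. For $\app^{\ne}\ f_{\ne}\ a_{\nf}$ one uses that $S_{\alpha}$ preserves $\app$, so $S_{\alpha}^{\Tm}\ (\app\circleddollar\cdots) = \app^{\bullet}\ (S_{\alpha}^{\Tm}\ f)\ (S_{\alpha}^{\Tm}\ a) = (S_{\alpha}^{\Tm}\ f)\ (S_{\alpha}^{\Tm}\ a)$; the neutral induction hypothesis on $f$ rewrites $S_{\alpha}^{\Tm}\ f$ to $(S_{\alpha}^{\Ty}(\Pi\circleddollar A\circledast B))_{\ne}\ f_{\ne}$, which unfolds by the definition of $(\Pi^{\bullet})_{\ne}$ to $\lambda a^{\bullet}\mapsto (B^{\bullet}\ a^{\bullet})_{\ne}\ (\app^{\ne}\ f_{\ne}\ (A^{\bullet}_{\nf}\ a^{\bullet}))$; instantiating at $a^{\bullet} = S_{\alpha}^{\Tm}\ a$ and then using the normal-form induction hypothesis $A^{\bullet}_{\nf}\ (S_{\alpha}^{\Tm}\ a) = a_{\nf}$ yields $(S_{\alpha}^{\Ty}(B\circledast a))_{\ne}\ (\app^{\ne}\ f_{\ne}\ a_{\nf})$, as required. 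The other neutral cases ($\lift^{-1,\ne}$ and $\elimBool^{\ne}$) and all the $\Nf$ and $\NfTy$ cases ($\true^{\nf}$, $\false^{\nf}$, $\lift^{\nf}$, $\lam^{\nf}$, $\ne^{\nf}_{\BoolTy}$, $\ne^{\nf}_{\El}$, $\nfty^{\nf}$, $\Pi^{\nfty}$, $\BoolTy^{\nfty}$, $\UU^{\nfty}$, $\Lift^{\nfty}$, $\ne^{\nfty}_{\UU}$) follow the same recipe: preservation of the relevant operation by $S_{\alpha}$, unfolding of the matching $\nfty$/$\ne$/$\nf$ component of the displayed model, and the induction hypotheses. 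The $\elimBool^{\ne}$ case additionally uses the computation rule of the displayed eliminator $\elimBool^{\bullet}$ on the generator $\ne^{\BoolTy^{\bullet}_{p}}$, which by construction equals the reflect operation of the motive applied to $\elimBool^{\ne}$.

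The step I expect to be the main obstacle is handling the binder-introducing constructors — $\lam^{\nf}$, $\Pi^{\nfty}$, and the predicate argument of $\elimBool^{\ne}$ — where the induction hypothesis for the nested $\Nf$ or $\NfTy$ must be invoked under a fresh variable $a : \Var\ A$, i.e. in the context weakened by $A$. There one has to check that the displayed term carried by the recursive normal form is precisely $(S_{\alpha}^{\Ty}\ A)_{\ne}\ (\var^{\ne}\ a)$, which is again the $\var^{\bullet}$ computation rule of \cref{lem:ind_renamings_univ}, now used covariantly in the ambient context and combined with naturality of $S_{\alpha}$ along the weakening renaming. Getting the modal bookkeeping around $\mmod{I^{\ast}}$, $\mmod{F^{\ast}}$ and the coercions $\smkey{\bullet}{\angles{\alpha}^{\ast}I^{\ast}}$ to line up in these cases is the only genuinely delicate part; everything else is a mechanical unfolding.
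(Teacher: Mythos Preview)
Your proposal is correct and follows essentially the same approach as the paper: a mutual induction on $\Ne$, $\Nf$, $\NfTy$, with each case discharged by the computation rules of $S_{\alpha}$, unfolding the relevant component of the displayed operation, and the induction hypotheses. The paper only spells out the cases $\var^{\ne}$, $\app^{\ne}$, $\lam^{\nf}$, and $\Pi^{\nfty}$, and your account of these matches theirs (including the use of the $\var^{\bullet}$ computation rule under binders).
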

\begin{proof}
  This lemma is proven by induction on $\Ne$, $\Nf$, $\NfTy$.

  We only list some of the cases.
  \begin{description}
    \item[$\var^{\ne}\ a : \Ne\ (\var\ A\ a)$]
          \begin{alignat*}{1}
            & S_{\alpha}^{\Tm}\ (\var\ A\ a) \\
            & \quad { }= {(S_{\alpha}^{\Ty}\ A)}_{\ne}\ (\var^{\ne}\ a)
            \tag*{(computation rule of $S_{\alpha}$)}
          \end{alignat*}
    \item[$\app^{\ne}\ f_{\ne}\ a_{\nf} : \Ne\ (\app \circleddollar A \circledast B \circledast f \circledast a)$]
          \begin{alignat*}{1}
            & S_{\alpha}^{\Tm}\ (\app \circleddollar f \circledast a) \\
            & \quad { }= \app^{\bullet}\ (S_{\alpha}^{\Tm}\ f)\ (S_{\alpha}^{\Tm}\ a)
            \tag*{(computation rule of $S_{\alpha}$)} \\
            & \quad { }= (S_{\alpha}^{\Tm}\ f)\ (S_{\alpha}^{\Tm}\ a)
            \tag*{(definition of $\app^{\bullet}$)} \\
            & \quad { }= ({(S_{\alpha}^{\Ty}\ (\Pi\ A\ B))}_{\ne}\ f_{\ne})\ (S_{\alpha}^{\Tm}\ a)
            \tag*{(induction hypothesis for $f_{\ne}$)} \\
            & \quad { }= {(\Pi^{\bullet}\ (S_{\alpha}^{\Ty}\ A)\ (S_{\alpha}^{[\Tm]\Ty}\ B))}_{\ne}\ f_{\ne}\ (S_{\alpha}^{\Tm}\ a)
            \tag*{(computation rule of $S_{\alpha}$)} \\
            & \quad { }= {((S_{\alpha}^{[\Tm]\Ty}\ B)\ (S_{\alpha}^{\Tm}\ a))}_{\ne}\ {(\app^{\ne}\ f_{\ne}\ ({(S_{\alpha}^{\Ty}\ A)}_{\nf}\ (S_{\alpha}^{\Tm}\ a)))}
            \tag*{(definition of $\Pi^{\bullet}$)} \\
            & \quad { }= {(S_{\alpha}^{\Ty}\ (B \circledast a))}_{\ne}\ (\app^{\ne}\ f_{\ne}\ a_{\nf})
            \tag*{(induction hypothesis for $a_{\nf}$)}
          \end{alignat*}
    \item[$\lam^{\nf}\ b_{\nf} : \Nf\ (\lam \circleddollar \{A\} \circledast \{B\} \circledast b)$]
          \begin{alignat*}{1}
            & {(S_{\alpha}^{\Ty}\ (\Pi\ A\ B))}_{\nf}\ (S_{\alpha}^{\Tm}\ (\lam \circleddollar b)) \\
            & \quad { }= {(\Pi^{\bullet}\ (S_{\alpha}^{\Ty}\ A)\ (S_{\alpha}^{[\Tm]\Ty}\ B))}_{\nf}\ (\lam^{\bullet}\ (S_{\alpha}^{[\Tm]\Tm}\ b))
            \tag*{(computation rule of $S_{\alpha}$)} \\
            & \quad { }= {(\Pi^{\bullet}\ (S_{\alpha}^{\Ty}\ A)\ (S_{\alpha}^{[\Tm]\Ty}\ B))}_{\nf}\ (\lambda\ a^{\bullet} \mapsto (S_{\alpha}^{[\Tm]\Tm}\ b)\ a^{\bullet})
            \tag*{(definition of $\lam^{\bullet}$)} \\
            & \quad { }= \lam^{\nf}\ (\lambda\ a_{\var} \mapsto \mathsf{let}\ a^{\bullet} = {(S_{\alpha}^{\Ty}\ A)}_{\ne}\ (\var^{\ne}\ a_{\var})\ \mathsf{in}\ {(S_{\alpha}^{[\Tm]\Ty}\ B\ a^{\bullet})}_{\nf}\ (S_{\alpha}^{[\Tm]\Tm}\ b\ a^{\bullet}))
            \tag*{(definition of $\Pi^{\bullet}$)} \\
            & \quad { }= \lam^{\nf}\ (\lambda\ a_{\var} \mapsto \mathsf{let}\ a^{\bullet} = S_{\alpha}^{\Tm}\ (\var\ a_{\var})\ \mathsf{in}\ {(S_{\alpha}^{[\Tm]\Ty}\ B\ a^{\bullet})}_{\nf}\ (S_{\alpha}^{[\Tm]\Tm}\ b\ a^{\bullet}))
            \tag*{(computation rule of $S_{\alpha}$)} \\
            & \quad { }= \lam^{\nf}\ (\lambda\ a_{\var} \mapsto {(S_{\alpha}^{\Ty}\ (B\ \circledast \var\ a_{\var}))}_{\nf}\ (S_{\alpha}^{\Tm}\ (b\ \circledast \var\ a_{\var})))
            \tag*{(computation rule of $S_{\alpha}$)} \\
            & \quad { }= \lam^{\nf}\ (\lambda\ a_{\var} \mapsto b_{\nf}\ a_{\var})
            \tag*{(induction hypothesis for $b_{\nf}$)} \\
            & \quad { }= \lam^{\nf}\ b_{\nf}
          \end{alignat*}
    \item[$\Pi^{\nfty}\ A_{\nfty}\ B_{\nfty} : \NfTy\ (\Pi \circleddollar A \circledast B)$]
          \begin{alignat*}{1}
            & {(S_{\alpha}^{\Ty}\ (\Pi \circleddollar A \circledast B))}_{\nfty} \\
            & \quad { }= {(\Pi^{\bullet}\ (S_{\alpha}^{\Ty}\ A)\ (S_{\alpha}^{[\Tm]\Ty}\ B))}_{\nfty}
            \tag*{(computation rule of $S_{\alpha}$)} \\
            & \quad { }= \Pi^{\nfty}\ {(S_{\alpha}^{\Ty}\ A)}_{\nfty}\ (\lambda a_{\var} \mapsto \mathsf{let}\ a^{\bullet} = {(S_{\alpha}^{\Ty}\ A)}_{\ne}\ (\var^{\ne}\ a_{\var})\ \mathsf{in}\ {(S_{\alpha}^{[\Tm]\Ty}\ B\ a^{\bullet})}_{\nfty})
            \tag*{(definition of $\Pi^{\bullet}$)} \\
            & \quad { }= \Pi^{\nfty}\ {(S_{\alpha}^{\Ty}\ A)}_{\nfty}\ (\lambda a_{\var} \mapsto \mathsf{let}\ a^{\bullet} = s^{\Tm}_{\alpha}\ (\var\ a_{\var})\ \mathsf{in}\ {(S_{\alpha}^{[\Tm]\Ty}\ B\ a^{\bullet})}_{\nfty})
            \tag*{(computation rule of $S_{\alpha}$)} \\
            & \quad { }= \Pi^{\nfty}\ {(S_{\alpha}^{\Ty}\ A)}_{\nfty}\ (\lambda a_{\var} \mapsto {(S_{\alpha}^{\Ty}\ (B\ \circledast \var\ a_{\var}))}_{\nfty})
            \tag*{(computation rule of $S_{\alpha}$)} \\
            & \quad { }= \Pi^{\nfty}\ {(S_{\alpha}^{\Ty}\ A)}_{\nfty}\ (\lambda a_{\var} \mapsto B_{\nfty}\ a_{\var})
            \tag*{(induction hypothesis for $B_{\nfty}$)} \\
            & \quad { }= \Pi^{\nfty}\ A_{\nfty}\ B_{\nfty}
            \tag*{(induction hypothesis for $A_{\nfty}$)} \\
          \end{alignat*}
  \end{description}
\end{proof}


\end{document}